\newcommand{\buysell}{}
\tikzset{
    >=stealth',
    punkt/.style={
           rectangle,
           rounded corners,
           draw=black, very thick,
           text width=6.5em,
           minimum height=2em,
           text centered},
    pil/.style={
           ->,
           thick,
           shorten <=2pt,
        shorten >=2pt,
   }
}
\newcommand{\R}{\mathbb{R}}
\newcommand{\Px}{\mathbb{P}}
\newcommand{\Exx}{\mathbb{E}}
\newcommand{\E}{\Exx}
\newcommand{\Qxx}{\mathbb{Q}}
\newcommand{\Q}{\Qxx}
\newcommand{\Qx}{\Qxx}
\newcommand{\Fx}{\mathbb{F}}
\newcommand{\VaR}{{VaR}}
\def\ind{{\mathchoice{1\mskip-4mu\mathrm l}{1\mskip-4mu\mathrm l}
{1\mskip-4.5mu\mathrm l}{1\mskip-5mu\mathrm l}}}
\newcommand{\abs}[1]{ \left \vert #1 \right \vert}
\newcommand{\define}{:=}
\newcommand{\rcp}{r_m^+}
\newcommand{\rcm}{r_m^-}
\newcommand{\rfp}{r_f^+}
\newcommand{\rfm}{r_f^-}
\newcommand{\XVA}{\mbox{XVA}}
\newcommand{\rXVA}{\mbox{rXVA}}
\newcommand{\circu}{\overset{\circ}{u}}
\newcommand{\circuo}{\overset{\bullet}{u}}
\DeclareMathOperator*{\essup}{ess\, sup}
\newlength{\dhatheight}
\newtheorem{theorem}{Theorem}[section]
\newtheorem{definition}[theorem]{Definition}
\newtheorem{corollary}[theorem]{Corollary}
\newtheorem{proposition}[theorem]{Proposition}
\newtheorem{remark}[theorem]{Remark}
\newtheorem{example}[theorem]{Example}
\newtheorem{assumption}[theorem]{Assumption}
\long\def\symbolfootnote[#1]#2{\begingroup\def\thefootnote{\fnsymbol{footnote}}\footnote[#1]{#2}\endgroup}
\title{Robust XVA\footnote{Data sharing is not applicable to this article as no new data were created or analyzed in this study.}}
\author{
Maxim Bichuch \thanks{Email: mbichuch@jhu.edu, Department of Applied Mathematics and Statistics, Johns Hopkins University. {Work partially supported by NSF (DMS-1736414), and by the Acheson J. Duncan Fund for the Advancement of Research in Statistics.} }
\and
Agostino Capponi \thanks{Corresponding Author. Email: ac3827@columbia.edu, Industrial Engineering and Operations Research Department, Columbia University.}
\and
Stephan Sturm \thanks{Email: ssturm@wpi.edu, Department of Mathematical Sciences, Worcester Polytechnic Institute.}
}
\begin{document}

\maketitle

\begin{abstract}
We introduce an arbitrage-free framework for robust valuation adjustments. An investor trades a credit default swap portfolio with a risky counterparty, and hedges credit risk by taking a position in defaultable bonds. The investor does not know the exact return rate of her counterparty's bond, but she knows it lies within an uncertainty interval. We derive both upper and lower bounds for the XVA
process of the portfolio, and show that these bounds may be recovered as solutions of nonlinear ordinary differential equations. The presence of
collateralization and closeout payoffs leads to important differences with respect to classical credit risk valuation. The value of the super-replicating
portfolio cannot be directly obtained by plugging one of the extremes of the uncertainty interval in the valuation equation, but rather depends on the
relation between the XVA replicating portfolio and the close-out value throughout the life of the transaction.
Our comparative statics analysis indicates that credit contagion has a nonlinear effect on the replication strategies and on the XVA.

\end{abstract}

\vspace{5mm}

\begin{flushleft}
\textbf{Keywords:} robust XVA, counterparty credit risk, backward stochastic differential equation, arbitrage-free valuation. \\
\textbf{Mathematics Subject Classification (2010): } {91G40, 91G20, 60H10}\\
\textbf{JEL classification: }{G13, C32}
\end{flushleft}

\section{Introduction}
Dealers need to account for market inefficiencies related to funding and credit valuation adjustments when marking their swap books. Those include the capital needed to support the trading position,
the losses originating in case of a premature default by either of the trading parties, and the remuneration of funding and collateral accounts. It is common market practice to refer to these costs as the XVA of the trade. Starting from 2011, major dealer banks have started to mark these valuation adjustments on their balance sheets; see, for instance, \cite{Cameron} and \cite{Beker}.

A large body of literature has studied the implication of such costs on the valuation and hedging of derivatives positions. One stream of the literature has analyzed XVA from a replication perspective. These works include \cite{Crepeya} and \cite{Crepeyb}, who use backward stochastic differential equations to replicate the transaction cash flows, accounting for funding constraints. 
	\cite{BrigoPalCCP} postulate the existence of a risk-neutral pricing measure, and obtain a valuation equation which accounts for counterparty credit risk, funding, and collateral servicing costs. \cite{br} construct a semimartingale framework and provide a backward stochastic differential equation
	(BSDE) for the wealth process that replicates a default-free claim, assuming the trading parties to be default-free. Building on \cite{br},
	\cite{NiRut} study pricing of contracts both from the perspective of the investor and her counterparty, and provide the range of fair bilateral prices.
	The default risk of the trading parties involved in the transaction is accounted for by \cite{BurgardCR}, who derive the partial differential equation representations for the derivative value using replication arguments. \cite{DuffieAndersen} develop a model, that is consistent with asset pricing theories, and importantly account for the impact of funding strategies on the market valuation of the claim. We refer to \cite{CrepeyBieleckiBrigo} for an overview of the literature on valuation adjustments.
	
	A second stream of literature has modeled XVA from a corporate finance perspective. Noticeable contributions in this direction include \cite{AlbaAndersen} who focus primarily on funding valuation adjustments, and \cite{AlbaCrepey} who also consider capital valuation adjustments. These papers view the bank as financed through debt and equity by bondholders and shareholders. Shareholders control the bank and make investment decisions before the bank defaults, while bondholders represent the senior creditors of the bank. Shareholders are wiped out when the bank defaults, while creditors have no decision power until the time of default, but are protected by laws forbidding certain trades that would trigger wealth transfers from them to shareholders.
	In these models, it is assumed that the actual claim is completely hedged by a back-to-back hedge and XVA only arises because of market incompleteness, which originates from two sources. First, counterparty risk cannot be perfectly replicated. Second, the manager of a bank cannot hedge wealth transfers from shareholders to creditors, and shareholders cannot acquire all debt of a bank. Despite the merits of this approach, in particular not having to rely on replication arguments for the value adjustments, it makes two critical assumptions. First, it assumes that the counterparty-free payoffs of the contract are perfectly replicated, rather than designing the replication strategy from first principles (and ignoring potential interaction of risk factors). Second, and most importantly, they assume that the historical and risk-neutral probability measure coincide. This, of course, exposes the calculation of the valuation adjustments to a substantial amount of model risk, which can be accounted for by the techniques proposed in this paper.

We consider a market environment, in which an investor transacts credit default swaps with a counterparty and wants to compute the XVA of her trading position. The trading inefficiencies contributing to the XVA include funding costs due to the difference between treasury borrowing and lending rates, losses originating from premature default of the investor or her counterparty, and costs of posting initial and variation margin collateral. Existing literature on valuation adjustments for credit default swaps has focused on credit and debit valuation adjustments using reduced form models (e.g. \cite{BrigoCapPal}), structural credit models (e.g. \cite{Lipton}), and Markov models based on dynamic copulas to account for wrong-way risk (\cite{BielVal}). These works neither account for the additional costs of funding, nor for model uncertainty.


The distinguishing feature of our framework, relative to the literature surveyed above, is that the investor is uncertain about the rate of return of the counterparty
bond used to hedge counterparty credit risk, {and we compute a robust pricing for the underlying credit default swap portfolio.\footnote{ Several studies have investigated the determinants of {bond returns, including default risk and market liquidity. \cite{Acharya} bucket the bonds into rating classes, ranging from AAA through CCC. They show that the economic contribution of interest rate and default risks to bond returns is larger than the contribution of liquidity under both stressed and normal market regimes.
}} Recent work by \cite{Schmidt} develops a framework that incorporates model uncertainty into defaultable term structure models. {They assume lower and upper bounds for the default intensity and construct uncertainty intervals for the defaultable bond prices, ignoring valuation adjustments.} Our theory parallels that for uncertain volatility introduced by \cite{ALP}. Therein, the authors consider a Black-Scholes type model, in which the volatility of the underlying asset is unknown and only a priori deterministic bounds for its value are prescribed. They derive the {Black-Scholes-Barenblatt} equation characterizing the value of European options in this model; see also \cite{Lyons} for the case of one-dimensional barrier options. \cite{Fouque} generalize the analysis to the case that the volatility fluctuates between two stochastic bounds, arguing that this better captures the behavior of options with longer maturity. Other related works include \cite{Hobson}, \cite{EKJS}, and \cite{DenisMartini} who provide a probabilistic description using the theory of capacities. 

We focus on the impact that uncertainty on the return of the counterparty account has on the valuation of the trade, and compute upper and lower bounds for the XVA. There are both similarities and differences between our setup and the uncertain volatility setup of \cite{ALP}. On the one hand, the differential equations yielding the robust XVA
 are ordinary and of first order, as opposed in uncertain volatility where the price bounds are obtained by solving second-order partial differential equations. Additional simplifications arise in our framework because we do not need to deal with the singularity of probability measures. On the other hand, new technical challenges appear due to the complex relationship between the valuation of the replicating portfolio, the determination of collateral levels, and the close-out requirements of the valuation party.

In our framework, the investor uses {her defaultable bond, the bond of her counterparty, and the bond of reference entities in the CDS portfolio}  to replicate
the XVA process associated with the credit default swaps portfolio.\footnote{{The replication approach to XVA has also been adopted in other studies, including} \cite{CrepeyBieleckiBrigo} and \cite{BurgardCR,BurgardRrisk}.}
{We conduct the analysis in the paper using defaultable (money market) accounts as opposed to bonds, because the value processes of those
accounts are continuous up to the firm's default. In contrast, the bond price of a firm may jump if another firm in the portfolio defaults, introducing
nontrivial technical challenges. We show in Remark~\ref{bond-acc} that there exists a one-to-one relation between defaultable bonds and defaultable accounts,
hence our choice of working with defaultable accounts comes without any loss of generality.

We derive the nonlinear valuation equation that takes into account counterparty credit risk and closeout payoffs exchanged at default. Our valuation equation is a BSDE driven by L\'{e}vy processes, which contains jump-to-default but no diffusion terms. {We sketch in Section \ref{sec:jumpdiffrisk}  how the valuation equation would change when the default intensity processes of reference entities in the underlying swap portfolio are additionally driven by diffusion processes.} We characterize the super-replicating price of the transaction as the solution to a nonlinear {system of} ODEs, obtained from projecting the nonlinear BSDE tracking the XVA process onto the smaller filtration exclusive of investor and counterparty credit events information. 
The system consists of an ODE, whose solution is the value of the transaction cash flows ignoring market inefficiencies, and additional ODEs that yield the XVA of the portfolio.
 Intuitively, the super-replicating price is the value attributed to the trade by an investor who positions herself in the worst possible economic scenario.

 We find that the super-replicating price and the corresponding super-replicating strategies may not be recovered by simply plugging one of the extremes of the uncertainty interval into the valuation equation. 
Our analysis indicates that, whether to use the lower or upper extreme of the uncertainty interval in the super-replication strategy, depends on the relation between the current value of the XVA replicating portfolio and the close-out value of the transaction. The trader wants to be robust against the most negative outcome, and therefore will choose the extreme of the interval that minimizes the instantaneous change in the value of the position. This will in turn require the investor to initially hold the maximal wealth to implement this replicating strategy, hence leading to the maximum initial value of the portfolio. As long as the portfolio replicates the trade at the terminal time, its initial value provides an upper bound on the value of the XVA.
For example, if the strategy replicating the XVA requires, at a given time, the investor to be short the counterparty's defaultable account, i.e., a positive jump would arise at the counterparty default (this would be the case if the value of the XVA replicating portfolio lies below the close-out value), then the trader would choose to use the upper extreme of the uncertainty interval, because this corresponds to the maximal default intensity and thus minimizes the instantaneous change in value. As the required replicating position may switch from short to long and vice versa several times before the close-out time, the extreme of the default interval used in the valuation of the superreplicating strategy will change, too. 

We perform a comparative statics analysis to quantify the dependence of the XVA and its replication strategy on portfolio credit risk and default contagion. We use a model of direct contagion, in which default intensities of reference entities, investor, and counterparty are piecewise constant and only jump when one of these firms default. {Our analysis finds that if the default intensity of the investor's counterparty increases, either due to idiosyncratic motives or to contagion effects triggered by the default a reference entity in the portfolio, the XVA decreases in absolute value.} This is because, under these circumstances, the investor needs to replicate the underlying portfolio transaction for a smaller period of time, and thus incurs smaller financing costs. As direct contagion increases, defaults tend to cluster and amplify the impact of portfolio credit risk on the default intensity of investor and her counterparty. The financing costs of the replication strategy get lower, and a payer CDS investor needs to use a larger number of shares of her defaultable account to replicate the jump to closeout at her default time, compared with the number of shares of her counterparty defaultable account needed to replicate the jump to closeout at her counterparty's default time. This is because if the investor (who is replicating her long portfolio position) defaults, then she needs to replicate a larger jump to closeout if the portfolio credit risk is higher and thus her moneyness increases. 

}

The rest of the paper is organized as follows. We develop the market model in Section \ref{sec:model}. We introduce the valuation measure, collateral process and close-out valuation in Section \ref{sec:claim}. We introduce the replicating wealth process
and the notion of arbitrage in Section~\ref{sec:wealth}. We develop a robust analysis of the XVA process in Section~\ref{sec:OneEntity}, and discuss how the valuation equations generalize to an underlying CDS portfolio that also presents diffusion risk. Section~\ref{sec:numeranalysis} presents a numerical analysis of XVA and its replication strategies on a multi-name portfolio. Section \ref{sec:conclusions} concludes.

\section{Model}\label{sec:model}

Our framework builds on that proposed by \cite{BCS} in that it uses a reduced form model of defaults and maintains the distinction between universal and investor specific instruments. The model economy consists of $N$ firms, indexed by $i=1,\ldots,N$, whose default events constitute the sources of risks in the portfolio. We use $I$ and $C$ to denote, respectively, the trader (also referred to as investor throughout the paper) executing the transaction and her counterparty. Let $\left( \Omega, \mathcal F, \Px\right)$ be a probability space rich enough to support the following constructions. We assume the existence of $N+2$ independent and identically distributed unit mean exponential random variables $\mathcal E^i$, $i = 1, \ldots, N, I,C$. The default time of each firm $i$ is defined to be the first time its cumulative intensity process exceeds the corresponding exponentially distributed random variable, i.e., $\tau^i = \sup\bigl\{ t\ge0 \colon \int_0^t h^{i,\Px}_sds > \mathcal E^i\bigr\}$. Accordingly, we use the default indicator process $H^i_t= \ind_{\{\tau^i\leq t\}}$, $t\geq0$, to track the occurrence of firm $i$'s default. The background filtration $\mathbb{F} := \bigl(\mathcal{F}_t\bigr)_{t \geq 0}$, where $\mathcal F_t := \sigma\bigl( H^j_u; u \leq t \colon j\in\{1, .., N\} \bigr)$, contains information about the risk of the portfolio, i.e., of the default of the $N$ firms referencing the traded securities, but not about the defaults of the investor $I$ and her counterparty $C$. {The default intensity processes $\bigl( h^{i,\Px}_t \bigr)_{t\geq 0}$, $i\in\{1, ..., N,I,C\}$, are constructed so that they are adapted to the background filtration $\mathbb{F}$, i.e., the default intensity at a given time $t$ depends on the firms' defaults occurring before time $t$. We report the details of this construction in Appendix~\ref{sec:defintconstr}.

We denote the filtration containing information about the investor and counterparty defaults by $\mathbb{H} = \bigl(\mathcal{H}_t\bigr)_{t \geq 0}$, where $\mathcal H_t = \sigma\bigl( H^j_u; u \leq t \colon j\in\{I, C\} \bigr)$. By construction, the default intensities $h^{i,\Px},~i\in\{1, .., N, I, C\}$, are piecewise  deterministic functions of time (we thus work in the framework of piecewise-deterministic Markov processes, see \cite{Davis}). We furthermore require that they are piecewise continuous and uniformly bounded. The enlarged filtration, including both portfolio risk (default events of the $N$ firms referencing portfolio securities) and counterparty risk (default events of investor and her counterparty), is denoted by $\mathbb{G} = \bigl(\mathcal{G}_t)_{t \geq 0} = \bigl(\mathcal{F}_t \vee \mathcal{H}_t\bigr)_{t \geq 0}$. We will consider the augmented filtrations, i.e., the smallest complete and right-continuous filtrations encompassing the natural filtrations, and denote them by $\mathbb{F}$, $\mathbb{H}$, $\mathbb{G}$ (with a slight abuse of notation). For future purposes, we define the martingale compensator processes $\varpi^{i,\Px}$ of $H^i$ as
\[
\varpi^{i,\Px}_t := H^i_t - \int_0^t \bigl(1-H^i_s\bigr)h^{i,\Px}_s \, ds,~i \in \{1,\ldots,N,I,C\}.
\]
By construction, these compensator processes are $\mathbb{F}$-martingales for $i \in \{1,\ldots,N\}$, and $\mathbb{G}$-martingales for $i \in \{1,\ldots,N,I,C\}$.

The defaultable account rates of all firms in the portfolio and of the investor are known to all market participants. The trader, however, only has limited information about the actual rate of the counterparty defaultable account, and in particular only knows its upper and lower bound.


\paragraph{Replicating instruments}
The goal of the investor is to replicate a portfolio of credit default swaps (CDS) written on $N$ different reference entities, denoted by $1,2,\ldots,N$. All CDSs are assumed to mature at the same time $T$. The credit risk exposure associated with this portfolio is replicated using both \textit{universal} and \textit{investor specific} instruments. The universal instruments are available to all market participants, while the investor specific instruments are accessible solely to the investor and not to other market participants. The universal instruments include (defaultable) bonds underwritten by the reference entities in the credit default swaps portfolio as well as by the trader and her counterparty. {As opposed to modeling defaultable bonds directly, we model the defaultable accounts associated with investor, counterparty, and reference entities in the CDS portfolio. {These securities are typically employed as numeraires until they default, and are used to define the survival measures. For instance, \cite{CollinDufresne} study a survival spot measure where the numeraire is a defaultable account. 
We provide more details on the relationship between defaultable bonds and defaultable accounts in Remark \ref{bond-acc}.}

Under the physical measure $\Px$, for $i\in \{1,\ldots,N, I, C\}$, and $0 \leq t \leq T$, the dynamics of the defaultable {account} processes with zero recovery at default are given by
\begin{equation}\label{eq:priceproc}
dB^i_t = \mu_t^i B^i_t \, dt - B^i_{t-} \,dH^i_t, \qquad B^i_0 = 1,
\end{equation}
where $\bigl(\mu^i_t\bigr)$, $i \in \{1,\ldots,N,I,C\}$, are $\Fx$-adapted {and thus piecewise deterministic processes, potentially jumping at discrete times corresponding to default events. 
	We assume that the rates $\mu^i$, $i \in \{1,\ldots,N,I\}$ are observable while the investor has no further information about $\mu^C$ except for that it is constrained to lie in the interval $[\underline \mu^C, \overline \mu^C]$.}

The \textit{investor specific} instruments include her funding and collateral accounts. We assume that the investor lends and borrows from her treasury desk at, possibly different, rates  $\rfp$ (the lending rate) and $\rfm$ (the borrowing rate). Denote by $B^{r_f^{\pm}}$ the cash accounts corresponding to these funding rates.
An investment strategy of $\xi^f:=(\xi^f_s; \;  s\geq0)$ shares in the funding account yields an account value $B^{r_f} := (B^{r_f}_s; \; s\geq 0)$ given by
  \begin{equation}\label{eq:Brf}
    B^{r_f}_t  := B^{r_f}_t \bigl(\xi^f) = e^{\int_0^t r_f(\xi^f_s) ds},
  \end{equation}
where
  \begin{equation}\label{eq:rrf}
    r_f  := r_f(y)= \rfm \ind_{\{{y < 0}\}}+\rfp \ind_{\{{y > 0}\}}.
  \end{equation}

\paragraph{Collateral}

The trader and the counterparty use a collateral account to mitigate counterparty risk. Following the standards set by the Basel  Committee on Banking
Supervision  (BCBS)  and  the  International  Organization  of  Securities  Commissions  (IOSCO) (see \cite{BISMargin}), the collateral consists of variation margins, tracking the changes in market value of the traded portfolio and denoted by $VM$, and initial margins that are used to mitigate the gap risk at the close-out of the transaction and denoted by $IM$.\footnote{Notice that initial margins are updated on a regular basis (not just posted once at the inception of the trade as the name might suggest), as it is the case for variation margins. Variation margins are usually updated at a higher frequency (intraday or at most daily) than initial margins, which are resettled daily or even at lower frequency.} The European Market Infrastructure Regulation (EMIR) posits at least daily updates for variation margins and requires a revaluation of initial margins at least every ten days (see \cite{EMIR}). In the United States, the Commodity Futures Trading Commission requires daily updates on initial margins (\cite{CFTC}). Mathematically, the collateral process $M:={(M_t; \; t\geq 0)}$, $M = VM + IM$, is an $\mathbb{F}$ adapted stochastic process which we assume to be positive if the investor posts collateral (is \textit{collateral provider}) and negative if she receives collateral (is the \textit{collateral taker}).

Denote by $\rcp$ the interest rate on collateral demanded by the investor when she posts to her counterparty, and by $\rcm$ the rate on collateral demanded by the counterparty when the investor is the collateral taker.
The value of the collateral account at time $t$ is then given by
\[
    B^{r_m}_t = e^{\int_0^t r_m(M_s) ds},
\]
where
\[
    r_m := r_m(x) = \rcm \ind_{\{x<0\}} + \rcp \ind_{\{x>0\}}.
\]

Denoting by $\psi_t^m$ the number of shares of collateral account $B^{r_m}_t$ held by the trader at time $t$, we have the following relation
\begin{equation}\label{eq:collrel}
    \psi^{m}_t B^{r_m}_t  = - M_t.
\end{equation}
The collateral amount $M_t$ received or posted at time $t$ will be determined by a valuation party, as discussed in the next section. Figure \ref{fig:transflow} describes the mechanics of the entire flow of transactions.

\begin{figure}[ht]
    \centering
    \begin{tikzpicture}[thick,scale=0.9, every node/.style={transform shape}]
        \node[punkt, inner sep=10pt] (trader) {Trader};
        \node[punkt, inner sep=10pt,  left=2cm of trader] (funder) {Treasury Desk}
            edge[pil, bend left=35, blue, dotted] (trader)
            edge[pil, <-, bend right=35, blue, dotted] (trader)
            edge[pil, <-, bend left=10, blue, dashed] (trader)
            edge[pil, bend right=10, blue, dashed] (trader);
        \node[above left =1cm of trader] (rfp) {$\rfp$};
        \node[below left =1cm of trader] (rfm) {$\rfm$};
        \node[left =0.5cm of trader] (fundingcash) {Cash};
        \node[punkt, inner sep=10pt,  right=5.5cm of trader] (bond) {Defaultable accounts}
            edge[pil, bend left=30, black, solid] (trader)
            edge[pil, <-, bend right=30, black, solid] (trader);
        \node[right =0.3cm of trader] (fundingcash) {Defaultable accounts $B_i$};
        \node[punkt, inner sep=10pt,  below=4cm of trader] (counter) {Counterparty}
            edge[pil, <-, bend right=50, blue, dotted] (trader)
            edge[pil, bend left=50, blue, dotted] (trader)
            edge[pil, <-, bend left=35, blue, dashed] (trader)
            edge[pil, bend right=35, blue, dashed] (trader);
        \node[below =1.5cm of trader] (vmv) {Variation};
        \node[below =1.9cm of trader] (vmi) {Margin};
                \node[below = 3.5cm of trader.west] (rcp) {$\rcp$};
        \node[below =3.5cm of trader.east] (rcm) {$\rcm$};
        \node[punkt, inner sep=10pt, below right=4cm of trader] (seg) {Segregated Account}
         edge[pil, ->, bend right=4, blue, dotted] (trader)
         edge[pil, <-, bend left=4, blue, dashed] (trader);
        \node[above right =1.6cm of counter] (imi) {Initial};
        \node[below =0.0cm of imi] (imm) {Margin};
        \node[above right =0.2cm of imi] (rcp2) {$\rcp$};
    \end{tikzpicture}
    \caption{Trading: Solid lines are purchases/sales, dashed lines borrowing/lending, dotted lines interest due; blue lines are cash, and black lines are defaultable account purchases for cash. Note the difference between the two-sided variation margin and the initial margin that is kept in a segregated account.}
    \label{fig:transflow}
\end{figure}
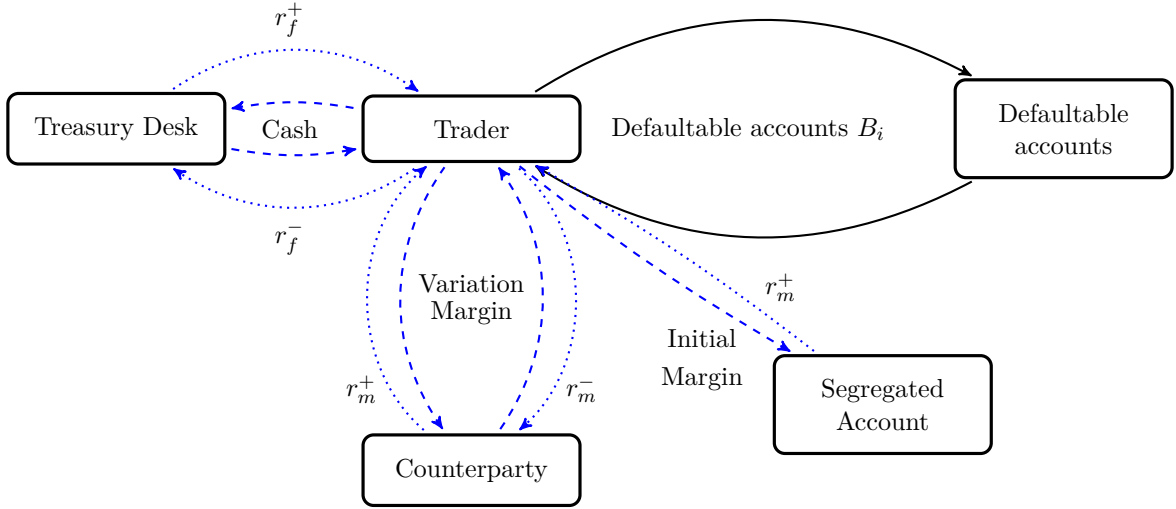

\section{{Valuation Measure}, Collateralization and Close-out} \label{sec:claim}

{We take the perspective of a trader who sells a portfolio of CDSs, and determines its value by constructing a replicating portfolio. Such a portfolio accounts for spread payments throughout the life of the contracts, and when any reference entity defaults the value of the corresponding CDS contract equals its loss rate. The wealth process associated with the portfolio uses defaultable accounts of the underlying reference entities to replicate the market risk of the transaction, and defaultable accounts of the trading parties to replicate the counterparty risk of investor and of her counterparty. 
	Because the trader does not know the exact default intensity of her counterparty, such a replication argument can only provide price bounds. In particular, the upper bound provides a reliable benchmark to measure the potential losses incurred by the trader if the portfolio is sold at a price lower than the upper bound.}

\begin{remark}
The trader aims to compute the difference between the price at which the transaction is settled, and the market value of the transaction, so that she can identify the underlying risk factors and allocate them to different desks within the bank. This difference is referred to as XVA.

It is important to introduce a finer distinction between the different sources of surcharges and unreplicable risk (referred to, e.g., as  CVA, FVA, KVA) to correctly allocate them to the managing desks. Hence, when calculating the exit price, i.e., the price at which the portfolio can be liquidated on the open market (this is relevant for tax and regulatory purposes), one needs to account for these components at a higher level of granularity. One of these components is the KVA, defined as the financing cost for the capital at risk, set aside by shareholders of the investor's firm. KVA should be calculated under the historical measure, which is typically assumed to be the same as the risk neutral measure to preserve analytical tractability. {Such an approach is followed, for example, by \cite{Albanese} who define the KVA as the solution to a BSDE under the risk neutral measure. \cite{Green} derive the KVA using an extension of the semi-replication approach in \cite{BurgardCR} by grouping together all capital dependent terms in the Feynman-Kac representation of their pricing PDE.} Our analysis deals with entrance prices, i.e., prior to decomposing the trade into risk sources and splitting it to the various desks. Nevertheless, we compute the (super)replicating price of the transaction, which is robust against the specific choices of physical and pricing measure {because the assumed bounds for the  {account} rates are constant, and thus independent of the choice of the measure. From a different perspective, our methodology can be seen as providing general bounds for XVA, that in turn yield bounds for its individual components such as KVA.} {These bounds can be seen in analogy to the bounds for equity options: The price of market incompleteness originated from stochastic volatility can be bounded after specifying bounds on the volatility process. In the present model, the financing costs for the capital at risk, incurred for the impossibility of fully hedging counterparty risk, are not explicitly accounted for. However, after deducting credit, debit and funding valuation adjustments, the residual costs accounts for KVA. The bounds on the counterparty bond rate (respectively, the risk neutral default intensity) would then imply bounds for KVA, and the lower and upper bounds would coincide if the counterparty's default intensity is known with certainty.}
	



\end{remark}

Next, we discuss \textit{public} and \textit{private} valuations. Private valuations are based on discount rates, which depend on investor specific characteristics, while public valuations depend on publicly available discount factors.
Specifically, public valuations are needed for the determination of collateral requirements and the close-out value of the transaction. They are determined by a valuation agent who might be either one of the parties involved in the transaction or a third party, in accordance with market practices reviewed by the International Swaps and Derivatives Association (ISDA). The valuation agent determines the closeout value of the transaction by calculating the so-called clean price of the derivative, using the discount rate $r_D$ and the {account rates of the firms in the portfolio, $\mu^i$, $i \in \{1, \ldots N\}$} (we recall that the latter are known to the valuation agent). Throughout the paper, we will use the superscript $\wedge$ when referring specifically to public valuations.

The replicating process will stop before maturity if the trader or her counterparty were to default prematurely. We thus define the terminal time of the trade (i.e., the earliest between the default time of either party or the maturity $T$ of the transaction) as $\tau := \tau^I \wedge \tau^C \wedge T$. The valuation done by the agent is mathematically represented as pricing the trade under the valuation measure $\Qxx$ associated with the publicly available discount rate $r_D$ chosen by the agent. The measure  $\Qxx$ is equivalent to $\Px$ and their relation is specified by the Radon-Nikod\'{y}m density
\begin{equation}\label{eq:RN}
    \frac{d\Qxx}{d\Px} \bigg|_{\mathcal{F}_{{\tau \wedge( \tau^1 \vee ...\vee \tau^N)     }}} = \prod_{i\in \{1, \ldots, N, I, C\}} \Biggl(\frac{(\mu^{i} - r_D)(\tau \wedge \tau^{i})}{\int_0^{\tau \wedge \tau^{i}} h^{i,\Px}_sds }\Biggr)^{H^i_{\tau \wedge \tau^i}} e^{\int_0^{\tau \wedge \tau^{i}}(r_D-\mu^{i}+h^{i,\Px}_u)du}.
\end{equation}

\begin{remark}
	As the valuation measure is used to determine the clean price of the transaction, it needs not depend on the default intensities of the investor $I$ and her counterparty $C$. Nevertheless, we have included both of these default intensities in the definition of $\Qxx$ because this will simplify the exposition in later sections of the paper. In particular, we do not need to introduce a different measure for the investor's valuation.
\end{remark}
%

The $\Qxx$-dynamics of the defaultable accounts follow from Girsanov's theorem and are given by
  \begin{equation}
    dB^i_t  = r_D B^i_t \, dt - B^i_{t-} d\varpi^{i,\Qxx}_t,
    \label{Bieq}
  \end{equation}
where $\varpi^{i,\Qxx} := (\varpi^{i,\Qxx}_t; \; 0 \leq t \leq \tau \wedge \tau^{(N)})$ are $(\mathbb{F}, \Qxx)$-martingales, and $\tau^{(i)}$ denotes the $i$-th order statistics of the default times, {$i=1, ...,N$}. These martingales can be represented explicitly as
$\varpi^{i,\Qxx}_t = \varpi^{i,\Px}_t + \int_0^t \bigl(1-H^i_u\bigr) (h^{i,\Px}_u - h^{i,\Qxx}_u) du$, where the processes $h^{i,\Qxx} = \mu^i-r_D$, $ i \in \{ 1, \ldots, N, I, C\}$, (and $\mu^i$, $i \in \{1,\ldots,N, I, C\}$, are the rate of returns of the {defaultable accounts} associated with the reference entities, trader and her counterparty), are the firms' default intensities under the valuation measure and assumed to be positive.

\begin{remark}\label{bond-acc}
	We note that under the valuation measure $\Q$, {the actual discounted bond price processes} have to be $\mathcal{G}$-martingales. {Therefore, for $i \in \{1,\ldots,N, I, C\}$ and denoting by $P_i$ the price of the defaultable bond of firm $i$, we have}
	\[
	\frac{P^i_t}{B^i_t} = \E^\Q\Biggl[ \frac{P^i_T}{B^i_T} \, \Bigg\vert \, \mathcal{G}_t \Biggr]
	\]
	and hence, as the bond value at maturity is equal to one, it follows that the actual bond prices are determined from the modeled {defaultable accounts} via
	\[
	P^i_t = B^i_t \E^\Q\Bigl[\frac{1}{B^i_T} \, \Big\vert \, \mathcal{G}_t \Bigr].
	\]
Such a modeling approach has the advantage that the value processes of the defaultable accounts are continuous up to default, while the actual bond prices may
jump down at the time when another bond defaults. To see this,
notice that the terminal condition of a unit notational bond that has not defaulted is always one. If the return rate of a bond changes at the time another bond defaults, then the discount rate used in the valuation of the surviving bond would be different, leading to a change in the bond price.
This is empirically relevant and, in quite a few cases, captures accurately the behavior of corporate bonds (a clear example are the sovereign defaults that caused local corporate defaults in the 1997 Asian financial crisis).
\end{remark}

\subsection{Collateral} \label{sec:repclaim}

The public valuation process of the credit default swap portfolio, as determined by the valuation agent, is given by
\[
	\hat{V}_t = \sum_{i=1}^N z^i \hat{C}^i_t,
\]
where $\hat{C}^i_t$ is the time $t$ value of the credit default swap referencing entity $i$. The variable $z^i$ indicates if the trader sold the $i$-th swap to her counterparty ($z^i = 1$) or purchased it from her counterparty ($z^i = -1$). In the case the swap is purchased, the trader pays the spread times the notional to her counterparty, and receives the loss rate times the notional at the default time of the reference entity, if it occurs before the maturity $T$. This is the so-called  ``clean price'', and does not account for credit risk of the counterparty, collateral or funding costs. Clearly, the public valuation of the portfolio is just the sum of the valuation of the individual CDSs.

{The Basel Committee on Banking Supervision (BCBS) and the International Organization of Securities Commissions (IOSCO) released a second consultative document on margin requirements for non-centrally cleared derivatives in February, 2013; see \cite{BCBS}. This document provides minimum standards for initial margin posting related to non-centrally cleared derivatives. It highlights the importance of separating between the initial margin posted by the counterparty, and the initial margin posted by the investor, so to avoid any netting between these two accounts and protecting each party from gap risk. Our collateralization process is consistent with these market practices.} The variation margin is set to be a fixed ratio of the public valuation of the portfolio, while the initial margin is  designed to mitigate the gap risk and is calculated using value at risk. Such a risk measure is set to cover a number of days of adverse price/credit spread movements for the portfolio
position with a target confidence level.\footnote{Both EU and US authorities require initial margins to cover losses over a liquidation period for ten days in 99\% of all realized scenarios (\cite{EMIR, CFTC}).} Note that there is an important difference between initial and variation margins. Variation margins are always directional and can be rehypotecated (i.e., it flows from the paying party to the receiving party; the latter may use it for investment purposes), whereas initial margins have to be posted by \textit{both} parties and need to be kept in a segregated account, thus they cannot be used for portfolio replication.
Rehypothecation is a very popular practice because it lowers the cost of collateral remuneration (\cite{Singh}) and has been accounted for by existing literature on XVA (e.g. \cite{BrigoPerPal}).
Moreover, we assume that collateral is posted and received in the form of cash, which is practically the most common form of collateral.\footnote{More precisely, cash is the predominant form of collateral used for variation margins, and it accounts for about 80\% of the total posted variation margin amount. Initial margins are usually delivered in the form of government securities (see, for instance, page 7 of \cite{ISDA17}). Overall, the amount of variation margin posted for bilaterally cleared derivatives contracts was about \$ 173 billion in 2017, whereas the variation margin accounted for \$870 billions (see page 1 therein).}

Thus, on the event that neither the trader nor her counterparty have defaulted by time $t$, and the reference entities in the portfolio have not all defaulted, the collateral process is defined as
  \begin{equation}\label{eq:rulecoll}
    M_t : =IM_t +VM_t = \biggl(\beta\Bigl( \VaR_{q} (\hat V_{(t+\delta) \wedge T}-\hat V_t\, \vert \, \tau^{(N)} > t)\Bigr)^{+}+ \alpha \hat{V}_t\biggr)\ind_{\{\tau \wedge \tau^{(N)} > t\}},
  \end{equation}
where for a real number $x$ we are using the notation $x^+ = \max(x,0)$. In the above expression, $0 \leq \alpha \leq 1$ is the collateralization level, $\delta>0$ is the delay in collateral posting, $q$ is the level of risk tolerance and $\beta$ is stress factor. The case $\alpha = 0$ corresponds to zero collateralization, while $\alpha = 1$ means that the transaction is fully collateralized. The positive part of the value at risk quantity captures the fact that initial margins cannot be rehypothecated. Hence, the wealth process associated with the investor's trading strategy does not include received initial margins.

\subsection{Close-out value of transaction}\label{sec:closeout}

We follow the risk-free closeout convention in the case of default by the trader or her counterparty. According to this convention, each party liquidates the position at the market value when the other trading counterparty defaults. Hence, the value of the replicating portfolio will coincide with the third party valuation if the amount of available collateral is sufficient to absorb all occurred losses. If this is not the case, the trader will only receive a recovery fraction of her residual position, i.e., after netting losses with the available collateral.
{Note that, in practice, the settlement at the third party valuation takes some time. This induces gap risk because the value of the transaction typically
fluctuates between the actual default time and the settlement time (see \cite{BrigoPalCCP}).} While superhedging of the actual settlement price (including gap risk) is not possible, our approach aims at finding a superhedge of the market valuation at default time. {Initial margins act as a cushion against gap risk, and are computed using a tail risk measure.}
 Let us denote by $\theta$ the value of the replicating portfolio at $\tau <T$. This is given by
  \begin{equation}
\theta  := \hat{V}_{\tau} + \ind_{\{\tau^C<\tau^I \}} L^C Y^- - \ind_{\{\tau^I<\tau^C \}} L^I Y^+
  \label{eq:closeoutterm}
  \end{equation}
where, for a real number $x$, we are using the notation $x^- = \max(-x,0)$. In the above expression, $Y:= \hat{V}_\tau - M_{\tau -} =(1-\alpha)\hat{V}_\tau - \beta\Bigl( \VaR_{q} (\hat V_{(\tau+\delta)\wedge T}-\hat V_\tau)\Bigr)^{+}$ is the value of the claim at default netted of the posted collateral, and  $0 \leq L^I, L^C \leq 1$ are the loss rates on the trader and counterparty claims, respectively. Alternatively, we can represent the value of the portfolio at default as
  \[
 \theta =  \theta(\tau, \hat{V},M)  = \ind_{\{\tau^I<\tau^C \}} \theta^{I}(\hat V_{\tau}, M_{\tau-}) +  \ind_{\{\tau^C<\tau^I \}} \theta^{C}(\hat V_{\tau}, M_{\tau-}),
  \]
where we define
\[
   \theta^{I}(\hat{v}, m)   := \hat{v} -   L^I  \bigl(\hat{v} - m\bigr)^{+}, \qquad \qquad
   \theta^{C}  (\hat{v}, m)   := \hat{v} +   L^C  \bigl(\hat{v} - m\bigr)^{-},
\]
and recall
\[
M_t  =  \alpha\hat{V}_t + \beta\Bigl( \VaR_{q} (\hat V_{(t+\delta) \wedge T}-\hat V_t\, \vert \, \tau \wedge \tau^{(N)}> t)\Bigr)^{+}.
\]

Note that $\bigl( \theta^I(\hat{V}_t, M_t) \bigr)_{t\geq 0}$ and $\bigl( \theta^C(\hat{V}_t, M_t) \bigr)_{t\geq 0}$ are also piecewise deterministic and piecewise continuous $\mathbb{F}$-adapted processes.
\begin{remark}\label{rem:lossuncert}
In practice, the actual value of the recovery rate is also uncertain and unknown till the end of the resolution process. Empirical research has shown that it tends to be inversely related to the default probability of the bond issuer (\cite{Altman}). From a mathematical perspective, adding uncertainty in the counterparty's account rate does not introduce conceptual challenges. This is
because the closeout value of the transaction given in~\eqref{eq:closeoutterm} is affine in the loss rate. Apart from introducing additional notational burden, such an uncertainty can be handled by a straightforward adaptation of the comparison argument in Theorem \ref{thm:comp}, where we would exploit the monotonicity of the {function $\theta^C$ in the loss rate $L^C.$} 
\end{remark}

\section{Wealth process \& Arbitrage} \label{sec:wealth}
{We analyze a stylized model of single name credit default swaps. If the trader purchases protection from her counterparty against the default of the $i$-th firm, then the trader makes a stream of continuous payments at a rate $S_i$ of the notional to her counterparty, up until contract maturity or the arrival of the credit event, whichever occurs earlier. Upon arrival of the $i$-th firm's default event, and if this occurs before the maturity $T$, the protection seller pays to the protection buyer the loss on the notional, obtained by multiplying the loss rate $L_i$ by the notional. As the notional enters linearly
in all calculations, we fix it to be one.}

Recall that $\xi^i$ denotes the number of shares of the defaultable account associated with the reference entity $i$, $\xi^{f}$ the number of shares in the
funding account, and we use $\xi^I$ and $\xi^C$ to denote the number of shares of trader and counterparty defaultable accounts, respectively. Using the identity~\eqref{eq:collrel}, we may write the wealth process as a sum of contributions from each individual account:
\begin{equation}\label{eq:wealth}
V_t := \sum_{i=1}^N \xi_t^i B_t^i + \xi_t^I B_t^I + \xi_t^C B_t^C + \xi_t^f B_t^{r_f} - \psi_t^{m} B_t^{r_m}.
\end{equation}

For the purpose of arbitrage-free valuation, it is important to consider not only the actual CDS portfolio, but an arbitrary multiple of it. Hence, we will consider a multiple $\gamma$ of the acquired portfolio, and focus on self-financing strategies.
\begin{definition}
	A collateralized trading strategy ${\bm\varphi} := \bigl(\xi_t^1,\ldots,\xi_t^N,\xi_t^f,\xi_t^I, \xi_t^C \; t \geq 0\bigr)$ associated with $\gamma$ shares of a portfolio $w=(w^1,w^2,\ldots,w^N)$, where $w^i \in \{0,1\}$ for $i=1,\ldots,N$, is \textit{self-financing} if, for $t \in [0, \tau \wedge \tau^{(N)}]$, it holds that
	\begin{align}
	\nonumber V_t(\gamma) &:= V_0(\gamma) +  \sum_{i=1}^N \int_0^t \xi_u^i \, dB_u^i+ \int_0^t \xi_u^I \, dB_u^I + \int_0^t \xi_u^C \, dB_u^C + \int_0^t \xi_u^f \, dB_u^{r_f} -\int_0^t \psi_u^{m} \, dB_u^{r_m} \\
	& \phantom{==}+ \gamma  \sum_{i=1}^N w^i \eta^i \; (\tau^i \wedge t) .
	\label{eq:vself}
    \end{align}
	The above expression takes into account the running spread payments $\eta^i$, $i=1,\ldots,N$ received/paid by the investor for the $i$-th CDS contract sold to (resp. purchased from) her counterparty. The set of admissible trading strategies consists of $\mathbb{F}$-predictable processes ${\bm\varphi}$ such that the portfolio process $V_t(\gamma)$ is bounded from below (cf. \cite{Delbaen}).
\end{definition}


%

Before discussing the arbitrage-free valuation of the CDS portfolio, we have to clarify the assumptions under which the underlying market  is free of arbitrage from the investor's perspective (conceptually, we follow \cite[Section 3]{br}). Thus, to start with, we exclude the CDS instruments from our consideration, and consider a trader who is only allowed to buy or sell shares of the defaultable
accounts (associated with the reference entities, her counterparty or the investor's firm itself) and to borrow or lend money from the treasury desk.

\begin{definition}\label{def:no-arb}
 The market $(B^1,B^2,\ldots,B^N,B^I,B^C)$ admits \textit{investor's arbitrage} if, given a non-negative initial capital $x \geq 0$, there exists an admissible
 trading strategy ${\bm\varphi} = (\xi^1,\xi^2,\ldots,\xi^N,\xi^f,\xi^I,\xi^C)$ such that $\Px \bigl[V_\tau \geq e^{\rfp \tau}x \bigr] =1$ and
 $\Px\bigl[V_\tau > e^{\rfp \tau}x\bigr] >0$. If the market does not admit investor's arbitrage for a given level $x \geq 0$ of initial capital,
 the market is said to be arbitrage free from the investor's perspective.
\end{definition}

We impose the following assumption and argue that it provides a necessary and sufficient condition for the absence of arbitrage.

\begin{assumption}\label{ass:nec+suff}
$ r_D \vee \rfp < \min_{i \in \{1,\ldots N, I\}} \mu^i {\wedge \underline \mu^C}$.
\end{assumption}

\begin{remark}\label{rem:measure}
Necessity: The condition $ r_D < \min_{i \in \{1,\ldots N, I, C\}} \mu^i$ is needed for the existence of the valuation measure defined in Eq.~\eqref{eq:RN} ($h^{i,\Qxx} = \mu^i-r_D$ and risk-neutral default intensities must be positive). {Thus, we should impose $\mu^C > r_D$, but as the true {account} rate $\mu^C$ is unobservable, we impose instead the slightly stronger $\underline{\mu}^C > r_D$.}
The condition $\rfp < \min_{i \in \{1,\ldots N, I\}} \mu^i\wedge \underline \mu^C$ has an even more practical interpretation because it precludes the arbitrage opportunity of short selling the defaultable accounts while investing the proceeds in the funding account. Strictly speaking, the condition $r_D < \mu^I \wedge \underline\mu^C$ is not necessary from an arbitrage point of view, because it addresses only the soundness of the market from the perspective of the valuation party. While $ r_D < \min_{i \in \{1,\ldots N\}} \mu^i$ is necessary to conclude that the valuation party's market model is free of arbitrage, one might hypothesize a situation in which $r_D {\geq } \mu^I \wedge \underline\mu^C$. From a practical perspective, this is however rather unlikely, as $r_D$ is typically assumed to be an overnight index swap (OIS) rate and as such lower than the return rates of the defaultable {accounts}.
\end{remark}

Having argued about the necessity in the above remark, we show that Assumption \ref{ass:nec+suff} is also sufficient to guarantee that the underlying market (i.e., excluding the credit default swap securities) is free of arbitrage. The proof proceeds along very similar lines as Proposition 4.4 in \cite{BCS}, and is delegated to the Appendix.

\begin{proposition}\label{prop:arb-market}
Under Assumption \ref{ass:nec+suff}, the model does not admit arbitrage opportunities for the investor for any $x \geq 0$.
\end{proposition}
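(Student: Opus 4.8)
The plan is to rule out investor's arbitrage by a change-of-numéraire / supermartingale argument, entirely parallel to the proof of Proposition~4.4 in~\cite{BCS}: I would construct an equivalent measure $\Qx^{*}$ under which the wealth of any admissible strategy, discounted at the lending rate $\rfp$, is a supermartingale, and then observe that an arbitrage in the sense of Definition~\ref{def:no-arb} would contradict this.

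First I would introduce $\Qx^{*}\sim\Px$ defined exactly as in~\eqref{eq:RN} but with the discount rate $r_D$ replaced by $\rfp$, i.e.\ with the $\Px$-intensities rescaled to $h_i^{\Qx^{*}}:=\mu_i-\rfp$, $i\in\{1,\dots,N,I,C\}$. By Assumption~\ref{ass:nec+suff} one has $\rfp<\mu_i$ for $i\in\{1,\dots,N,I\}$ and $\rfp<\underline\mu_C\le\mu_C$, so every $h_i^{\Qx^{*}}$ is strictly positive — this is precisely where the ``slightly stronger'' bound $\underline\mu_C>\rfp$ (cf.\ Remark~\ref{rem:measure}) is used, since it makes the construction valid for the unknown true $\mu_C$; combined with the uniform boundedness of the $h_i^{\Px}$, the density is a true $\Px$-martingale, so $\Qx^{*}$ is a well-defined equivalent probability. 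By Girsanov's theorem (as in the derivation of~\eqref{Bieq}) the processes $\varpi_i^{\Qx^{*}}(t):=H_i(t)-\int_0^t(1-H_i(s))h_i^{\Qx^{*}}(s)\,ds$ are $(\Gx,\Qx^{*})$-martingales and $dB_t^i=\rfp B_t^i\,dt-B_{t-}^i\,d\varpi_i^{\Qx^{*}}(t)$.

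Next, for an admissible strategy ${\bm\varphi}=(\xi^1,\dots,\xi^N,\xi^f,\xi^I,\xi^C)$ with $V_0=x$ (no CDS instruments here, so the self-financing relation~\eqref{eq:vself} reduces to $dV_t=\sum_i\xi_t^i\,dB_t^i+\xi_t^f\,dB_t^{r_f}$, cf.~\eqref{eq:wealth}), I would set $\tilde V_t:=e^{-\rfp t}V_t$ and compute, via the product rule together with $dB_t^i-\rfp B_t^i\,dt=-B_{t-}^i\,d\varpi_i^{\Qx^{*}}(t)$ and $dB_t^{r_f}-\rfp B_t^{r_f}\,dt=(r_f(\xi_t^f)-\rfp)B_t^{r_f}\,dt$, that $d\tilde V_t=e^{-\rfp t}\bigl(-\sum_i\xi_t^i B_{t-}^i\,d\varpi_i^{\Qx^{*}}(t)+\xi_t^f(r_f(\xi_t^f)-\rfp)B_t^{r_f}\,dt\bigr)$. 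The stochastic-integral term is a $\Qx^{*}$-local martingale ($\F$-predictable integrands against the $\Gx$-martingales $\varpi_i^{\Qx^{*}}$), and by the definition~\eqref{eq:rrf} of $r_f$ together with the standing assumption $\rfm\ge\rfp$ (borrowing is at least as expensive as lending) one has $\xi^f(r_f(\xi^f)-\rfp)=(\rfm-\rfp)\xi^f\ind_{\{\xi^f<0\}}\le0$; hence $\tilde V$ is a $\Qx^{*}$-local supermartingale, and being bounded from below on $[0,T]$ by admissibility, a genuine $\Qx^{*}$-supermartingale. If ${\bm\varphi}$ were an investor's arbitrage for some $x\ge0$, optional sampling at the bounded time $\tau\le T$ would give $\Exx^{\Qx^{*}}[\tilde V_\tau]\le\tilde V_0=x$, whereas Definition~\ref{def:no-arb} and $\Qx^{*}\sim\Px$ force $\tilde V_\tau=e^{-\rfp\tau}V_\tau\ge x$ $\Qx^{*}$-a.s.\ with $\Qx^{*}[\tilde V_\tau>x]>0$, i.e.\ $\Exx^{\Qx^{*}}[\tilde V_\tau]>x$ --- a contradiction.

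The main obstacle is the nonlinear funding rate: one must discount at exactly the lending rate $\rfp$ rather than at $r_D$, because only then do the account drifts cancel and the residual funding drift $\xi^f(r_f(\xi^f)-\rfp)$ acquire a definite sign; discounting at $r_D$ fails since Assumption~\ref{ass:nec+suff} does not fix the order of $r_D$ and $\rfp$. The second delicate point is that the construction of $\Qx^{*}$ must go through uniformly over the uncertainty interval $[\underline\mu_C,\overline\mu_C]$, which is exactly what the bound $\underline\mu_C>r_D\vee\rfp$ supplies. The remaining steps — that pairing $\F$-predictable integrands with the $\Gx$-martingales $\varpi_i^{\Qx^{*}}$ still produces a local martingale, and the upgrade from local supermartingale to supermartingale via the lower bound — are routine.
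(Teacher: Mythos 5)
Your proposal follows essentially the same route as the paper's own proof: a change of measure that replaces $r_D$ by $\rfp$ in the density \eqref{eq:RN} (valid uniformly in $\mu_C$ by Assumption \ref{ass:nec+suff}), making the $\rfp$-discounted defaultable accounts $(\Gx,\cdot)$-martingales, then showing the discounted wealth is a supermartingale via the sign of the residual funding drift and deriving a contradiction with Definition \ref{def:no-arb}. The only difference is cosmetic: you state the ordering $\rfm\ge\rfp$ explicitly, whereas the paper uses it implicitly in the inequality $r_f\,\xi_t^f\le\rfp\,\xi_t^f$; otherwise the argument, including the local-supermartingale-to-supermartingale upgrade via admissibility, matches the paper's.
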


As in \cite{BCS}, we will define the notion of an arbitrage free price of a derivative security from the investor{'}s perspective. We assume that the investor has zero initial capital, or equivalently, she does not have liquid initial capital that can be used for replicating the claim until maturity. The
replicating portfolio will thus be implemented through purchases/sales of shares of the defaultable accounts and of the funding account.

\begin{definition}\label{def:arb-price}
	The valuation $P \in \mathbb{R}$ of a derivative security with terminal payoff $\vartheta \in \mathcal{F}_T$ is called \textit{investor's arbitrage-free} if for all $\gamma \in \mathbb{R}$, buying $\gamma$ securities for $\gamma P$ and trading in the market with an admissible strategy and zero initial capital, does not create investor's arbitrage.
\end{definition}

Let $V_t$ represent the price process of the replicating portfolio, and given by the supremum over all arbitrage free prices. Then we define  the \textit{total valuation adjustment} $\XVA$ as the difference between this upper arbitrage price and the clean price, i.e.,
\begin{equation}
\XVA_t(\gamma) = V_t(\gamma) - \gamma \hat{V}_t.
\label{eq:XVAdef}
\end{equation}
$\XVA$ thus quantifies the total costs (including collateral, funding, and counterparty risk related costs) incurred by the trader to replicate the sold CDS portfolio. Notice that, at time $t$, the investor does not know the actual {counterparty {account} rate $\mu^C$} for the time interval $[t,\tau]$. Hence, she is not able to execute the replication strategy yielding the value process $V$, because all what she knows about {the {account} rate is that $\underline \mu^C \le \mu^C \le \overline \mu^C$}. Therefore, she will have to consider the worst case, accounting for all possible $\mathbb{F}$-predictable dynamics of {the {account} rate process in the interval $\bigl[ \underline \mu^C, \overline \mu^C \bigr]$}. Denote the valuation of the replicating portfolio {when $\mu^C = \mu$ by $V^{\mu}$}. 
 The \textit{robust XVA} is defined as
\begin{equation}
{\rXVA_t(\gamma) = \essup_{\mu \in [\underline \mu^C, \overline \mu^C] }V_t^\mu(\gamma) \ind_{\{t<\tau\}}  + V_t(\gamma) \ind_{\{t \geq\tau\}} - \gamma \hat{V}_t.}
\label{eq:rXVAdef}
\end{equation}
Notice that the supremum is taken over all admissible valuations only prior to the trader or her counterparty's default. {In particular, the valuation process at and after default depends only on the closeout value and thus does not depend on the extremes $\underline \mu_C$ and $\overline \mu_C$ of the uncertainty interval.} 

\section{Robust XVA for Credit Swaps}\label{sec:OneEntity}
In this section, we derive explicit representations for the robust XVA of a credit default swap portfolio. To highlight the main mathematical arguments and economic implications of the results, we start analyzing the case of a single credit default swap in Section \ref{sec:dynXVA}. We develop a comparison argument to establish the uniqueness of the robust XVA process and of the corresponding super-replicating strategies in Section~\ref{sec:ODE}. We provide an explicit computation of margins under the proposed framework in Section~\ref{sec:margins}. We generalize the analysis to a portfolio of credit default swaps in Section~\ref{sec:creditport}.

\subsection{BSDE representation of XVA}\label{sec:dynXVA}
This section characterizes the $\XVA$ process given in Eq.~\eqref{eq:XVAdef} as the solution to a BSDE. We start analyzing the dynamics of the process $V_t(\gamma)$. Given a self financing strategy, the investor's wealth process in~\eqref{eq:vself} under the risk neutral measure $\Qxx$ follows the dynamics
\begin{align}\label{eq:vtlast}
    \nonumber dV_t(\gamma) &= \bigl(r_f \xi_t^f B_t^{r_f} + r_D \xi_t^1 P_t^1 + r_D \xi_t^I B_t^I + r_D \xi_t^C B_t^C - r_m \psi_t^m B_t^{r_m} + \gamma \eta^1   \bigr) \, dt  \\
    \nonumber & \phantom{==}- \xi_t^1 B_t^1   \, d\varpi_t^{1,\Qxx}  - \xi_t^I B_{t-}^I \, d\varpi_t^{I,\Qxx}  - \xi_t^C B_{t-}^C  \, d\varpi_t^{C,\Qxx}\\
    \nonumber &= \Bigl( \rfp \bigl(\xi_t^f B_t^{r_f}\bigr)^+ -\rfm \bigl(\xi_t^f B_t^{r_f}\bigr)^- + r_D \xi_t^1 B_t^1 + r_D \xi_t^I B_t^I + r_D \xi_t^C B_t^C \Bigr) \, dt \\
     & \phantom{==} + \Bigl(\rcp \bigl(\gamma M_t\bigr)^+ - \rcm \bigl(\gamma M_t\bigr)^-  + \gamma \eta^1  \Bigr) \, dt  -  \xi_t^1 B_{t-}^1 +  \, d\varpi_t^{1,\Qxx} - \xi_t^I B_{t-}^I   \, d\varpi_t^{I,\Qxx}  - \xi_t^C B_{t-}^C d\varpi_t^{C,\Qxx}.
\end{align}
Setting
\begin{equation}\label{eq:Zetas}
    Z^{1,{\gamma}}_t := -\xi_t^1 B_{t-}^1 ,\qquad Z^{I,{\gamma}}_t := -\xi_t^I B_{t-}^I,\qquad Z_t^{C,{\gamma}} := -\xi_t^C B_{t-}^C,
\end{equation}
and using Eq.~\eqref{eq:wealth}, we obtain that
\begin{equation}\label{eq:funding}
    \xi_t^f B_t^{r_f} = V_t(\gamma) - \xi_t^1 P^1_t - \xi_t^I P^I_t - \xi_t^C P^C_t - \gamma M_t.
\end{equation}
We may then rewrite the wealth dynamics as
\begin{align}\label{eq:vtlast2}
    \nonumber dV_t(\gamma) &= \Bigl(\rfp \bigl(V_t(\gamma) + Z_t^{1,{\gamma}} + Z_t^{I,{\gamma}} + Z_t^{C,{\gamma}} - \abs{\gamma} M_t \bigr)^+ -\rfm \bigl(V_t(\gamma) + Z_t^{1,^{\gamma}} + Z_t^{I,{\gamma}} + Z_t^{C,{\gamma}} -  \abs{\gamma} M_t \bigr)^-\\
    \nonumber & \phantom{=}  - r_D Z_t^{1,{\gamma}} - r_D Z_t^{I,{\gamma}} - r_D Z_t^{C,{\gamma}}  +  \rcp\abs{\gamma}   M_t^+ - \rcm\abs{\gamma}  M_t^-  + \gamma \eta^1 \Bigr)\,  dt \\
    & \phantom{=} + Z_t^{1,{\gamma}} \,d\varpi_t^{1,{ \Qxx}}  + Z_t^{I,{\gamma}} \, d\varpi_t^{I,{ \Qxx}}  + Z_t^{C,{\gamma}} \, d\varpi_t^{C,\Qxx}.
 \end{align}
To study the robust replicating strategy, we use the above dynamics to formulate the BSDE associated with the portfolio replicating
 the credit default swap. This is given by
  \begin{align}
  \nonumber -dV_t(\gamma) &= f\bigl(t,V_t(\gamma),Z_t^{1,\gamma},Z_t^{I,\gamma},Z_t^{C,\gamma},\gamma; M_t\bigr) \, dt - Z^{1,\gamma}_t\, d\varpi_t^{1,\Qxx}  - Z_t^{I,\gamma} \, d\varpi_t^{I,\Qxx}  - Z_t^{C,\gamma} \, d\varpi_t^{C,\Qxx},\\
  V_{\tau \wedge \tau^1}(\gamma) & =   \gamma L^1 \ind_{\tau^1 < \tau} + \theta_I(\gamma\hat{V}_\tau, \abs{\gamma}M_{\tau-})\ind_{\{\tau < \tau^1 \wedge \tau^C \wedge T\}} + \theta_C(\gamma\hat{V}_\tau, \abs{\gamma}M_{\tau -})\ind_{\{\tau < \tau^1 \wedge \tau^I \wedge T\}},
 \label{eq:BSDE-sell}
  \end{align}
where the driver $f \, : \, \Omega \times [0,T] \times \R^5$, $(\omega, t,v,z,z^I,z^C,\gamma) \mapsto f\bigl(t,v,z,z^I,z^C,\gamma; M\bigr)$ is given by
\begin{align}
    f\bigl(t,v,z^1,z^I,z^C,\gamma; M\bigr) &:= -\Bigl(\rfp \bigl(v+ z^1 + z^I + z^C  -  \abs{\gamma} M_t\bigr)^+ -\rfm \bigl(v + z^1 + z^I + z^C  -   \abs{\gamma} M_t\bigr)^- \nonumber\\
    & \phantom{==} - r_D z^1 - r_D z^I - r_D z^C + \rcp  \abs{\gamma} M_t^+ - \rcm \abs{\gamma} M_t^-  +\gamma \eta^1  \Bigr ).\label{eq:f+}
\end{align}
In the above expression, we highlight the dependence on the collateral process $M$ that is used to mitigate the default losses associated with the $\gamma$ units of the traded CDS contract. In the case the reference entity defaults before the investor or her counterparty, $\tau^1 <\tau$, the terminal condition is given by the loss term $-\gamma L^1$. This is consistent with the fact that, at this time, the value of the transaction from the investor's point of view corresponds with the third party valuation $\gamma\hat V_{\tau^1} = {\gamma}\hat{C}^1_{\tau^1} = {\gamma} L^1 \ind_{\{\tau^1 \leq T\}}$. By positive homogeneity of the driver $f$ with respect to $\gamma>0$, we will assume that
$\gamma=1$ throughout the paper and suppress it from the superscript. The case $\gamma=-1$ follows from symmetric arguments.

Next, we study the dynamics of the credit default swap price process $\hat{V}$, viewed from the valuation agent's perspective. Such a process satisfies a BSDE
that can be derived similarly to Eq.~\eqref{eq:BSDE-sell} (essentially ignoring the terms $Z^I, Z^C$ as well as the collateral terms, setting $\rfm = \rfp = r_D${, and normalizing $\gamma=1$}). This is given by
\begin{align}
- d\hat{V}_t = \bigl(-r_D \hat{V}_t  - \eta^1 \bigr)\, dt -\hat{Z}_t^1 \, d \varpi_t^{1,\Qxx}, \qquad \qquad
\hat{V}_{\tau^1 \wedge T} &= L^1 \ind_{\tau^1 < T}.\label{eq:BSDE-hatV}
\end{align}

This BSDE is well known to admit the unique solution $(\hat{V}_t,\hat{Z}_t^1)$, where $\hat{V}$ can be represented explicitly (see the Appendix) as
\begin{align}
	\hat{V}_t = \hat{C}^1_t &= -\Exx^{\Qxx}\biggl[ \int_t^{ T}  e^{-\int_t^u (h^{1,\Qxx}_s  +r_D) ds} \eta^1 \, du - \int_t^T L^1 h^{1,\Qxx}_u e^{-\int_t^u (h^{1,\Qxx}_s + r_D) ds} du \, \bigg\vert \, \mathcal{F}_t\biggr]\ind_{\{t \leq \tau^1\}}.
	\label{eq:creditswap}
\end{align}
%
We immediately obtain a BSDE for the $\XVA$ process given by
\begin{align}
	-d\XVA_t^{\buysell} & = \tilde{f}^{}\bigl(t,\XVA_t^{\buysell},\tilde{Z}_t^{1},\tilde{Z}_t^{I},\tilde{Z}_t^{C}; M\bigr) \, dt - \tilde{Z}_t^{1}\, d\varpi_t^{1,\Qxx}  - \tilde{Z}_t^{I} \, d\varpi_t^{I,\Qxx}  - \tilde{Z}_t^{C} \, d\varpi_t^{C,\Qxx},\nonumber\\
	\XVA_{\tau \wedge \tau^1}^{\buysell} &= \tilde{\theta}^{C}(\hat V_\tau, M_{\tau -}) \ind_{\{\tau <\tau^1 \wedge \tau^I \wedge T\}} + \tilde{\theta}^{I}(\hat V_\tau, M_{\tau -})\ind_{\{\tau <\tau^1 \wedge \tau^C \wedge T \}},
\label{eq:XVABSDE}
\end{align}
where
\begin{align}
	\tilde{Z}_t^{1} & := Z_t^{1} - \hat{Z}^1_t, \qquad \tilde{Z}_t^{I} := Z_t^{I}, \qquad \tilde{Z}_t^{C} := Z_t^{C}, \nonumber \\  \tilde{\theta}^{C}(\hat v, m)  & := L^C ( \hat{v} -m)^{-}, \qquad \tilde{\theta} ^{I}(\hat v, m)  := - L^I (\hat{v} -m)^{+},
	\label{eq:hats}
\end{align}
and	
\begin{align}
&\!\!\!\!\!\tilde{f}\bigl(t,xva,\tilde{z}^1,\tilde{z}^I,\tilde{z}^C; M\bigr)  \nonumber\\
 & \qquad\qquad\qquad:= -\Bigl(\rfp \bigl(xva+ \tilde{z}^1 + \tilde{z}^I + \tilde{z}^C +  L^1- M_t \bigr)^+ -\rfm \bigl(xva + \tilde{z}^1 + \tilde{z}^I + \tilde{z}^C +  L^1- M_t \bigr)^- \nonumber\\
&\qquad\qquad\qquad  - r_D \tilde{z}^1 - r_D \tilde{z}^I - r_D \tilde{z}^C + \rcp \bigl(M_t\bigr)^+ - \rcm\bigl(M_t\bigr)^- -r_D L^1\Bigr).
\end{align}
Above, we have used the fact that $\hat{Z}_t^1 = L^1 - \hat{V}_{t-} = L^1 - \hat{V}_t $ by stochastic continuity and thus $Z_t^1 = \tilde{Z}_t^1 + \hat{Z}_t^1 = \tilde{Z}_t^1 + L^1 - \hat{V}_t$.

We can now apply the reduction technique developed by \cite{CrepeyRed} to find a continuous ordinary differential equation describing the $\XVA$ prior to the investor and her counterparty's default.
\begin{proposition}[] \label{thm:reduction}
The BSDE
\begin{align}\label{eq:reduced}
	-d\check{U}_t^{\buysell}  = \check{g}\bigl(t,\check{U}_t^{\buysell}; \hat{V}, M\bigr) \, dt, \qquad \qquad
	\check{U}_{T}^{\buysell} = 0,
\end{align}
in the (trivial) filtration $\mathbb{F}$, with driver
\begin{align}
	\check{g}\bigl(t,\check{u}; \hat{V},M\bigr) &= h^{I,\Qxx}\bigl(\tilde{\theta} ^{I}(\hat{V}_t, M_{t-})-\check{u}\bigr) + h^{C,\Qxx}\bigl(\tilde{\theta} ^{C}(\hat{V}_t, M_{t-})-\check{u}\bigr) - h^{1,\Qxx} \check{u} \nonumber\\
	& \phantom{==}+ \tilde{f}\bigl(t, \check{u}, -\check{u}, \tilde{\theta}^{I}( \hat{V}_t, M_{t-})-\check{u}, \tilde{\theta}^{C}( \hat{V}_t, M_{t-}) -\check{u}; M\bigr)\label{eq:g+}
\end{align}
admits a unique solution $\check{U}^{\buysell}$, that is related to the unique solution $\bigl(\XVA^{\buysell}, \tilde{Z}^{1}, \tilde{Z}^{I}, \tilde{Z}^{C}\bigr)$ of the BSDE in Eq.~\eqref{eq:XVABSDE} as follows. On the one hand
\begin{equation}\label{eq:reduced_identity1}
	\check{U}_t^{\buysell}  :=  \XVA_{t\wedge (\tau \wedge \tau_1)-}^{\buysell}
\end{equation}
is a solution to the ODE (reduced BSDE) in Eq.~\eqref{eq:reduced}, and on the other hand a solution to the full $\XVA$ BSDE~\eqref{eq:XVABSDE} is given by
\begin{align}
	 \XVA_t^{\buysell}  &= \check{U}_t^{\buysell} \ind_{\{t<\tau \wedge \tau^1 \}} + \Bigl( \tilde{\theta}^{C}(\hat{V}_{\tau^C}, M_{\tau^C-}) \ind_{\{\tau^C < \tau^1 \wedge \tau^I\wedge T\} }  + \tilde{\theta} _{I}(\hat{V}_{\tau^I}, M_{\tau^I -}) \ind_{\{\tau^I < \tau^1 \wedge \tau^C\wedge T\}} \Bigr) \ind_{\{t \geq \tau \wedge \tau^1\}}, \label{eq:reduced_identity2}\\
	\nonumber \tilde{Z}_t^1 & =  -\check{U}_t^{\buysell} \ind_{\{t \leq \tau \wedge \tau^1\}}, \quad \tilde{Z}_t^{I} = \Bigl(\tilde{\theta}^I(\hat{V}_t, M_{t-}) - \check{U}_t^{\buysell}\Bigr)\ind_{\{t \leq \tau \wedge \tau^1\}}, \quad
	\tilde{Z}_t^{C} = \Bigl(\tilde{\theta} ^C(\hat{V}_t, M_{t-}) - \check{U}_t^{\buysell}\Bigr)\ind_{\{t \leq \tau \wedge \tau^1\}}.
\end{align}
\end{proposition}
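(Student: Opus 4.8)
The plan is to use the reduced-form BSDE technique of \cite{CrepeyRed}: instead of attacking the $\mathbb{G}$-BSDE \eqref{eq:XVABSDE} directly, I would first solve the scalar equation \eqref{eq:reduced}, then \emph{lift} its solution to a solution of \eqref{eq:XVABSDE} through the formulas \eqref{eq:reduced_identity2}, and conversely show that every solution of \eqref{eq:XVABSDE} \emph{projects} onto a solution of \eqref{eq:reduced}, so that well-posedness of the full BSDE is inherited from that of the ODE. By positive homogeneity of the cash flows it is enough to treat one sign of the position (cf.\ the discussion around \eqref{eq:BSDE-sell}), so I fix the $\buysell$ superscript accordingly. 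The first task is to verify that \eqref{eq:reduced} is well posed. The driver $\tilde{f}$ is piecewise affine in $(xva,\tilde{z}^1,\tilde{z}^I,\tilde{z}^C)$ with slopes built only from the constants $\rfp,\rfm,r_D$, hence globally Lipschitz in those arguments, and the collateral $M$ enters it only through a bounded, piecewise continuous additive term. The substitutions $\tilde{z}^1=-\check{u}$, $\tilde{z}^I=\tilde{\theta}_I(\hat{V}_t,M_{t-})-\check{u}$, $\tilde{z}^C=\tilde{\theta}_C(\hat{V}_t,M_{t-})-\check{u}$ are affine in $\check{u}$, and the extra terms $h_I^{\Qxx}(\tilde{\theta}_I-\check{u})+h_C^{\Qxx}(\tilde{\theta}_C-\check{u})-h_1^{\Qxx}\check{u}$ in \eqref{eq:g+} have coefficients equal to the uniformly bounded (and, by Assumption~\ref{ass:nec+suff}, positive) intensities. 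Therefore $\check{g}(t,\cdot\,;\hat{V},M)$ is globally Lipschitz in $\check{u}$ with a constant independent of $t$, while $t\mapsto\check{g}(t,0;\hat{V},M)$ is bounded and piecewise continuous along the paths of the piecewise-deterministic processes $\hat{V}$, $M$ and $h_\bullet^{\Qxx}$. Carath\'eodory's theorem, applied on each interval of continuity and patched across the finitely many jump times, then yields a unique absolutely continuous, bounded solution $\check{U}^{\buysell}$ of \eqref{eq:reduced} on $[0,T]$ with $\check{U}^{\buysell}_T=0$.

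For the lift, define $(\XVA^{\buysell},\tilde{Z}^1,\tilde{Z}^I,\tilde{Z}^C)$ by \eqref{eq:reduced_identity2} and check \eqref{eq:XVABSDE}. The terminal condition at $\tau\wedge\tau_1$ follows by inspecting the three mutually exclusive scenarios: if $\tau_1=\tau\wedge\tau_1$ all indicator terms vanish on both sides and the value is $0$; if $\tau_I$ (respectively $\tau_C$) is the minimum, only the $\tilde{\theta}_I$ (respectively $\tilde{\theta}_C$) term survives on both sides. For the dynamics I would apply the product/jump-It\^o rule to \eqref{eq:reduced_identity2}: on $[0,\tau\wedge\tau_1)$ we have $\XVA^{\buysell}_t=\check{U}^{\buysell}_t$ with $d\check{U}^{\buysell}_t=-\check{g}(t,\check{U}^{\buysell}_t;\hat{V},M)\,dt$, and the single jump at $\tau\wedge\tau_1$ equals $\bigl(0-\check{U}^{\buysell}_{(\tau\wedge\tau_1)-}\bigr)\Delta H_1+\bigl(\tilde{\theta}_I-\check{U}^{\buysell}_{(\tau\wedge\tau_1)-}\bigr)\Delta H_I+\bigl(\tilde{\theta}_C-\check{U}^{\buysell}_{(\tau\wedge\tau_1)-}\bigr)\Delta H_C$, i.e.\ exactly $\tilde{Z}^1_t\Delta H_1+\tilde{Z}^I_t\Delta H_I+\tilde{Z}^C_t\Delta H_C$ with the integrands from \eqref{eq:reduced_identity2}. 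Writing $\Delta H_i=\Delta\varpi_i^{\Qxx}$ and compensating ($dH_i=d\varpi_i^{\Qxx}+(1-H_i)h_i^{\Qxx}\,dt$ before the terminal time), the drift of $-d\XVA^{\buysell}_t$ reassembles into $\check{g}(t,\check{U}^{\buysell}_t;\hat{V},M)-\tilde{Z}^1_t h_1^{\Qxx}-\tilde{Z}^I_t h_I^{\Qxx}-\tilde{Z}^C_t h_C^{\Qxx}$, which by the definition \eqref{eq:g+} of $\check{g}$ is exactly $\tilde{f}(t,\XVA^{\buysell}_t,\tilde{Z}^1_t,\tilde{Z}^I_t,\tilde{Z}^C_t;M)$; hence \eqref{eq:XVABSDE} holds on $[0,\tau\wedge\tau_1]$. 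Beyond $\tau\wedge\tau_1$ the process $\XVA^{\buysell}$ is constant and equal to the closeout, so \eqref{eq:XVABSDE} holds there trivially, and boundedness of $\check{U}^{\buysell}$ places $(\XVA^{\buysell},\tilde{Z}^1,\tilde{Z}^I,\tilde{Z}^C)$ in the required solution spaces.

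For the converse and for uniqueness, start from an arbitrary solution $(\XVA^{\buysell},\tilde{Z}^1,\tilde{Z}^I,\tilde{Z}^C)$ of \eqref{eq:XVABSDE}. Since $H_1,H_I,H_C$ are totally inaccessible with $\mathbb{G}$-intensities and jump at pairwise distinct times, equating the jump of $\XVA^{\buysell}$ at each of $\tau_1,\tau_I,\tau_C$ (prescribed by the terminal data of \eqref{eq:XVABSDE}) with $\tilde{Z}^i_t\Delta H_i$, together with the predictable representation property of $\mathbb{G}$ and the fact that $\XVA^{\buysell}$, $\hat{V}$ and $M$ are piecewise deterministic before $\tau_1$, forces $\tilde{Z}^1_t=-\XVA^{\buysell}_{t-}$, $\tilde{Z}^I_t=\tilde{\theta}_I(\hat{V}_{t-},M_{t-})-\XVA^{\buysell}_{t-}$ and $\tilde{Z}^C_t=\tilde{\theta}_C(\hat{V}_{t-},M_{t-})-\XVA^{\buysell}_{t-}$ on $\{t\le\tau\wedge\tau_1\}$. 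Compensating the default martingales as in the lifting step then shows that $\check{U}^{\buysell}_t:=\XVA^{\buysell}_{t\wedge(\tau\wedge\tau_1)-}$ solves \eqref{eq:reduced}, hence by the first step equals the unique ODE solution, and feeding it back into the lifting formulas pins down $(\XVA^{\buysell},\tilde{Z}^1,\tilde{Z}^I,\tilde{Z}^C)$ uniquely; alternatively, uniqueness of \eqref{eq:XVABSDE} may be quoted from the standard well-posedness theory for BSDEs with finitely many totally inaccessible jumps and Lipschitz drivers.

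I expect the genuine obstacle to be the jump bookkeeping underlying both the lift and the projection. One must confirm that, whichever of the three competing default times strikes first, the prescribed closeout is reproduced by the single nonzero increment $\tilde{Z}^i_t\Delta H_i$, and that after moving the compensators of all three default martingales into the drift the remainder is \emph{exactly} the reduced driver of \eqref{eq:g+}, i.e.\ that
\[
\check{g}(t,\check{u};\hat{V},M)=\tilde{f}\bigl(t,\check{u},-\check{u},\tilde{\theta}_I-\check{u},\tilde{\theta}_C-\check{u};M\bigr)+h_I^{\Qxx}(\tilde{\theta}_I-\check{u})+h_C^{\Qxx}(\tilde{\theta}_C-\check{u})-h_1^{\Qxx}\check{u}
\]
holds with all signs correct, where $\tilde{\theta}_I=\tilde{\theta}_I(\hat{V}_t,M_{t-})$ and $\tilde{\theta}_C=\tilde{\theta}_C(\hat{V}_t,M_{t-})$. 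A key point here is that the $\XVA$-closeout at $\tau_1$ (as opposed to the closeout of the full wealth process) is $0$, which is precisely what makes the $\tau_1$-contribution collapse to the harmless term $-h_1^{\Qxx}\check{u}$ rather than introducing an extra inhomogeneity.
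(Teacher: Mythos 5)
Your proposal is correct, but it follows a different route from the paper. The paper's own proof is essentially a citation argument: well-posedness of the reduced equation is delegated to Proposition~\ref{thm:existence} (the ODE existence/uniqueness result, proved via Picard--Lindel\"of and a Gronwall bound), and the equivalence between the full $\mathbb{G}$-BSDE~\eqref{eq:XVABSDE} and the reduced $\mathbb{F}$-equation~\eqref{eq:reduced} is obtained by invoking the reduction theorem of Cr\'epey and Song (their Theorem~4.3), after checking that their condition (A) holds for the filtrations and that condition (J) holds because the terminal data do not depend on $\tilde{Z}^1,\tilde{Z}^I,\tilde{Z}^C$; the martingale representation theorems of Bielecki--Rutkowski are then used to identify the BSDE solutions with the martingale-problem solutions treated in that reference. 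You instead re-derive the equivalence by hand: a Carath\'eodory/Lipschitz argument for the ODE (essentially reproducing Proposition~\ref{thm:existence}), a direct jump-It\^o verification that the lifted quadruple solves~\eqref{eq:XVABSDE} --- including the check that the single jump at whichever of $\tau_1,\tau_I,\tau_C$ strikes first reproduces the closeout and that compensating the three default martingales turns the full driver into exactly $\check{g}$ of~\eqref{eq:g+} --- and a converse based on matching jump sizes and predictable representation in $\mathbb{G}$ to force the stated form of the integrands and hence uniqueness. Your explicit computation is more self-contained and makes the sign and jump bookkeeping transparent (in particular the observation that the zero $\XVA$-closeout at $\tau_1$ is what collapses that contribution to $-h_1^{\Qxx}\check{u}$), whereas the paper's approach is shorter and outsources the delicate points of your converse step --- total inaccessibility and a.s.\ distinctness of the default times, uniqueness of the predictable representation, and the identification of BSDE with martingale-problem solutions --- to the cited reduction machinery, which is precisely where your sketch is at its most informal.
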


The uniqueness of the solution to the original BSDE for $V$ as well as to their projected versions in the $\mathbb{F}$-filtration follows from the definition of $\XVA$.

\begin{corollary}\label{cor:BSDE-V-red}
The BSDE~\eqref{eq:BSDE-sell} admits a unique solution. This solution is related to the unique solution $\bar{U}$ of the ODE
\begin{align}
	-d\bar{U}_t  = g\bigl(t,\bar{U}_t; \hat{V}, M\bigr) \, dt, \qquad \qquad
	\bar{U}_T = 0,
\label{eq:reduced-V}
\end{align}
in the filtration $\mathbb{F}$ with
\begin{align}
	g\bigl(t,\bar{u}; \hat{V}, M\bigr) &= h^{I,\Qxx}\bigl(\theta^{I}(\hat{V}_t, M_{t-})-\bar{u}\bigr) + h^{C,\Qxx}\bigl(\theta^{C}(\hat{V}_t, M_{t-})-\bar{u}\bigr)-
	h^{1,\Qxx} \bar{u}\\
	& \phantom{==}
	+f\bigl(t, \bar{u}, L^1 -\bar{u}, \theta^{I}( \hat{V}_t, M_{t-})-\bar{u}, \theta^{C}( \hat{V}_t, M_{t-}) -\bar{u};M\bigr)
\end{align}
via the following relations. On the one hand
\[
	\bar{U}_t  := V_{t\wedge (\tau \wedge \tau^1)-}
\]
is a solution to the reduced BSDE~\eqref{eq:reduced-V}, while on the other hand a solution to the full BSDE~\eqref{eq:BSDE-sell} is given by
\begin{align}
	\nonumber V_t  &:= \bar{U}_t \ind_{\{t<\tau \wedge \tau^1\}} + \Bigl(  L^1 \ind_{\tau^1 < \tau} + \theta^{C}(\hat{V}_{\tau^C}, M_{\tau^C -}) \ind_{\{\tau^C < \tau^1 \wedge \tau^I\wedge T\} }  + \theta^{I}(\hat{V}_{\tau^I}, M_{\tau^I -}) \ind_{\{\tau^I < \tau^1 \wedge \tau^C\wedge T\}} \Bigr) \ind_{\{t \geq \tau \wedge \tau^1\}}, \\
	\nonumber Z_t^1  &:= L^1 -\bar{U}_t \ind_{\{t<\tau \wedge \tau^1 \}}, \quad Z_t^{I} :=  \Bigl(\theta^I(\hat{V}_t, M_{t-}) - \bar{U}_t\Bigr)\ind_{\{t\leq \tau \wedge \tau^1 \}}, \quad Z_t^{C} :=  \Bigl(\theta^C(\hat{V}_t, M_{t-}) - \bar{U}_t\Bigr)\ind_{\{t\leq\tau \wedge \tau^1 \}}.
\end{align}
\end{corollary}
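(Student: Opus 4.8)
The plan is to deduce the corollary from two facts already in hand: by Proposition~\ref{thm:reduction} the $\XVA$-BSDE~\eqref{eq:XVABSDE} has a unique solution, reducible to the scalar ODE~\eqref{eq:reduced}; and the clean-price BSDE~\eqref{eq:BSDE-hatV} has the unique, explicit solution $(\hat{V},\hat{Z}^1)$ of~\eqref{eq:creditswap} with $\hat{Z}_t^1 = L_1-\hat{V}_t$. Since the normalization $\gamma=1$ is in force, \eqref{eq:XVAdef} reads $\XVA_t = V_t-\hat{V}_t$, so $V=\XVA+\hat{V}$ and the corollary should follow by ``adding'' the two results. The algebraic glue is the driver identity
\[
\tilde{f}\bigl(t, v-\hat{V}_t,\, z^1-\hat{Z}_t^1,\, z^I,\, z^C;\, M\bigr) = f\bigl(t, v, z^1, z^I, z^C;\, M\bigr) + r_D \hat{V}_t + S_1,
\]
valid after inserting $\hat{Z}_t^1=L_1-\hat{V}_t$ into~\eqref{eq:f+}, together with the pointwise relations $\theta_I(\hat{v},m)=\hat{v}+\tilde{\theta}_I(\hat{v},m)$ and $\theta_C(\hat{v},m)=\hat{v}+\tilde{\theta}_C(\hat{v},m)$ coming from~\eqref{eq:closeoutterm} and~\eqref{eq:hats}.

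For well-posedness of~\eqref{eq:BSDE-sell} I would show that $(V,Z^1,Z^I,Z^C)$ solves~\eqref{eq:BSDE-sell} if and only if $(V-\hat{V},\,Z^1-\hat{Z}^1,\,Z^I,\,Z^C)$ solves~\eqref{eq:XVABSDE}: the drivers match by the displayed identity and the remaining $r_D$-linear and $S_1$ terms, while the terminal conditions match using $\theta_\bullet=\hat{V}+\tilde{\theta}_\bullet$ and $\hat{V}_{\tau\wedge\tau_1}=L_1\ind_{\{\tau_1<\tau\}}+\hat{V}_\tau\ind_{\{\tau<\tau_1\}}$ (the subcase $\tau=T<\tau_1$, where $\hat{V}_\tau=0$, needs a line of care). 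Existence and uniqueness for~\eqref{eq:BSDE-sell}, and of the accompanying $Z$-components, then transfer directly from Proposition~\ref{thm:reduction}.

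For the reduction, I would first note that on the survival set $\{t<\tau_1\}$ the clean price $\hat{V}$ is the deterministic function solving $-d\hat{V}_t=\bigl(-r_D\hat{V}_t-S_1+h_1^{\Qxx}(L_1-\hat{V}_t)\bigr)\,dt$, $\hat{V}_T=0$ --- the $\mathbb{F}$-reduction of~\eqref{eq:BSDE-hatV}, using $d\varpi^{1,\Qxx}_t=-h_1^{\Qxx}\,dt$ there. Setting $\bar{U}_t:=\check{U}_t+\hat{V}_t$, adding this ODE to~\eqref{eq:reduced}, and using $\bar{u}=\check{u}+\hat{V}_t$ (which turns $\tilde{\theta}_\bullet(\hat{V}_t,M_{t-})-\check{u}$ into $\theta_\bullet(\hat{V}_t,M_{t-})-\bar{u}$) together with the $f$--$\tilde{f}$ identity identifies the sum as the reduced $V$-ODE~\eqref{eq:reduced-V}; uniqueness of $\bar{U}$ is inherited from that of $\check{U}$ and $\hat{V}$. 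Going the other way, the displayed formula for $V_t$ in terms of $\bar{U}$ is verified to solve~\eqref{eq:BSDE-sell} by the same case split on the first of $\tau_1,\tau_I,\tau_C$ used in Proposition~\ref{thm:reduction}, and $Z^1,Z^I,Z^C$ are obtained by reading off the jump sizes $L_1-\bar{U}_{\tau_1}$, $\theta_I(\hat{V}_{\tau_I},M_{\tau_I-})-\bar{U}_{\tau_I}$, $\theta_C(\hat{V}_{\tau_C},M_{\tau_C-})-\bar{U}_{\tau_C}$ against the compensated martingales $\varpi^{1,\Qxx},\varpi^{I,\Qxx},\varpi^{C,\Qxx}$.

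The main obstacle is not conceptual but bookkeeping: one must track consistently how the clean-price shift $\hat{V}_t$ propagates through the nonlinear argument of $f$, through the $r_D$-linear terms, and through the three jump-to-default intensity terms. In particular, the reduction of the clean-price BSDE contributes its own $h_1^{\Qxx}$-weighted term, so that in the reduced $V$-driver the coefficient multiplying $h_1^{\Qxx}$ is the full jump $L_1-\bar{u}$ of $V$ at the reference-entity default (reflecting $V_{\tau_1}=\hat{V}_{\tau_1}=L_1$) rather than just $-\bar{u}$. Once this is absorbed and the $f\leftrightarrow\tilde{f}$ identity together with the relations $\theta_\bullet=\hat{V}+\tilde{\theta}_\bullet$ are used, the computation collapses and every well-posedness claim is inherited from Proposition~\ref{thm:reduction} and the classical solvability of~\eqref{eq:BSDE-hatV}.
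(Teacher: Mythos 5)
Your proposal is correct and takes essentially the same route as the paper, which justifies the corollary by Proposition~\ref{thm:reduction} together with the definition $\XVA = V - \gamma\hat V$ (so $V=\XVA+\hat V$, $Z^1=\tilde Z^1+\hat Z^1$ with $\hat Z^1=L_1-\hat V$); the driver identity, terminal-condition matching and reduction you spell out are precisely the content of that ``follows from the definition of XVA'' step. Your bookkeeping remark is also the internally consistent one: adding the reduced clean-price ODE to \eqref{eq:reduced} forces the reference-entity jump term $h_1^{\Qxx}\bigl(L_1-\bar u\bigr)$ in the driver for $\bar U=\check U+\hat V$, so the $-h_1^{\Qxx}\bar u$ shown in the displayed $g$ of the corollary must be understood as including that $h_1^{\Qxx}L_1$ contribution.
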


Using the above representation, we can provide explicit representations for the replication strategies of the XVA. We will use the tilde symbol to denote these replicating strategies (e.g., $\tilde{\xi}^1, \tilde{\xi}^{I}, \tilde{\xi}^{C}$ denote, respectively, the number of shares of the defaultable accounts associated with the reference entity, trader and her counterparty) so to distinguish them from the strategies used to replicate the CDS price process. Using the martingale representation theorem for the probability space $(\Omega, \mathcal{F}, \mathbb{F}, \Qxx)$ and account price dynamics, we obtain that
\begin{equation}\label{eq:xi1}
\tilde{\xi}^{1}_t = - \frac{\tilde Z_t^{1}}{B_{t-}^1} \ind_{\{t < \tau \wedge \tau^1\}} = \frac{\check{U}_t}{B_{t-}^1} \ind_{\{t < \tau \wedge \tau^1\}}.
\end{equation}
Invoking Theorem \ref{thm:reduction} along with equations~\eqref{eq:Zetas} and~\eqref{eq:hats} we conclude that
\begin{align}
\tilde{\xi}_{t}^{I} &= - \frac{\tilde Z_t^{I}}{B_{t-}^I}  \ind_{\{t \leq \tau \wedge \tau^1 \}} = \frac{L^I  (\hat{V}_t - M_{t-})^{+} + \check{U}_t}{B_{t-}^I} \ind_{\{t \leq \tau \wedge \tau^1 \}}, \label{eq:xii}\\
\tilde{\xi}_{t}^{C} &= - \frac{\tilde Z_t^{C}}{B_{t-}^C}  \ind_{\{t \leq \tau \wedge \tau^1 \}} = \frac{ -L^C (\hat{V}_t - M_{t-})^{-}+ \check{U}_t^{\buysell}}{B_{t-}^C}\ind_{\{t \leq \tau \wedge \tau^1\}}, \label{eq:xic}
\end{align}
and from equations~\eqref{eq:collrel} and~\eqref{eq:rulecoll} it follows that
\begin{equation}\label{eq:psim}
\tilde\psi_t^{m}=  - \frac{M_{t-}}{B_t^{r_m}}  \ind_{\{\tau \wedge \tau^1 > t\}}.
\end{equation}
Finally, using Eq.~\eqref{eq:funding} and the identity $V_t=XVA_t^{\buysell} + \hat{V}_t$, we obtain
\begin{align}
\tilde{\xi}_t^{f} &=  \frac{V_t - \hat V_t- \tilde{\xi}_t^1 B^1_{t-} - \tilde{\xi}_t^I B^I_{t-} - \tilde{\xi}_t^C B^C_{t-}  - M_{t-}}{B_t^{r_f}}  \ind_{\{\tau \wedge \tau^1 > t\}}  \nonumber\\
&=\frac{ - 2 \check{U}_t + L^C (\hat{V}_t - M_{t-})^{-}- L^I  (\hat{V}_t -M_{t-})^{+} - M_{t-}}{B_t^{r_f}}  \ind_{\{\tau \wedge \tau^1 > t\}}, \label{eq:xif}
\end{align}
where in the last equality above, we have used the definition of $\XVA$ given in Eq.~\eqref{eq:XVAdef} together with the identity $\XVA_t = \check{U}_t$ on $\{\tau \wedge \tau^1> t\}$.

Note that the replicating strategies are specified only in terms of account prices and are thus known to the investor at time $t$. However, they neither give information on the value of the XVA process nor on the evolution of the replicating strategy, because the default intensity process $h^{C,\Qx}$ is unknown to the investor.

\subsection{Comparison Pricing and Super-replicating Strategies}\label{sec:ODE}
{This section develops a comparison principle for the reduced BSDE \eqref{eq:reduced} solved by the XVA process. We subsequently use this result to construct a super-replicating strategy for the XVA.}

The BSDE given in Eq.~\eqref{eq:reduced} is effectively an ODE. To maintain consistency with the theory of ODEs, we switch the direction of time by defining $\hat v_t := \hat V_{T-t}$ and $m_t: = M_{T-t}.$ It follows from Eq.~\eqref{eq:creditswap} that $\hat v$ is bounded, i.e., $\abs{\hat v}\le M_0$ for some constant $M_0$. Similarly, set $\check u_t = \check U_{T-t}$. {Applying the reduction technique of \cite{CrepeyRed} to Eq.~\eqref{eq:BSDE-hatV}, similarly to how it was done above in Proposition \ref{thm:reduction}, we get
\begin{align}
\partial_t \hat v =-(\eta^1 - h^{1,\Qxx} L^1) - (h^{1,\Qxx}+r_D) \hat v,\label{eq:hatV-ODE}\qquad \qquad
\hat v_0 =0.
\end{align}
}
We may then rewrite Eq.~\eqref{eq:reduced} as
\begin{align}
\partial_t \check u = \check{g}(t,\check u; \hat v, m),\qquad \qquad
\check u_0 =0.
\label{eq:creditswap1}
\end{align}
The functions  $h^{1,\Qxx}_t, h^{I,\Qxx}_t$, $h^{C,\Qxx}_t$, $t\in[0,T],$ are all piecewise (deterministic) continuous. The following theorem provides an existence and uniqueness result.

\begin{proposition}\label{thm:existence}
There exists a unique (piecewise) classical solution to the {system of} ODEs \eqref{eq:hatV-ODE}--\eqref{eq:creditswap1}.
\end{proposition}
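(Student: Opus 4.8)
The plan is to handle the two equations \eqref{eq:hatV-ODE} and \eqref{eq:creditswap1} sequentially and pathwise (for each realization of the piecewise-deterministic intensity paths $h_1^\Q,h_I^\Q,h_C^\Q$), exploiting that the first equation is a scalar \emph{linear} ODE that does not involve $\check u$, and that its solution then enters the second equation only as a bounded, piecewise continuous coefficient. First I would dispose of \eqref{eq:hatV-ODE}: since $h_1^\Q$ is piecewise continuous and uniformly bounded (Section~\ref{sec:model}), variation of constants yields
\[
\hat v(t) = -\int_0^t e^{-\int_s^t (h_1^\Q(w)+r_D)\,dw}\,\bigl(S_1 - h_1^\Q(s) L_1\bigr)\,ds,
\]
which is the unique absolutely continuous solution of \eqref{eq:hatV-ODE}, is $C^1$ on every interval of continuity of $h_1^\Q$, and satisfies $|\hat v|\le M_0$ (the bound already recorded after \eqref{eq:creditswap}); uniqueness is immediate from Gronwall. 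The collateral $m$ is likewise bounded and piecewise continuous because $\hat v$ is.

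Next I would show that the driver $\check g(t,\cdot;\hat v,m)$ of \eqref{eq:g+} is globally Lipschitz in its second argument with a \emph{deterministic} constant, uniformly in $t$. Substituting $\tilde z^1=-\check u$, $\tilde z^I=\tilde\theta_I(\hat v,m)-\check u$, $\tilde z^C=\tilde\theta_C(\hat v,m)-\check u$ into $\tilde f$, every occurrence of $\check u$ either enters linearly (through $-h_1^\Q\check u$, $h_I^\Q(\tilde\theta_I-\check u)$, $h_C^\Q(\tilde\theta_C-\check u)$, and $r_D(\tilde z^1+\tilde z^I+\tilde z^C)$) or enters through the maps $x\mapsto x^{\pm}$ applied to an argument that is affine in $\check u$ with slope $-2$ (namely $\tilde\theta_I+\tilde\theta_C-2\check u+L_1-m$). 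Since $x\mapsto x^{\pm}$ is $1$-Lipschitz and $h_1^\Q,h_I^\Q,h_C^\Q,\rfp,\rfm,\rcp,\rcm,r_D$ are all bounded, $\check g$ is Lipschitz in $\check u$ with a constant $K$ depending only on these bounds; moreover $\sup_t|\check g(t,0)|<\infty$ because $\hat v$ and $m$ are bounded. Finally, $\check g(\cdot,\check u)$ is piecewise continuous in $t$, with break points only where $h_1^\Q,h_I^\Q,h_C^\Q,\hat v$ or $m$ jump; let $0=t_0<t_1<\dots<t_n=T$ be a common refinement of these, which is finite (at most $N+2$ break points, since the intensities can jump only at default events).

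Then I would run the textbook Cauchy--Lipschitz (Picard--Lindel\"of) argument piece by piece. On $[t_0,t_1]$, with initial datum $\check u(0)=0$, the uniform Lipschitz bound gives a unique solution, and the linear growth $|\check g(t,\check u)|\le \sup_t|\check g(t,0)|+K|\check u|$ together with Gronwall rules out finite-time blow-up, so this solution exists on all of $[t_0,t_1]$ and is $C^1$ on $(t_0,t_1)$. Taking $\check u(t_1)$ as the new initial datum and repeating on $[t_1,t_2]$, and so on through the $n$ subintervals, produces a function $\check u$ on $[0,T]$ that is continuous, $C^1$ on each $(t_k,t_{k+1})$, and solves \eqref{eq:creditswap1} there --- precisely a piecewise classical solution. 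Uniqueness propagates the same way: two solutions agree on $[t_0,t_1]$ by Gronwall, hence have the same value at $t_1$, hence agree on $[t_1,t_2]$, and so on. Combined with the first paragraph, this gives existence and uniqueness for the full system, as claimed.

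I expect the only mildly delicate point to be the second paragraph --- verifying that the nonsmooth pieces of $\tilde f$ (the $(\cdot)^{\pm}$ terms and the $-r_D L_1$ constant) leave the Lipschitz-in-$\check u$ property intact with a constant independent of $\omega$ and $t$, and that the $t$-dependence is genuinely piecewise continuous with only finitely many break points. Once this bookkeeping is in place, everything reduces to the standard ODE existence/uniqueness theorem applied on finitely many subintervals, so I do not anticipate any substantive obstacle.
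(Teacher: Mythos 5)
Your proposal is correct and follows essentially the same route as the paper: establish that the driver $\check g$ is Lipschitz in $\check u$ with bounded value at $\check u = 0$, then apply Picard--Lindel\"of together with a Gronwall bound, handling the finitely many discontinuities of the intensities piecewise. The only differences are cosmetic: you solve the linear equation for $\hat v$ explicitly by variation of constants instead of citing the same theorems, and by using the global-Lipschitz form of Picard--Lindel\"of you bypass the paper's continuation argument (local existence on a bounded rectangle plus Corollary II.3.2 of Hartman to rule out blow-up, which the paper employs to address the differentiability requirement in Gronwall's inequality).
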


The following comparison principle, whose proof is reported in the Appendix, will be used to find the super-replicating XVA.
%
%

\begin{theorem}[Comparison Theorem]
\label{thm:comp}
Assume that there exists $\overline \mu^C \ge \underline \mu^C >r_D$ such that
$\overline \mu^C \ge  \mu^{C,\Qxx} \ge \underline \mu^C$ and let $\check u$ be the solution of ODE~\eqref{eq:creditswap1}. Let
\begin{align}
{(\mu^C)^*}(\hat{v}, m,  \check u) &= {\overline \mu^C} \ind_{\{ \tilde{\theta} ^{C}(\hat{v},m)-\check u\ge 0\}} + {\underline \mu^C} \ind_{\{ \tilde{\theta} ^{C}(\hat{v},m)-\check u \le 0\}},\\
{{(\mu^C)}_*}(\hat{v}, m,  \check u) &= {\underline \mu^C} \ind_{\{ \tilde{\theta} ^{C}(\hat{v},m)-\check u\ge 0\}} + {\overline \mu^C} \ind_{\{ \tilde{\theta} ^{C}(\hat{v},m)-\check u \le 0\}},
\end{align}
and define the drivers ${g}^{*}$ and ${g}_{*}$ by plugging the default intensities $(\mu^{C,\Qxx})^{*}$ and $(\mu^{C,\Qxx})_*$ into the expression of $\check g$ given by~\eqref{eq:g+}, i.e.,
\begin{align}
{g}^{*}\bigl(t,\check u; \hat{v},m\bigr) &= h^{I,\Qxx}\bigl(\tilde{\theta} ^{I}(\hat{v}_t,m_t)-\check{u}_t\bigr) + {({ (\mu^C)^*}(\hat v_t,m_t,\check u_t) - r_D)} \bigl(\tilde{\theta} ^{C}(\hat{v}_t,m_t)-\check{u}_t\bigr) - h^{1,\Qxx} \check{u}_t\nonumber \\
&+ \tilde{f}\bigl(t, \check{u}_t, -\check{u}_t, \tilde{\theta}^{I}( \hat{v}_t,m_t)-\check{u}_t, \tilde{\theta}^{C}( \hat{v}_t,m_t) -\check{u}_t;\hat{v},m\bigr),\nonumber\\
{g}_{*}\bigl(t,\check u; \hat{v},m\bigr) &= h^{I,\Qxx}\bigl(\tilde{\theta} ^{I}(\hat{v}_t,m_t)-\check{u}_t\bigr) +{({ (\mu^C)_*}(\hat v_t,m_t,\check u_t) - r_D)} \bigl(\tilde{\theta} ^{C}(\hat{v}_t,m_t)-\check{u}_t\bigr) - h^{1,\Qxx} \check{u}_t \nonumber\\
&+ \tilde{f}\bigl(t, \check{u}_t, -\check{u}_t, \tilde{\theta}^{I}( \hat{v}_t,m_t)-\check{u}_t, \tilde{\theta}^{C}( \hat{v}_t,m_t) -\check{u}_t;\hat{v},m\bigr).\nonumber
\end{align}
Let $\check{u}^{*}$ and $\check{u}_{*}$ be the solutions to ODE~\eqref{eq:creditswap1} where $\check{g}$ is replaced by ${g}^{*}$ and ${g}_{*}$ respectively, i.e.,
\begin{align}
 \partial_t \check{u}^{*} &= g^{*}(t,\check{u}^{*}; \hat v,m), \qquad \check{u}^{*}_0 =0, \nonumber \\
 \partial_t \check{u}_{*} &= g_{*}(t,\check{u}_{*}; \hat v,m), \qquad {\check{u}_{*}}_0 =0.
\label{eq:creditswap2}
\end{align}
Then $\check{u}_{*} \le \check u \le \check{u}^{*}.$
\end{theorem}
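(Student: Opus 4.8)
The plan is to reduce the statement to a standard ODE comparison argument by showing that the driver $\check{g}$ is, at each time $t$ and each value $\check{u}$, sandwiched between $g_*$ and $g^*$. The crucial observation is that $\check{g}$, $g^*$, and $g_*$ differ only in the coefficient multiplying the term $\bigl(\tilde{\theta}_C(\hat{v}_t,m_t)-\check{u}\bigr)$: in $\check{g}$ this coefficient is $h_C^\Qxx = \mu_C^\Qxx - r_D$, while in $g^*$ (resp.\ $g_*$) it is $(\mu_C)^*(\hat{v}_t,m_t,\check{u}) - r_D$ (resp.\ $(\mu_C)_*(\hat{v}_t,m_t,\check{u}) - r_D$). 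The quantity $\mu_C^\Qxx$ satisfies $\underline\mu_C \le \mu_C^\Qxx \le \overline\mu_C$ by hypothesis. Hence, when $\tilde{\theta}_C(\hat{v}_t,m_t)-\check{u} \ge 0$ the product $(\mu_C^\Qxx-r_D)\bigl(\tilde{\theta}_C(\hat{v}_t,m_t)-\check{u}\bigr)$ is maximized by taking $\mu_C^\Qxx = \overline\mu_C$ and minimized by taking $\mu_C^\Qxx = \underline\mu_C$, and the reverse holds when $\tilde{\theta}_C(\hat{v}_t,m_t)-\check{u} \le 0$; this is exactly the definition of $(\mu_C)^*$ and $(\mu_C)_*$. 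Therefore
\[
g_*\bigl(t,\check{u};\hat{v},m\bigr) \le \check{g}\bigl(t,\check{u};\hat{v},m\bigr) \le g^*\bigl(t,\check{u};\hat{v},m\bigr)
\]
pointwise in $(t,\check{u})$, since all the remaining terms in the three drivers coincide.

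Next I would invoke a comparison principle for scalar ODEs. Both $\check{u}$ and $\check{u}^*$ solve initial value problems with the same initial condition $\check{u}(0)=\check{u}^*(0)=0$, and on $[0,T]$ we have $\partial_t\check{u} = \check{g}(t,\check{u};\hat{v},m) \le g^*(t,\check{u};\hat{v},m)$ while $\partial_t\check{u}^* = g^*(t,\check{u}^*;\hat{v},m)$. To conclude $\check{u}\le\check{u}^*$ it suffices that $g^*(t,\cdot;\hat{v},m)$ be (one-sided) Lipschitz, or more simply locally Lipschitz, in its second argument uniformly on compacts; this follows because $\tilde{f}$ is piecewise affine in all of its arguments (it is built from $r^\pm$-weighted positive and negative parts, which are globally Lipschitz with constant $\max(\rfp,\rfm)$), because $h_I^\Qxx$, $h_1^\Qxx$, and $(\mu_C)^* - r_D$ are uniformly bounded, and because $(\mu_C)^*$ as a function of $\check{u}$ only switches value at the single point where $\tilde{\theta}_C(\hat{v}_t,m_t)-\check{u}=0$ — and at that switching point the two candidate values of the product $\bigl((\mu_C)^*-r_D\bigr)\bigl(\tilde{\theta}_C-\check{u}\bigr)$ agree (the factor $\tilde{\theta}_C-\check{u}$ vanishes), so $g^*$ is in fact continuous in $\check{u}$. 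Combined with its piecewise-continuous (in $t$) structure and the existence/uniqueness statement of Proposition~\ref{thm:existence}, this gives a well-posed problem to which the standard comparison lemma applies, yielding $\check{u}\le\check{u}^*$ on $[0,T]$. The inequality $\check{u}_*\le\check{u}$ is obtained symmetrically.

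The main technical obstacle I anticipate is the lack of smoothness: the indicator functions in $(\mu_C)^*$ and $(\mu_C)_*$, together with the positive/negative parts in $\tilde{f}$ and in $\tilde{\theta}_I,\tilde{\theta}_C$, mean the drivers are only piecewise continuous in $t$ and merely Lipschitz (not $C^1$) in $\check{u}$. I would handle this by working on each deterministic subinterval on which $h_1^\Qxx,h_I^\Qxx,h_C^\Qxx$ are continuous, applying the comparison argument there, and then concatenating across the (finitely many) jump times — using that the ODE solutions are continuous across those times since the right-hand side only has jumps, not the solution. The one point that deserves care is verifying that the "bang-bang" selectors $(\mu_C)^*,(\mu_C)_*$ really do realize the extremal values of the relevant product for \emph{every} admissible process $\mu_C^\Qxx$, including those that depend on $t$ and $\check{u}_t$ in an arbitrary $\mathbb{F}$-predictable way; but since the bound is pointwise in $(t,\check{u})$ this causes no difficulty — the selector is chosen after conditioning on the current state, exactly as in the uncertain-volatility arguments of \cite{ALP}. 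A minor additional check is that the term $-r_DL_1$ and the $S_1$-type term inside $\tilde{f}$, as well as the $-h_1^\Qxx\check{u}$ term, are identical across $\check{g}$, $g^*$, $g_*$, so they drop out of the comparison; this is immediate from inspection of the formulas.
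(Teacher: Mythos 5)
Your proposal is correct and follows essentially the same route as the paper: the heart of both arguments is the pointwise ordering $g_* \le \check g \le g^*$, coming from the fact that the selectors $(\mu_C)^*$ and $(\mu_C)_*$ extremize the product $(\mu_C^{\Qxx}-r_D)\bigl(\tilde{\theta}_C(\hat v,m)-\check u\bigr)$ over $\mu_C^{\Qxx}\in[\underline\mu_C,\overline\mu_C]$, followed by a scalar ODE comparison with common initial condition zero. The only difference is in how the comparison step is executed: the paper argues by contradiction at the first crossing time $T_1$ using the derivative inequality there, whereas you invoke the standard Lipschitz comparison lemma on each continuity interval of the intensities (with an explicit check that $g^*$ stays Lipschitz in $\check u$ across the switching point of the indicator) — substantively the same argument, and your Lipschitz verification is, if anything, the more careful rendering of it.
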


{The valuation process calculated based on the extremes of the uncertainty interval $\bigl(\mu^C\bigr)^*$ and $\bigl(\mu^C\bigr)_*$ are denoted, by $V^{(\mu^C)^*}$ and $V^{(\mu^C)_*}$ respectively.}

The ODEs~\eqref{eq:creditswap2} may be understood as the credit risk counterparts of the \textit{Black-Scholes-Barenblatt} PDEs for the uncertain volatility model; see \cite{ALP}. The main difference between our study and theirs is that, in their paper, the uncertainty comes from the volatility which appears as a second order term in the differential operator. Hence, the indicator function specifying the value of volatility to use in the pricing formula depends on the second order derivative of the option price with respect to the underlying, i.e., the Gamma of the option. In our setting, the indicator function specifying the value of counterparty's {account} rate to use depends on the relation between the current value of the XVA replication and the close-out value. The value of the replicating trade jumps to the close-out value when the counterparty defaults. If the size of this jump is positive, i.e., the close-out value of the transaction is higher, then the trader needs to be short  the counterparty's {account} to replicate this jump-to-default risk. As the trader wants to consider the worst possible scenario for her trade, she would choose the largest value of the {counterparty account rate $\overline{\mu}^C$} because this yields the lowest rate of return on her short position. Vice-versa, if the jump is negative, the trader {needs to be long} the counterparty defaultable account. Consequently, the trader would use the smaller counterparty's account rate $\underline{\mu}^C$ to deal with the worst possible replication scenario.

Our objective is to provide a tight upper bound for the XVA price process, because this implies a tight super-replicating price. We connect such a super-replicating price to the $\rXVA$ defined in Eq.~\eqref{eq:rXVAdef}. Define the process
$\check{U}_t^{*} := \check{u}^{*}_{T-t}$. {The following theorem shows that the $\rXVA$ coincides with the super-replicating price, and additionally specifies the super-replicating strategy. The latter is obtained by taking the strategy given in \eqref{eq:xi1}--\eqref{eq:xif} and using the super-replicating price $\check{U}^{*}$ in place of $\check{U}$. }

\begin{theorem}\label{thm:robust}
	The robust $\XVA$ admits the explicit representation given by
\begin{equation}\label{eq:rXVA}
\rXVA_t  = \check{U}_t^{*} \ind_{\{t<\tau \wedge \tau^1 \}} + \Bigl( \tilde{\theta} ^{C}(\hat{V}_{\tau^C}, M_{\tau^C-}) \ind_{\{\tau^C < \tau^1 \wedge \tau^I\wedge T\} }  + \tilde{\theta}^{I}(\hat{V}_{\tau^I}, M_{\tau^I -}) \ind_{\{\tau^I < \tau^1 \wedge \tau^C\wedge T\}} \Bigr) \ind_{\{t \geq \tau \wedge \tau^1\}},
\end{equation}

and the corresponding super-replicating strategies for $\rXVA$ are given by
\begin{align}
\xi^{1,*}_t & = \frac{\check{U}_t^*}{B_{t-}^1} \ind_{\{t < \tau \wedge \tau^1\}}, \nonumber \\
\xi_{t}^{I,*} &= \frac{L^I (\hat{V}_t - M_{t-})^{+} + \check{U}_t^*}{B_{t-}^I} \ind_{\{t \leq \tau \wedge \tau^1 \}}, \nonumber\\
\xi_{t}^{C,*} &= \frac{ -L^C (\hat{V}_t - M_{t-})^{-}+ \check{U}_t^{*}}{B_{t-}^C}\ind_{\{t \leq \tau \wedge \tau^1\}}, \nonumber\\
\psi_t^{m,*}&=  - \frac{M_{t-}}{B_t^{r_m}}  \ind_{\{t < \tau \wedge \tau^1 \}}, \nonumber\\
\xi_t^{f,*} &= \frac{- 2 \check{U}_t^* + L^C (\hat{V}_t - M_{t-})^{-}- L^I  (\hat{V}_t -M_{t-})^{+} - M_{t-}}{B_t^{r_f}}  \ind_{\{ t < \tau \wedge \tau^1 \}}.
\label{eq:rob_strat1}
\end{align}
\end{theorem}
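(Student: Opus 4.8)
The plan is to move to the time-reversed ODE picture of Section~\ref{sec:ODE}, identify $\rXVA$ with the upper envelope $\check{U}^{*}_t := \check{u}^{*}(T-t)$ produced by the Comparison Theorem, and then recognize the strategy in~\eqref{eq:rob_strat1} as a super-replicating one. First, for an arbitrary admissible counterparty rate $\mu$ --- an $\mathbb{F}$-predictable, piecewise continuous process with values in $[\underline{\mu}_C,\overline{\mu}_C]$ --- I would apply Corollary~\ref{cor:BSDE-V-red} with $h_C^{\Qxx}=\mu-r_D$ to obtain $V^{\mu}_t-\hat{V}_t=\XVA^{\mu}_t=\check{U}^{\mu}_t$ on $\{t<\tau\wedge\tau_1\}$, where $\check{U}^{\mu}_t:=\check{u}^{\mu}(T-t)$ and $\check{u}^{\mu}$ solves~\eqref{eq:creditswap1} with $\mu_C^{\Qxx}$ replaced by $\mu$. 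Theorem~\ref{thm:comp} (applied pathwise, since its hypothesis is a pathwise bound) then gives $\check{u}^{\mu}\le\check{u}^{*}$ on $[0,T]$ for every such $\mu$, whence $\essup_{\mu}\XVA^{\mu}_t\le\check{U}^{*}_t$ on $\{t<\tau\wedge\tau_1\}$.

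For the reverse inequality I would exhibit the feedback rate $\mu^{*}(t):=(\mu_C)^{*}\bigl(\hat{V}_t,M_{t-},\check{U}^{*}_t\bigr)$, with $(\mu_C)^{*}$ the selector of Theorem~\ref{thm:comp}. Since $\hat{V}$, $M$ and $\check{U}^{*}$ are $\mathbb{F}$-adapted and piecewise continuous, $\mu^{*}$ is admissible (fixing the value $\overline{\mu}_C$ on the switching set $\{\tilde{\theta}_C(\hat{V}_t,M_{t-})-\check{U}^{*}_t=0\}$ for definiteness). The algebraic identity $(\mu_C)^{*}(\hat{v},m,\check{u})\bigl(\tilde{\theta}_C(\hat{v},m)-\check{u}\bigr)=\overline{\mu}_C\bigl(\tilde{\theta}_C(\hat{v},m)-\check{u}\bigr)^{+}-\underline{\mu}_C\bigl(\tilde{\theta}_C(\hat{v},m)-\check{u}\bigr)^{-}$ has two consequences: substituting $\mu^{*}$ into $\check{g}$ from~\eqref{eq:g+} reproduces the driver $g^{*}$ verbatim, so $\check{u}^{\mu^{*}}=\check{u}^{*}$ and hence $\check{U}^{\mu^{*}}=\check{U}^{*}$; and $g^{*}$ is Lipschitz in its $\check{u}$-slot, so $\check{u}^{*}$ is a genuine (piecewise classical) solution by the argument behind Proposition~\ref{thm:existence}. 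Together with the previous paragraph this yields $\essup_{\mu}\XVA^{\mu}_t=\check{U}^{*}_t$ on $\{t<\tau\wedge\tau_1\}$. Substituting into the definition~\eqref{eq:rXVAdef} and matching the remaining terms on $\{t\ge\tau\wedge\tau_1\}$ with the closeout expressions of Corollary~\ref{cor:BSDE-V-red} --- on $\{\tau_1\le t<\tau\}$ both the clean value and the $\XVA$ vanish, while on $\{t\ge\tau,\,\tau<\tau_1\}$ one has $V_t-\hat{V}_t=\tilde{\theta}_C$ or $\tilde{\theta}_I$ depending on which of $I,C$ defaulted first --- gives the representation~\eqref{eq:rXVA}.

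For the strategy I would observe that~\eqref{eq:rob_strat1} is exactly the replicating strategy of Corollary~\ref{cor:BSDE-V-red} attached to the rate $\mu^{*}$, with $\check{U}$ replaced by $\check{U}^{*}$; by that corollary it is admissible, self-financing, and under the valuation measure $\Qxx^{\mu^{*}}$ it generates the wealth process $V^{*}_t:=\check{U}^{*}_t+\hat{V}_t$ on $[0,\tau\wedge\tau_1)$ with the prescribed closeout terminal value. It remains to check super-replication under the \emph{true} (unknown) rate $\mu\in[\underline{\mu}_C,\overline{\mu}_C]$: let $V$ be the self-financing wealth generated by~\eqref{eq:rob_strat1} from the same initial value, with dynamics~\eqref{eq:vtlast2} read under $\Qxx^{\mu}$ and with the \emph{same} coefficients $Z^{1,*},Z^{I,*},Z^{C,*}$ (which are functions of the observable account prices only). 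Since $\varpi^{C,\Qxx^{\mu}}$ and $\varpi^{C,\Qxx^{\mu^{*}}}$ differ solely in their compensators, by $(1-H_C(t))(\mu^{*}(t)-\mu(t))\,dt$, subtracting the dynamics of $V$ and $V^{*}$ and using the Lipschitz dependence of $f$ on its wealth slot gives $d(V_t-V^{*}_t)=\lambda_t(V_t-V^{*}_t)\,dt+Z^{C,*}_t\bigl(1-H_C(t)\bigr)\bigl(\mu^{*}(t)-\mu(t)\bigr)\,dt$ for a bounded process $\lambda$, with $V_0-V^{*}_0=0$. Because $Z^{C,*}_t=\tilde{\theta}_C(\hat{V}_t,M_{t-})-\check{U}^{*}_t$ while $\mu^{*}(t)=\overline{\mu}_C\ge\mu(t)$ precisely when this quantity is $\ge0$ and $\mu^{*}(t)=\underline{\mu}_C\le\mu(t)$ when it is $\le0$, the additive forcing term is pointwise nonnegative; variation of constants then forces $V_t\ge V^{*}_t$ on $[0,\tau\wedge\tau_1]$, in particular $V_{\tau\wedge\tau_1}\ge\theta$, so the strategy super-replicates (with equality when $\mu\equiv\mu^{*}$). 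Combined with the previous step this identifies $\rXVA_0=\check{U}^{*}_0$ with the cost of the super-replicating strategy~\eqref{eq:rob_strat1}.

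I expect the main obstacle to be these last two steps, where the feedback selector $\mu^{*}$ is defined through the very solution $\check{u}^{*}$ it produces: one must verify simultaneously that $\mu^{*}$ is admissible, that it regenerates $g^{*}$, and that the residual $Z^{C,*}(\mu^{*}-\mu)$ carries the correct sign on all of $[0,T]$ (including on the switching set $\{Z^{C,*}=0\}$, where it simply vanishes), while also handling the measure change carefully --- the true rate $\mu$ induces its own valuation measure $\Qxx^{\mu}$, under which the wealth dynamics~\eqref{eq:vtlast2} must be read, whereas $V^{*}$ solves the BSDE under $\Qxx^{\mu^{*}}$, and the two differ exactly by the $H_C$-compensator term exploited above. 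The remaining bookkeeping needed to reconcile~\eqref{eq:rXVAdef} (phrased with $\tau$) and~\eqref{eq:rXVA} (phrased with $\tau\wedge\tau_1$) across the various default scenarios is routine once these points are in place.
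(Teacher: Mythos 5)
Your overall architecture matches the paper's: both halves hinge on the same two ingredients, namely the comparison result of Theorem~\ref{thm:comp} to bound every $\XVA^{\mu}$ by $\check{U}^{*}$, and the sign inequality $\bigl((\mu_C)^{*}-\mu_C\bigr)\bigl(\tilde{\theta}_C(\hat{V}_t,M_t)-\check{U}^{*}_t\bigr)\ge 0$ to verify that the strategy \eqref{eq:rob_strat1} super-replicates. Where you differ is in execution. For the identification $\essup_{\mu}\XVA^{\mu}=\check{U}^{*}$ the paper proves the upper bound via Theorem~\ref{thm:comp} and then invokes upward-directedness of the family $(\XVA^{\mu})$ together with a monotone-limit/null-set argument, while only gesturing at attainment when it says the right-hand side of \eqref{eq:rXVA} ``represents a specific $\mathbb{F}$-predictable intensity process''; you instead exhibit the feedback rate $\mu^{*}$, check admissibility and that it regenerates $g^{*}$ (so $\check{u}^{\mu^{*}}=\check{u}^{*}$ by Lipschitz uniqueness as in Proposition~\ref{thm:existence}), and conclude by attainment plus the minimality property of the essential supremum. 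That is a legitimate and arguably cleaner route, and it makes the paper's implicit attainment step explicit. Recasting the super-replication step as a differential inequality for $V-V^{*}$ with variation of constants is likewise a reasonable repackaging of the paper's direct drift comparison \eqref{eq:U*1}--\eqref{eq:U*3}.

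There is, however, one concrete gap in your second half. The strategy \eqref{eq:rob_strat1} replicates the $\XVA$, not the full wealth: its funding leg $\xi^{f,*}$ (cf.\ \eqref{eq:xif}) does not contain $L_1$, whereas the driver governing $\check{U}^{*}$ applies $\rfp/\rfm$ to a funding position that does contain $+L_1$ (equivalently, the clean leg $\hat V$ accrues at $r_D$ while the private position is financed at $r_f$). Hence the drift generated by \eqref{eq:rob_strat1} and the drift of $\check{U}^{*}$ differ not only by the compensator term $Z^{C,*}(\mu^{*}-\mu)\,dt$ and a term proportional to $V-V^{*}$, but also by a fixed offset that the paper books explicitly as the extra cash flow $\bigl(r_f(\xi_t^{f,*})-r_D\bigr)L_1\,dt$ in \eqref{eq:U*2}; without it, inequality \eqref{eq:superhedge-ineq} does not follow. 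This offset is neither of the form $\lambda_t(V_t-V^{*}_t)$ nor sign-definite on its own, so your difference equation $d(V-V^{*})=\lambda_t(V-V^{*})\,dt+Z^{C,*}(1-H_C)(\mu^{*}-\mu)\,dt$ does not close as written. Relatedly, feeding the coefficients $Z^{1,*},Z^{I,*},Z^{C,*}$ into \eqref{eq:vtlast2} is inconsistent with the stated positions, since \eqref{eq:vtlast2} is the full-wealth equation in which $Z^{1}$ corresponds to $L_1-\check{U}$, not $-\check{U}$. The repair is exactly the paper's: either run the comparison at the $\XVA$ level and credit the strategy with the cash flow \eqref{eq:U*2}, or augment \eqref{eq:rob_strat1} by the clean-price hedge before invoking \eqref{eq:vtlast2}. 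The rest of your argument (attainment, admissibility of $\mu^{*}$, the sign of the forcing term including on the switching set, and the bookkeeping reconciling \eqref{eq:rXVAdef} with \eqref{eq:rXVA} across default scenarios) is sound.
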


We notice that if we use the robust super-replicating strategies given in~\eqref{eq:rob_strat1} and start with an initial capital $\rXVA_0$, then there will be no tracking error in the sense of \cite{EJP}. In other words, the error committed for implementing the robust strategy $\bigl(\xi_t^{1,}, \xi_t^{I,}, \xi_t^{C,}, \xi_t^{f,},  \psi_t^{m,*}\bigr)$ in the real market ({where the return rate of the counterparty {account} is $\mu^C$}) instead of the robust market model (where the return rate of the counterparty  {account} is $(\mu^C)^*$) is zero. This may be understood as follows: Eq.~\eqref{eq:super-hedge-value} shows that the value of the super-replicating portfolio is always
$\rXVA$. However, until the earliest among the default time of the counterparty, investor, or maturity of the CDS contract, whichever comes first, the super-replicating portfolio keeps generating profits because the change in the value of the super-replicating portfolio is greater than the change in the value of the $\check{U}^{*}$, as shown in \eqref{eq:superhedge-ineq}. In other words, during a time interval $dt$, the investor pockets an extra cash
${\left ( (\mu^C)^{*}(\hat V_t,M_t,\check U^{*}_t) - \mu^C \right)}  \bigl(\tilde{\theta} ^{C}(\hat{V}_t, M_t)-\check{U}^{*}_t\bigr)dt$ at any time prior to the end of the replication strategy.

{The robust strategies depend only on the XVA price process and the {account} prices, and are independent of the default intensity $h^{C,\Qxx}$, the value of which is unknown to the investor.}

\begin{figure}[ht!]
	\begin{center}
	\includegraphics[scale=0.6]{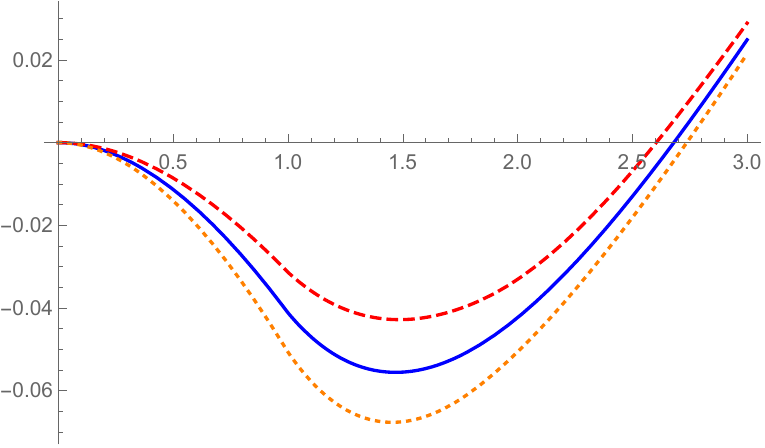}
	\hspace{20pt}
	\includegraphics[scale=0.6]{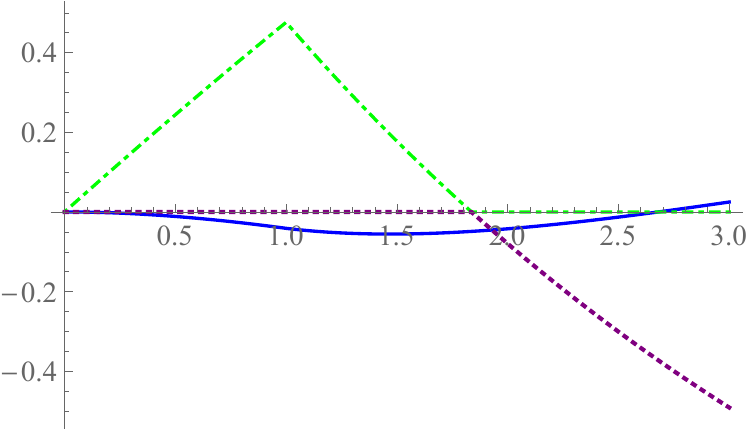}
	\end{center}
	\caption{We use the following  parameters: $r_f^{\pm} = r_D=0.001$, $\alpha=\beta=0$,  $T=3, L^I=L^C =0.5$, $\eta^1 = 2, h^1_t = 0.1 \ind_{\{ 0\le t <1 \}} +  0.3\ind_{\{ 1\le t <T \}} , L^1= 10, {\mu^I=0.2001},  {\overline{\mu}^C=0.2501}, {\underline{\mu}^C =0.1501}, {\mu^C = 0.2001}.$  Left panel: Plot of $\check u$ (solid), $\check u^{*}$ (dashed) and $\check u_{*}$ (dotted) as a function of time.  Right panel: Plot of $\tilde\theta^C(\hat v,0)$ (dash-dotted), $-\tilde\theta^I(\hat v,0)$ (dotted) and $\check u$ (solid) as a function of time. In the left panel, the default intensity at which we switch between the sub-and super-solutions is the crossing point of the dashed and dotted lines with the $x$-axis, that occurs at approximately $t=2.67$. In the right panel, the {third party valuation $\hat{v}$} becomes positive at approximately {$t=2.67$}.}
	\label{fig:comp}
\end{figure}
{In the case of zero margins, it follows directly from Eq.~\eqref{eq:hats} that the third party valuation $\hat v = \hat{v}^+ - \hat{v}^-$ may be expressed in terms of the closeout value, and given by
$-\frac{\tilde\theta^C(\hat v,0)}{L^C} - \frac{\tilde\theta^I(\hat v,0)}{L^I}$.} Hence, we deduce from
the right panel of Figure \ref{fig:comp} that the third party valuation is negative prior to $t=1.83$, and positive for $t > 1.83$. Figure \ref{fig:comp} also shows that the super-replicating strategy is non-trivial in the sense that it is not monotone in the default intensity. As it can be seen from the right panel of Figure \ref{fig:comp}, the quantity $\tilde\theta^C(\hat v,0) - \check u$ is zero at $t_0\approx 2.67$, non-negative for $t<t_0$, and strictly
negative for $t>t_0$. This implies that {$(\mu^C)^{*} = \overline \mu^C$ prior to time $t_0$ and while $\tilde\theta^C(\hat v,0) - \check u\ge0$, whereas after time $t_0$, $(\mu^C)^{*} = \underline \mu^C$, because we then have $\tilde\theta^C(\hat v,0) - \check u\le0$.}
In other words, prior to $t_0$ the trader will use the largest value of the {account} rate $\overline \mu^C$ for her super-replicating portfolio because the jump of the super-replicating portfolio to the close-out value when the counterparty defaults, given by $\tilde\theta^C(\hat v,0) - \check u$, is positive. After time $t_0$, the trader will choose the smallest value $\underline \mu^C$ of the account rate because this jump would be negative. This is directly visible from the right panel of Figure~\ref{fig:comp}, because the dash-dotted line dominates the solid one until time $t_0$, and after $t_0$ the situation is reversed. This analysis highlights a fundamental difference with respect to standard credit risk settings, that often ignore collateralization and close-out terms, or models for XVA in which collateralization and close-out value depend on the trader's valuation process $V$
itself as in \cite{NiRutFS}. In these cases, the price of the derivative is monotone in the default intensity, while in our setting the value of the super-replicating portfolio does not necessarily have this monotonicity property. This is due to the fact that the collateralization and closeout process are
exogenous, i.e., they depend on the external valuation $\hat{V}$ of the third party, rather than on the value $V$ of the super-replicating portfolio.


\subsection{Computation of Margins} \label{sec:margins}
We develop an explicit expression for the initial margins when the two parties trade $\gamma$ units of a single name credit default swap contract. Initial margins are determined using the value-at-risk criterion, and need to be computed under the physical measure $\Px$ as opposed to the valuation measure $\Qx$. By the definition of $\VaR$, {on the set $\{\tau\wedge\tau^1>t\}$ we have}
\begin{align}
IM_t(\gamma)&=\beta \VaR_{q}\Bigl (\gamma\hat V_{(t+\delta) \wedge T}-\gamma\hat V_t\, \vert \, \tau^{1} > t\Bigr)^{+} \nonumber \\
%
&= \beta\inf \Bigl\{ K \in \R_{>0} \colon \Px\bigl[\gamma \hat  V_{(t+\delta)\wedge T} + \gamma\hat V_t > -K\bigl\vert \tau^{1} >t \bigr] \ge 1-q \Bigr\}.
\label{eq:margcds}
\end{align}
Thus, differently from the variation margin $VM(\gamma) = \alpha \gamma \hat V_t\ind_{\{\tau \wedge \tau^{(N)} > t\}}$ that is linear in $\gamma$, the initial margin $IM(\gamma)$ {is only positively homogeneous in $\gamma$}. We will therefore distinguish the cases $\gamma = 1$ and $\gamma = -1$. Note first that
\begin{align}
\gamma\left(\hat V_{(t+\delta)\wedge T} - \hat V_t\right) =-\gamma \left\{ \begin{array}{ll}
\eta^1 \left((t+\delta)\wedge T - t\right) &\text{ if } \tau^{1}  \ge (t+\delta)\wedge T,\\
-L^1 + \eta^1 \left(\tau^{1} - t\right) &\text{otherwise}.
\end{array}
\right.
\end{align}
The case $\gamma=1$ is less frequently observed in practice. We typically expect $L^1 > \eta^1T$, as {a protection buyer is unlikely to pay more than what he would receive in the event of a default (notice that $\eta^1 T$ is the maximum payment the buyer would make).} In this case, the exposure of the protection seller to the protection buyer would be negative, resulting in negative $\VaR$. 
In the case $\gamma=-1$, we obtain
\begin{align}
 &\Px\big[- \hat V_{(t+\delta)\wedge T} + \hat V_t > -K\bigl\vert \tau^{1} >t \big] = \frac{\Px\big[ -\hat V_{(t+\delta)\wedge T\wedge\tau^{1}} + \hat V_{t\wedge \tau^{1}} > -K \big]}{\Px\left[\tau^{1} >t \right]}\nonumber\\
 &=\frac{ \int_t^{(t+\delta)\wedge T}(K-L^1 + \eta^1(u-t))^{+} e^{-h^{1,\Px}_u} du}{\int_t^\infty e^{-h^{1,\Px}_u} du}.\label{eq:VaR1}
\end{align}
Because the right hand side of Eq.~\eqref{eq:VaR1} is continuous and increasing in $K$, the inequality {specifying the probability event in}~\eqref{eq:margcds} that characterizes the initial margins becomes an equality, and thus the $\VaR$ can be numerically evaluated.

\begin{example}
Assume constant default intensities and $t< T-\delta$. To calculate the initial margin $IM_t^{\gamma}$ for $\gamma=-1$, we note that
\begin{align}
 \frac{ \int_t^{(t+\delta)\wedge T}(K-L^1 + \eta^1(u-t))^{+} e^{-h^{1,\Px}_u} du}{\int_t^\infty e^{-h^{1,\Px}_u} du}=1-e^{-h^{1,\Px} \bigl(\frac{L^1-K}{\eta^1} \bigr)\wedge\delta } .
\end{align}
Therefore, the value of $K$ solving the above equation, i.e., the initial margin, is explicitly given by
\begin{align}
IM_t(-1)=\left\{\begin{array}{ll}
 \beta \Bigl(L^1 + \eta^1 \frac{\log q}{h^{1,\Px}}\Bigr)  & \text{ if } q> e^{-h^{1,\Px}\delta},\\
 0 &  \text{otherwise}.
 \end{array}\label{eq:IM-ex}
 \right.
\end{align}
{The initial margin formula \eqref{eq:IM-ex} has a direct economic interpretation. First, we notice that the term multiplying the spread $\eta^1$ is negative, because the value-at-risk level $q$ is between $0$ and $1$ and hence $\log q$ is negative. Thus, when the initial margin is nonzero, it is affine both in the loss rate and the CDS spread, increasing in the loss rate and decreasing in the CDS spread. This is intuitive: the protection seller increases the margin requirement if he has to make a larger payment at the credit event, and decreases the requirement if the spread premium received from the protection buyer is higher. Moreover, the required margin is increasing in the default intensity of the reference entity and, in the limiting case of an infinite default intensity,  it converges to the product $L^1 \beta$. This reflects economic intuition: as the credit event becomes more likely to occur, the protection seller asks the buyer to pay exactly the amount he would receive at the credit event. Finally, the value of initial margins is decreasing in the value-at-risk level and
 linearly increasing in the collateralization rate.}
\end{example}

\subsection{Credit Swap Portfolios}\label{sec:creditport}
In this section, we generalize the analysis conducted in the previous sections to a portfolio of single name credit default swaps, each referencing a different entity. To capture direct default contagion, we let the default intensities of surviving entities depend on past defaults. Throughout the section, we use the superscript $\mathcal{J}$, $\mathcal{J}\subset \{1, ..., N\}$, to denote the set of defaulted entities. 
For instance, ${V}^{{\cal J}}$ denotes the replicating process of the CDS portfolio where the defaulted reference entities are exactly those in the set {$\mathcal{J}$}. We denote by $\tau^{{\cal J}}$ the last default time of a reference entity in ${\cal J}$ (i.e., $\tau^{{\cal J}} = \max_{j \in \cal{J}} \tau^j$, {assuming $\max_{j \in \cal{J}} \tau^j < \tau^i, ~i\not\in \cal{J}$}), and for $i\not\in \cal{J}$ we use $\tau^{1,{\cal J}}$ to denote the default time of the $i$-th reference entity in the economic scenario where all reference entities in ${\cal J}$ have already defaulted.

First, we study the dynamics of the third party valuation process $\hat{V}$. Note that if all entities have defaulted, then $\hat V^{\{1, ...,N\}}=0.$  The case when all entities except for $i$ have already defaulted, that is ${\cal J} = \{1, ..., N\}\backslash \{i\}$ (in this case $\tau^{1,{\cal J}} = \tau^{\{1, .., N\} } $), is analogous to the case of a single CDS contract, whose price process has been given in Eq.~\eqref{eq:BSDE-hatV}. Hence
\begin{align}
 -d\hat{V}^{{\cal J}}_t  = \Bigl(-r_D\hat{V}^{{\cal J}}_t - \eta^{i}  \Bigr) \, dt - \hat{Z}^{1,{\cal J}}_t \, \varpi_t^{i,\cal{J},\Qxx}, \qquad \qquad
 \hat{V}^{{\cal J}}_{\tau^{i,{\cal J}} \wedge T} =  L^{i} \ind_{\{\tau^{i,{\cal J}} < T\}}.
\label{eq:BSDE-hatV-multi1}
\end{align}
Next, we provide an inductive relation which relates the investor's wealth price process in the state where all entities in {$\mathcal{J}$} have defaulted, to that in the state where the reference entity $i\not \in \cal{J}$ additionally defaults. The base case $|{\cal J}|=N-1$ has been given in~\eqref{eq:BSDE-hatV-multi1}. For the case $|{\cal J}| < N-1$, we obtain
\begin{align}
 -d\hat{V}^{{\cal J}}_t & = -\Bigl(r_D\hat{V}^{{\cal J}}_t -  \sum_{k\notin \cal{J}} \eta^{k}  \Bigr)\, dt - \sum_{k \notin \cal{J}}  \hat{Z}^{k,{\cal J}}_t \, \varpi_t^{k,{\cal J},\Qxx},  \nonumber \\
 \hat{V}^{{\cal J}}_{T \wedge \min_{j\notin \cal{J}} \tau^{j,{\cal J}} } & = \sum_{k\notin \cal{J}}\Bigl( L^{k} + \hat V_{\tau^{k,{\cal J} }}^{\{k\}\cup \cal{J}} \Bigr) \ind_{\{\tau^{k,{\cal J}} =\min_{j\notin {\cal J}} \tau^{j,{\cal J}} \}}\ind_{\{\tau^{k,{\cal J}} <T  \}}.\label{eq:BSDE-hatV-multi2}
\end{align}
The price process and the replicating strategy are then obtained by considering all possible subsets of defaulted entities, leading to
\begin{align}
\hat{V}_t & = \sum_{{{\cal J}} \in 2^{\{1, ..., N\}}} \hat{V}^{{\cal J}}_t \ind_{\{\tau^{{\cal J}}  \wedge \tau^C \wedge \tau^I \wedge T < t \le\min_{k \notin \cal{J}} \tau^{k,{\cal J}} \wedge \tau^C \wedge \tau^I \wedge T\}}, \\
\hat{Z}_t^{i} &= \sum_{{{\cal J}} \in 2^{\{1, ..., N\}\backslash \{i\} }} \hat{Z}^{1,{\cal J}}_t\ind_{\{\tau^{{\cal J}}  \wedge \tau^C \wedge \tau^I \wedge T < t \le\min_{k \notin \cal{J}} \tau^{k,{\cal J}} \wedge \tau^C \wedge \tau^I \wedge T\}}.		
\end{align}
The collateral process $M$ is still given by Eq.~\eqref{eq:rulecoll} along with the above  expression for $\hat V$.

Next, we use a similar inductive argument to define the $V:=(V_t)_{t\geq 0}$ process. Clearly, $V^{{\{1, ...,N\}}}=0.$ Consider now the state when all but entity $i$ have defaulted, that is ${\cal J} = \{1, ..., N\}\backslash \{i\}$ (and $\tau^{1,{\cal J}} = \tau^{\{1, .., N\} } $). The corresponding expression to~\eqref{eq:BSDE-sell} in the multi-name case is then given by
\begin{align}
-dV_t^{{\cal J}} &= f\Bigl(t,V_t^{{\cal J}},Z_t^{1,{\cal J}},Z_t^{I,{\cal J}},Z_t^{C,{\cal J}}; M^{{\cal J}}, {\cal J} \Bigr) \, dt - Z^{1,{\cal J}}_t\, d\varpi_t^{i,\cal{J},\Qxx}  - Z_t^{I,{\cal J}} \, d\varpi_t^{I,\Qxx}  - Z_t^{C,{\cal J}} \, d\varpi_t^{C,\Qxx},\label{eq:BSDE-sell-multi}\\
V^{{\cal J}}_{\tau \wedge \tau^{i,\cal{J} } } & = L^{i} \ind_{\{\tau^{i,\cal{J} } < \tau \wedge T\}} +  \theta^I(\hat{V}^{{\cal J}}_\tau, M^{{\cal J}}_{\tau-})\ind_{\{\tau < \tau^{1,{\cal J}} \wedge \tau^C \wedge T\}} + \theta^C(\hat{V}^{{\cal J}}_\tau, M^{{\cal J}}_{\tau-})\ind_{\{\tau < \tau^{1,{\cal J}} \wedge \tau^I \wedge T\}} ,\nonumber
\end{align}
where $f$ takes a similar form as in the single name case treated in~\eqref{eq:f+}, and is given by
\begin{align}
    f\bigl(t,v,z,z^I,z^C; M, {\cal J}\bigr) &:= -\biggl(\rfp \bigl(v+ z + z^I + z^C  -  M\bigr)^+ -\rfm \bigl(v + z + z^I + z^C  -   M_t\bigr)^- \nonumber\\
    & \phantom{==} - r_D z - r_D z^I - r_D z^C + \rcp M^+ - \rcm M^-  + \sum_{k \notin \cal{J}} \eta^k  \biggr ).
\label{eq:f+.1}
\end{align}
Similar to Eq.~\eqref{eq:BSDE-hatV-multi2}, the wealth replicating process in the state where all reference entities in {$\mathcal{J}$} have defaulted is related to the state where the additional entity $i\not \in \cal{J}$ defaults:
\begin{align}
  &-dV_t^{{\cal J}} = f\Bigl(t,V_t^{{\cal J}},  \sum_{k \notin \cal{J}} Z_t^{k,{\cal J}} , Z_t^{I,{\cal J}},Z_t^{C,{\cal J}}; M^{{\cal J}}, {\cal J}\Bigr) \, dt  \label{eq:BSDE-sell01}\\
 &\qquad\qquad- \sum_{k \notin \cal{J}}  Z^{k,{\cal J}}_t\, d\varpi_t^{k,{\cal J},\Qxx} - Z_t^{I,{\cal J}} \, d\varpi_t^{I,\Qxx}  - Z_t^{C,{\cal J}} \, d\varpi_t^{C,\Qxx},\nonumber\\
   & V^{{\cal J}}_{\tau \wedge \min_{j\notin {\cal J}} \tau^{j,{\cal J}}} = \sum_{k \notin \cal{J}}\Bigl(L^{k} + V_{\tau^{k,{\cal J} }}^{\{k\}\cup \cal{J}} \Bigr) \ind_{\{\tau^{k,{\cal J}} =\min_{j\notin {\cal J}} \tau^{j,{\cal J}} \wedge \tau  \}}\ind_{\{\tau^{k,{\cal J}} <T  \}}   \nonumber \\
    &\quad+ \theta^I\Bigl(\hat{V}^{{\cal J}}_\tau,M_{\tau-}\Bigr)\ind_{\{\tau^I < \min_{j\notin {\cal J}} \tau^{j,{\cal J}}\wedge \tau^C \wedge T\}} + \theta^C\Bigl(\gamma \hat{V}^{{\cal J}}_\tau, M_{\tau -}\Bigr)\ind_{\{\tau^C < \min_{j\notin {\cal J}} \tau^{j,{\cal J}} \wedge \tau^I \wedge T\}}  .\nonumber
\end{align}
%

Altogether, we obtain

\begin{align}
V_t &= \sum_{{\cal J} \in 2^{\{1, ..., N\}}} V^{{\cal J}}_{t}\ind_{\{\tau^{{\cal J}}  \wedge \tau^C \wedge \tau^I \wedge T < t \le\min_{k \notin \cal{J}} \tau^{k,{\cal J}} \wedge \tau^C \wedge \tau^I \wedge T\}}, \\
Z^i_t& = \sum_{{\cal J} \in 2^{\{1, ..., N\}  \backslash \{i\} }} Z^{1,{\cal J}}_{t}\ind_{\{\tau^{{\cal J}}  \wedge \tau^C \wedge \tau^I \wedge T < t \le\min_{k \notin \cal{J}} \tau^{k,{\cal J}} \wedge \tau^C \wedge \tau^I \wedge T\}}, \\
Z^I_t& = \sum_{{\cal J} \in 2^{\{1, ..., N\}}} Z^{I,{\cal J}}_{t}\ind_{\{\tau^{{\cal J}}  \wedge \tau^C \wedge \tau^I \wedge T < t \le\min_{k \notin \cal{J}} \tau^{k,{\cal J}} \wedge \tau^C \wedge \tau^I \wedge T\}}, \\
Z^C_t &= \sum_{{\cal J} \in 2^{\{1, ..., N\}}} Z^{C,{\cal J}}_{t}\ind_{\{\tau^{{\cal J}}  \wedge \tau^C \wedge \tau^I \wedge T < t \le\min_{k \notin \cal{J}} \tau^{k,{\cal J}} \wedge \tau^C \wedge \tau^I \wedge T\}}.
\end{align}

Proceeding along the lines of Section \ref{sec:OneEntity}, we can obtain a BSDE for the $\XVA$ process in  Eq.~\eqref{eq:XVAdef}:
\begin{align}
-d\XVA_t^{{\cal J}} & = \tilde{f}^{}\Bigl(t,\XVA_t^{{\cal J}},\sum_{k \notin \cal{J}} \tilde Z_t^{k,{\cal J}} , \tilde Z_t^{I,{\cal J}},\tilde Z_t^{C,{\cal J}}; M^{{\cal J}}, {\cal J}\Bigr) \, dt \\ 
& \phantom{=} - \sum_{k\notin  J}  \tilde Z^{k,{\cal J}}_t\, d\varpi_t^{k,{\cal J},\Qxx} - \tilde Z_t^{I,{\cal J}} \, d\varpi_t^{I,\Qxx}  - \tilde Z_t^{C,{\cal J}} \, d\varpi_t^{C,\Qxx},\nonumber\\
	\XVA_{\tau \wedge \min_{j\notin {\cal J}} \tau^{j,{\cal J}} }^{{\cal J}} &= \sum_{k \notin \cal{J}}\Bigl(L^{k} + \XVA_{\tau^{k,{\cal J} }}^{\{k\}\cup \cal{J}} \Bigr) \ind_{\{\tau^{k,{\cal J}} =\min_{j\notin {\cal J}} \tau^{j,{\cal J}} \wedge \tau  \}} \ind_{\{\tau^{k,{\cal J} } <T  \}}   \\
	&\qquad+\tilde{\theta}_I\Bigl(\hat{V}^{{\cal J}}_\tau,M_{\tau-}\Bigr)\ind_{\{\tau^I < \min_{j\notin {\cal J}} \tau^{j,{\cal J}}\wedge \tau^C \wedge T\}} \nonumber\\
   &\qquad+ \tilde{\theta}^C\Bigl(\hat{V}^{{\cal J}}_\tau, M_{\tau -}\Bigr)\ind_{\{\tau^C < \min_{j\notin {\cal J}} \tau^{1,{\cal J}} \wedge \tau^I \wedge T\}} \Bigr),\nonumber
\end{align}
where $\tilde{\theta}^{C}$ and $\tilde{\theta} ^{I}$ are given in~\eqref{eq:hats}, $\tilde{Z}_t^{1,{\cal J}}, \tilde{Z}_t^{I, {\cal J}}$ and $\tilde{Z}_t^{C, {\cal J}}$ are defined as
\begin{align}
\tilde{Z}_t^{1,{\cal J}} & := Z_t^{1,{\cal J}} - \hat{Z}^{1,{\cal J}},~i\notin {\cal J}, \qquad \tilde{Z}_t^{I, {\cal J}} := Z_t^{I, {\cal J}}, \qquad \tilde{Z}_t^{C, {\cal J}} := Z_t^{C, {\cal J}},  \nonumber
\end{align}
and	
\begin{align}
\tilde{f}\Bigl(t,xva,\tilde{z},\tilde{z}^I,\tilde{z}^C; M, {\cal J} \Bigr) &{:}= -\biggl(\rfp \Bigl(xva+ \tilde{z}+ \tilde{z}^I + \tilde{z}^C + \sum_{k \notin \cal{J}}   L^k- M_t \Bigr)^+ \\
& \phantom{=:}-\rfm \Bigl(xva + \tilde{z} + \tilde{z}^I + \tilde{z}^C +  \sum_{k \notin \cal{J}}  L^k- M_t \Bigr)^- \nonumber\\
& \phantom{=:} - r_D \tilde{z}- r_D \tilde{z}^I - r_D \tilde{z}^C + \rcp M_t^+ - \rcm M_t^- -r_D \sum_{k \notin \cal{J}}   L^k\biggr) ,\nonumber
\end{align}
where the terminal condition is $\XVA_t^{\{1, ..., N\}} =0.$ The BSDE in the reduced filtration $\mathbb{F}$ can be obtained analogously to~\eqref{eq:reduced}, and is given by
\begin{align}\label{eq:reduced-multi}
	-d\check{U}_t^{{\cal J}} = \check{g}\Bigl(t,\check{U}_t^{{\cal J}}, \sum_{k \notin \cal{J}} \check{U}_t^{\{k\}\cup \cal{J}},  \sum_{k \notin \cal{J}} h^{k,\Qxx} \check{U}_t^{\{k\}\cup \cal{J}} ; \hat{V}^{{\cal J}}, M^{{\cal J}}, {\cal J}\Bigr) \, dt, \qquad \qquad
	\check{U}_{T}^{{\cal J}} = 0,
\end{align}
with
\begin{align}
	\check{g}\bigl(t,\check{u}, \circu, \circuo; \hat{V},M,{\cal J}\bigr) &= h^{I,\Qxx}\Bigl(\tilde{\theta} ^{I}(\hat{V}_t^{{\cal J}}, M_{t-}^{{\cal J}})-\check{u}\Bigr) + h^{C,\Qxx}\Bigl(\tilde{\theta} ^{C}(\hat{V}_t^{{\cal J}}, M_{t-}^{{\cal J}})-\check{u}\Bigr) + \Bigl( \circuo- \sum_{k \notin \cal{J}}  h^{k,\Qxx} \check{u} \Bigr)\nonumber\\
	& \phantom{}+ \tilde{f}\Bigl(t, \check{u}, \circu -(N-\abs{{\cal J}})\check{u}, \tilde{\theta} ^{I}(\hat{V}_t^{{\cal J}}, M_{t-}^{{\cal J}})-\check{u}, \tilde{\theta}^{C}(\hat{V}_t^{{\cal J}}, M_{t-}^{{\cal J}}) -\check{u}; M\Bigr).
\label{eq:g+-multi}
\end{align}
For the starting point of the recursion, we set  $\check{U}_t^{\{1, ..., N\}} =0.$

\begin{remark}
For large portfolios, i.e., those referencing a high number $N$ of entities, this system of ODEs is computationally intractable. A solution to the ODE~\eqref{eq:g+-multi} would need to be obtained for each subset of $\{1, ..., N\}$, that is a total of $2^N$ solutions need to be computed. This system becomes tractable only if the reference entities have identical characteristics (spreads, loss rates and default intensities), and the default intensities depend only on the \textit{number} of occurred defaults, but not on the identity of the defaulted entities. In this case, the complexity grows linearly and it is required to compute $N$ ODE solutions. The dependence structure between ODEs has similar characteristics to that arising in a binomial tree. Computations on a non-recombining tree are generally prohibitively expensive, and thus recombining trees are usually used.

For calibration purposes, it is important to construct a parsimonious model, that minimizes the number of parameters to be estimated from data. A common strategy is to split firms into groups, each characterized by unique default risk profile. In other words, firms in the same group are assumed to have the same default intensity. For instance, a tractable specification for the default intensity is $h^{i,\Qxx}_t(|{\cal J}|) = h_t^{\kappa(i),\Qxx}(|{\cal J}|)$, where $\kappa(i)$ maps firm $i$ to the group to which $i$ belongs to. Examples of tractable specifications include the linear counterparty risk model, where $ {h_t^{\kappa(i),\Qxx}(|{\cal J}|)} = \lambda_0^{\kappa(i)} + \lambda_1^{\kappa(i)} |{\cal J}|$, i.e., upon default of some firm the default intensity of the surviving firms increases by a constant amount, so that default dependence increases linearly with the number of defaults. The constant $\lambda_0$ and $\lambda_1$ may change from group to group, but are the same for all firms in the group. Another specification accounting for between groups and within group dependencies has been calibrated by \cite{FreyBackhaus04} using CDO tranche spreads. Their model accounts for the fact that defaults of firms belonging to the same industry have a stronger dependence structure than defaults of firms from different industries. The functional specification assumes that a default event at time $t$ increases the default intensity of surviving firms only if (i) the realized number of defaults within a group exceeds the expected number of defaulted firms for that group by $t$, or (ii) the total realized number of defaults within a group exceeds the expected total number of defaulted firms by $t$. One of the calibrated parameters in their model reflects the strength of interaction between firms from the same industry group relative to the global interaction between defaults in the entire portfolio.

\end{remark}

Assume that $h^{i,\Qxx}, i\in\{1, ...,N\}$, are all piecewise (deterministic) continuous. The extension of the theorems developed in Section \ref{sec:ODE} for the case of a single CDS to the case of portfolios referencing multiple  entities is straightforward. For notation consistency, denote $\hat v^{{\cal J}}_t = \hat V_{T-t}^{{\cal J}}, \, \check u^{{\cal J}}_t = \check U_{T-t}^{{\cal J}}$ for all ${\cal J}\subset \{1, ..., N\}$. 
The following proposition is the multi-dimensional extension of Proposition \ref{thm:existence}. Its proof uses exactly the same arguments and is omitted here.
\begin{proposition}\label{thm:existence-multi}  {There exists a unique (piecewise smooth) solution to the system of ODEs:
\begin{align}
\partial_t \hat v^{{\cal J}} &=-r_D\hat{v}^{{\cal J}} - \sum_{k \notin \cal{J}} \eta^{k}    +\sum_{k \notin \cal{J}}\Bigl( L^{k} + \hat v^{\{k\}\cup \cal{J}} - v^{{\cal J}}  \Bigr)h_k^\Qxx ,~{\cal J}\subset\{1, ..., N\},\label{eq:hatV-ODE-multi}\\
\hat v^{{\cal J}}_0 &=0,\nonumber\\
\partial_t \check u^{{\cal J}} &= \check{g}\left(t,\check u^{{\cal J}}, \sum_{k \notin \cal{J}} \check{u}^{\{k\}\cup \cal{J}}, \sum_{k \notin \cal{J}} h^{k,\Qxx}  \check{u}^{\{k\}\cup \cal{J}}; \hat v^{{\cal J}}, m, {\cal J}\right),~{\cal J}\subset\{1, ..., N\},\label{eq:creditswap1-multi}\\
\check u^{{\cal J}}_0 &=0.\nonumber
\end{align}
}
\end{proposition}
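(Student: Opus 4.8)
The plan is to follow the scheme of Proposition~\ref{thm:existence}, wrapping it inside a \emph{downward} induction on $|J|$ that untangles the hierarchical coupling between the unknowns $\bigl(\hat v^{(J)},\check u^{(J)}\bigr)$. Order the subsets of $\{1,\ldots,N\}$ by cardinality; the induction runs from $|J|=N$ down to $|J|=0$. For the base case $J=\{1,\ldots,N\}$ all sums over $J^c$ are empty, so \eqref{eq:hatV-ODE-multi} collapses to $\partial_t\hat v^{(J)}=-r_D\hat v^{(J)}$, $\hat v^{(J)}(0)=0$, giving $\hat v^{(J)}\equiv0$ (consistent with $\hat V^{(\{1,\ldots,N\})}=0$ and hence $M^{(\{1,\ldots,N\})}=0$); substituting into \eqref{eq:creditswap1-multi} one checks that the resulting scalar right-hand side vanishes at $\check u^{(J)}=0$, so $\check u^{(J)}\equiv0$, with uniqueness coming from the Lipschitz bound used in the inductive step.

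For the inductive step I would fix $J\subsetneq\{1,\ldots,N\}$ and assume the claim for all $J'\supsetneq J$; then $\hat v^{(\{k\}\cup J)}$ and $\check u^{(\{k\}\cup J)}$, $k\in J^c$, are known bounded piecewise-$C^1$ functions of $t$. First I would solve \eqref{eq:hatV-ODE-multi}, which is now \emph{linear} in $\hat v^{(J)}$ with time-dependent coefficient $-\bigl(r_D+\sum_{k\in J^c}h_k^\Q\bigr)$ and forcing term $-\sum_{k\in J^c}S_k+\sum_{k\in J^c}\bigl(L_k+\hat v^{(\{k\}\cup J)}\bigr)h_k^\Q$, both piecewise continuous and bounded on $[0,T]$: on each of the finitely many intervals where the $h_k^\Q$ are continuous, variation of constants yields the unique classical solution, and matching values at the finitely many breakpoints produces a unique, bounded, piecewise-$C^1$ solution $\hat v^{(J)}$ (the boundedness estimate being the one that gives $|\hat v|\le M_0$ in Proposition~\ref{thm:existence}). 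I would then record that the collateral functional $m^{(J)}=M^{(J)}_{T-\cdot}$, built from $\hat v^{(J)}$ via \eqref{eq:rulecoll}, is again bounded and piecewise continuous: the survival-conditioning normalization inside $\VaR$ is bounded away from $0$ since the intensities are uniformly bounded, a $\VaR$ of a bounded quantity is bounded, and $\delta$-shifts of piecewise continuous functions stay piecewise continuous.

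With $\hat v^{(J)}$, $m^{(J)}$ and all the $\{k\}\cup J$ terms now frozen, I would treat \eqref{eq:creditswap1-multi} as a scalar ODE $\partial_t\check u^{(J)}=G_J(t,\check u^{(J)})$ where, by inspection of \eqref{eq:g+-multi} and of $\tilde f$, $G_J$ is piecewise continuous in $t$ (finitely many discontinuities) and globally Lipschitz in $\check u^{(J)}$ uniformly on $[0,T]$ --- the $\check u^{(J)}$-dependence enters only through the bounded linear coefficients $h_I^\Q,h_C^\Q,r_D,h_k^\Q$ and through the $1$-Lipschitz maps $x\mapsto x^\pm$ composed with affine functions of $\check u^{(J)}$, with slopes controlled by $r_f^\pm$ and $|J^c|+1$. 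Picard--Lindel\"of then gives a unique local solution on each continuity interval, the Lipschitz bound plus Gr\"onwall rules out blow-up so it extends over the whole interval, and continuity of $\check u^{(J)}$ across breakpoints glues the pieces into the unique piecewise-classical solution. After $N+1$ such steps the entire system is solved uniquely.

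I expect the only genuinely delicate point to be bookkeeping the two sources of non-smoothness: the time-discontinuities of the default intensities, handled by working interval-by-interval since there are finitely many of them on $[0,T]$; and the kinks of $r_f,r_m$ (the maps $x\mapsto x^\pm$) in the state variable, which are harmless precisely because they preserve \emph{global} Lipschitz continuity in $\check u^{(J)}$, so no monotonicity or regularization machinery is needed and the standard ODE toolkit applies verbatim. The small extra ingredient beyond Proposition~\ref{thm:existence} is checking that $m^{(J)}$ inherits boundedness and piecewise continuity from $\hat v^{(J)}$, as noted above.
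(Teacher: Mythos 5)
Your proposal is correct and follows essentially the paper's intended route: the paper omits this proof, stating it ``uses exactly the same arguments'' as Proposition~\ref{thm:existence} (Picard--Lindel\"of on the continuity intervals of the intensities plus a Gronwall/continuation bound), applied through the triangular structure of the system whose recursion starts at $J=\{1,\ldots,N\}$. Your downward induction on $|J|$, with the linear variation-of-constants step for $\hat v^{(J)}$ and the check that $m^{(J)}$ stays bounded and piecewise continuous, simply makes that recursion explicit.
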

Next, we present the multi-dimensional extension of Theorem \ref{thm:comp}. The proof presents an additional induction step compared with the proof of Theorem \ref{thm:comp}, and the details are reported in the appendix. 
\begin{theorem}[Comparison Theorem]
\label{thm:comp-multi}
Let ${\cal J}\subset\{1, ..., N\}$. {Assume in addition to Assumption \ref{ass:nec+suff} that $ \rfm< \min_{i \in \{1,\ldots N, I\}} \mu^i{\wedge\underline\mu^C}.$} Moreover, assume that there exists
$\overline \mu_C \ge \underline \mu^C >r_D$ such that
$\overline \mu^C \ge  {\mu_t^{C,\Qxx}} \ge \underline \mu^C$, and let $\check u^{{\cal J}}$ be the solution of ODE~\eqref{eq:creditswap1-multi}.
Let
\begin{align}
{(\mu^C)^*}(\hat{v}, m,  \check u) &= {\overline \mu^C} \ind_{\{ \tilde{\theta} ^{C}(\hat{v},m)-\check u\ge 0\}} + {\underline \mu^C} \ind_{\{ \tilde{\theta} ^{C}(\hat{v},m)-\check u \le 0\}},\\
{{(\mu^C)}_*}(\hat{v}, m,  \check u) &= {\underline \mu^C} \ind_{\{ \tilde{\theta} ^{C}(\hat{v},m)-\check u\ge 0\}} + {\overline \mu^C} \ind_{\{ \tilde{\theta} ^{C}(\hat{v},m)-\check u \le 0\}},
\end{align}
and define ${g}^{*}$ and ${g}_{*}$ plugging $(h^{C,\Qxx})^{*}$ and $(h^{C,\Qxx})_{*}$ into $\check g$ given in Eq.~\eqref{eq:g+-multi}, i.e.,
\begin{align}
&{g}^{*}\bigl(t,\check u, \circu, \circuo; \hat{v},m,{\cal J}\bigr) := h^{I,\Qxx}\bigl(\tilde{\theta} ^{I}(\hat{v}_t,m_t)-\check{u}_t\bigr) +{( {(\mu^C)^{*} }(\hat v_t, m_t,\check u_t) - r_D )} \bigl(\tilde{\theta} _{C}(\hat{v}_t,m_t)-\check{u}_t\bigr)\nonumber \\
& \qquad+ \bigl(\circuo-  \sum_{i\notin {\cal J}} h^{i,\Qxx} \check{u}_t\bigr)+ \tilde{f}\bigl(t, \check{u}_t, \circu-(N-\abs{{\cal J}})\check{u}_t, \tilde{\theta} ^{I}(\hat{v}_t,m_t)-\check{u}_t, \tilde{\theta}^{C}( \hat{v}_t,m_t) -\check{u}_t;\hat{v},m,{\cal J}\bigr),\nonumber\\
&{g}_{*}\bigl(t,\check u,\circu, \circuo; \hat{v},m,{\cal J}\bigr) : = h^{I,\Qxx}\bigl(\tilde{\theta}^{I}(\hat{v}_t, m_t)-\check{u}_t\bigr) +{( {(\mu^C)_{*} }(\hat v_t, m_t,\check u_t) - r_D )} \bigl(\tilde{\theta}^{C}(\hat{v}_t,m_t)-\check{u}_t\bigr)\nonumber \\
& \qquad+ \bigl(\circuo -  \sum_{i\notin {\cal J}} h^{i,\Qxx} \check{u}_t\bigr)+ \tilde{f}\bigl(t, \check{u}_t, \circu-(N-\abs{{\cal J}})\check{u}_t, \tilde{\theta} ^{I}(\hat{v}_t,m_t)-\check{u}_t, \tilde{\theta}^{C}( \hat{v}_t,m_t) -\check{u}_t;\hat{v},m,{\cal J}\bigr).\nonumber
\end{align}
Finally, let $\check{u}^{\cal J,*}$ be the solution to ODE~\eqref{eq:creditswap1-multi}, but with $\check{g}$ replaced by ${g}^{*}$, that is
\begin{align}
 \partial_t \check u^{\cal J,*} &= g^{*}\left(t,\check u^{{\cal J},*},  \sum_{k \notin \cal{J}}  \check{u}^{\{k\}\cup \cal{J},*},  \sum_{k \notin \cal{J}} h^{k,\Qxx}  \check{u}^{\{k\}\cup {\cal J},*}; \hat v^{{\cal J}}, m, {\cal J}\right),\qquad \qquad
\check{u}^{{\cal J},*}_0 =0,\nonumber
\end{align}
and similarly, let $\check{u}_{*}^{{\cal J}}$ be the solution of ODE~\eqref{eq:creditswap1-multi} where we replace $\check{g}$ with ${g}_{*}$. Then $\check{u}_{*}^{{\cal J}} \le \check u^{{\cal J}} \le \check{u}^{{\cal J},*}.$
\end{theorem}

It now remains to find the super-replicating strategy for the robust XVA process. Following similar arguments to those used above, the strategy will be obtained by pasting together the various quantities associated with different subsets {$\mathcal{J}$} of defaulted entities.
\begin{theorem}\label{thm:muli}
The robust $\XVA$ can be represented explicitly by
\begin{align}
\rXVA_t  &= \sum_{{\cal J} \in 2^{\{1, ..., N\}}} \check{U}_t^{{\cal J},*}\ind_{\{\tau^{{\cal J}}  \wedge \tau^C \wedge \tau^I \wedge T < t \le\min_{k \notin \cal{J}} \tau^{k,{\cal J}} \wedge \tau^C \wedge \tau^I \wedge T\}}\label{eq:rXVA-tmp} \\
&+ \Bigl( \tilde{\theta}^{C}(\hat{V}_{\tau^C}, M_{\tau^C-}) \ind_{\{\tau^C < \min_{j\notin {\cal J}} \tau^{j,{\cal J}}\wedge \tau^I \wedge T\}} + \tilde{\theta}^{I}(\hat{V}_{\tau^I}, M_{\tau^I -}) \ind_{\{\tau^I < \min_{j\notin {\cal J}} \tau^{j,{\cal J}}\wedge \tau^C \wedge T\}} \Bigr) \ind_{\{ \tau^C \wedge \tau^I \le t \le T\}},\nonumber
\end{align}
where the process $\check{U}_t^{{\cal J},*} := \check u^{{ \cal J},*}_{T-t}$. Define
\begin{align}
\xi^{i,\cal{J}, *}_t & = \frac{\check{U}_t^{{\cal J},*} -\check{U}_t^{\{i\}\cup \cal{J},*} }{B_{t-}^i} \ind_{\{\tau^{{\cal J}}  \wedge \tau^C \wedge \tau^I \wedge T < t \le\min_{k \notin \cal{J}} \tau^{k,{\cal J}} \wedge \tau^C \wedge \tau^I \wedge T\}}, \label{eq:rob_strat1-mult} \\
\xi_{t}^{I,{\cal J},*} &= \frac{L^I (\hat{V}_t - M_{t-})^{+} + \check{U}_t^{{\cal J},*}}{B_{t-}^I}\ind_{\{\tau^{{\cal J}}  \wedge \tau^C \wedge \tau^I \wedge T < t \le\min_{k \notin \cal{J}} \tau^{k,{\cal J}} \wedge \tau^C \wedge \tau^I \wedge T\}}, \nonumber\\
\xi_{t}^{C,{\cal J},*} &= \frac{ -L^C (\hat{V}_t - M_{t-})^{-}+ \check{U}_t^{{\cal J},*}}{B_{t-}^C}\ind_{\{\tau^{{\cal J}}  \wedge \tau^C \wedge \tau^I \wedge T < t \le\min_{k \notin \cal{J}} \tau^{k,{\cal J}} \wedge \tau^C \wedge \tau^I \wedge T\}}, \nonumber\\
\xi_t^{f,{\cal J},*} &= \frac{- \check{U}_t^{{\cal J},*}  - \sum_{i\notin {\cal J}}\big( \check{U}_t^{{\cal J},*} -\check{U}_t^{\{i\}\cup \cal{J},*} \big)+ L^C (\hat{V}_t - M_{t-})^{-}- L^I  (\hat{V}_t -M_{t-})^{+} - M_{t-}}{B_t^{r_f}}\nonumber\\
&\quad\times \ind_{\{\tau^{{\cal J}}  \wedge \tau^C \wedge \tau^I \wedge T < t \le\min_{k \notin \cal{J}} \tau^{k,{\cal J}} \wedge \tau^C \wedge \tau^I \wedge T\}}.\nonumber
\end{align}
The super-replicating strategies for $\rXVA$ are obtained from the above conditional strategies as
\begin{align}
\xi^{i, *}_t & =  \sum_{{\cal J} \in 2^{\{1, ..., N\}  \backslash \{i\} }} \xi^{i,\cal{J},*}_{t}\ind_{\{\tau^{{\cal J}}  \wedge \tau^C \wedge \tau^I \wedge T < t \le\min_{k \notin \cal{J}} \tau^{k,{\cal J}} \wedge \tau^C \wedge \tau^I \wedge T\}},\label{eq:rob_strat2-mult} \\
\xi^{I,*}_t& = \sum_{{\cal J} \in 2^{\{1, ..., N\}}} \xi^{I,{\cal J},*}_{t}\ind_{\{\tau^{{\cal J}}  \wedge \tau^C \wedge \tau^I \wedge T < t \le\min_{k\notin {\cal J}} \tau^{k,{\cal J}} \wedge \tau^C \wedge \tau^I \wedge T\}}, \nonumber\\
\xi^{C,*}_t &= \sum_{{\cal J} \in 2^{\{1, ..., N\}}} \xi^{C,{\cal J},*}_{t}\ind_{\{\tau^{{\cal J}}  \wedge \tau^C \wedge \tau^I \wedge T < t \le\min_{k \notin \cal{J}} \tau^{k,{\cal J}} \wedge \tau^C \wedge \tau^I \wedge T\}},\nonumber \\
\xi_t^{f,*} &= \sum_{{\cal J} \in 2^{\{1, ..., N\}}} \xi_t^{f,{\cal J},*}\ind_{\{\tau^{{\cal J}}  \wedge \tau^C \wedge \tau^I \wedge T < t \le\min_{k\notin {\cal J}} \tau^{k,{\cal J}} \wedge \tau^C \wedge \tau^I \wedge T\}}, \nonumber
\end{align}
together with the number of shares held in the collateral account given by
\begin{align}
\psi_t^{m,*}&=  - \frac{M_{t-}}{B_t^{r_m}}  \ind_{\{t\le \tau^{\{1, ..., N\}}  \wedge \tau^C \wedge \tau^I \wedge T \}}. \label{eq:rob_strat3-mult}
\end{align}
\end{theorem}

{We expect that, as the number of reference entities increases, so does the difference between sub-replication and super-replication valuations. This may be intuitively understood as follows. The terminal/closeout condition {of the super-replication} in the case of a single reference entity matches the terminal/closeout condition of the $\XVA$. However, as shown in Theorem \ref{thm:muli}, in the case of a credit default swap portfolio where multiple reference entities appear, the terminal/closeout of the super-replication dominates that of the $\XVA$. Thus, in the case of two reference entities, the terminal/closeout condition includes the jump {to the} closeout/terminal condition for the single reference entity case, in addition to the cash flows accumulated prior to default. It is thus expected that the difference between the super-replication and the $\XVA$ in the case of two reference entities is greater than the corresponding difference when the portfolio consists of a single CDS. Iterating this reasoning inductively, we conclude that the difference between the super-replication valuation and the $\XVA$ grows as more CDS contracts are added to the portfolio. This highlights the importance of the proposed robust approach, as opposed to the naive approach, which plugs one of the two extremes of the counterparty's intensity uncertainty interval into the valuation formulas.}

\subsection{Extension to Portfolios with Jump-Diffusion Risk}  \label{sec:jumpdiffrisk}
{
	The portfolio considered in this paper consists of CDSs subject to jump risk only. We sketch a possible generalization of the valuation equations to the case that firms' default intensities follow a diffusion process. Specifically, we considers CDSs referencing publicly traded firms, and assume their default intensities to depend on the firm's stock price via a multivariate extension of the  defaultable stock model proposed by \cite{Linetsky}. In his model, stock prices follow diffusion dynamics prior to default and jump to zero at a default time. The intensity of the default time depends on the pre-default value of the stock. We start by assuming the existence of a $N$-dimensional correlated Brownian motion $ (W^{1,\Px}, ..., W^{N,\Px})$, with correlation matrix $R$, assumed to be invertible. We then augment the filtration $\mathbb F$ in our paper to additionally include the independent filtration generated by the Brownian motions $ (W^{1,\Px}, ..., W^{N,\Px})$.
	The dynamics of the defaultable stock prices is given by
	\begin{align}
	dS^i_t = S^i_t \Bigl( \mu^{i,S} dt + \sigma^i dW^{i,\Px}_t - d\varpi^{i,\Px}_t\Bigr), ~{0\le t <\tau^i, ~i=1, ..., N ,}
	\end{align}
	where $\mu^{i,S}$ and $\sigma^i$ are, respectively, the constant drift and volatility of the $i$-th stock, and $\varpi^{i,\Px},~i=1, ...,N $ is the jump-to-default martingale given by
	\begin{align}
	\varpi^{i,\Px}_t := H^i_t - \int_0^t \bigl(1-H^i_u\bigr)h^{i,\Px}_u(S^i_u) \, du,~i \in \{1,\ldots,N\},
	\end{align}
	and the default intensity function $h^{i,\Px}$ depends, besides time, on the stock price. Typically, $h^{i,\Px}$ is nonnegative and decreasing in $s$, as we expect the default probability of a firm to rise if its stock price declines. For this illustration, assume the intensity $h^{i,\Px}$ to be bounded from above and away from zero.

Additionally, we assume the rate of bond $i$ to be a smooth bounded function of the stock price and time, i.e., $\mu^i_t := { \mu_t^i(S_t^i) }$. Such an assumption is also empirically supported; see, for instance, \cite{Kwan}, who show that stocks and bonds issued by the same firm are positively correlated, being both claims on the same underlying assets. If yields of a firm's bond decline (and thus the bond price increase), the firm's stock price typically increases. 
Similar to Assumption \ref{ass:nec+suff}, we will assume that $\mu^i_t > r_D \vee r_f^{+}$ {for $t\in[0,T]$}. 
	
	We enlarge the set of instruments in the replication to also include defaultable stocks. The defaultable stocks replicate the fluctuations of the default intensity governed by the Brownian motions as well as part of the jump risk (through the jump to default of the stocks). The remaining (positive or negative) jump risk is replicated by defaultable bonds as in the main model presented in this paper.
We proceed as above, and define the Radon-Nikodym derivative process
{
\begin{align}\label{eq:RN1}
\frac{d\Qxx}{d\Px} \bigg|_{\mathcal{F}_{{\tau \wedge( \tau^1 \vee ...\vee \tau^N)     }}} & := e^{\sum_{i=1}^N \theta^i W^{i,\Px}_{\tau\wedge \tau^i} - \frac12\sum_{i,j=1}^N \theta^i R_{i,j} \theta^j (\tau\wedge \tau^i\wedge\tau^j) }\\
& \times\prod_{i\in \{1, \ldots, N, I, C\}} \Biggl(\frac{\int_0^{\tau \wedge \tau^{i}} (\mu^{i}_u - r_D)du}{\int_0^{\tau \wedge \tau^{i}} h^{i,\Px}_u(S^i_u)du }\Biggr)^{H^{i}_{\tau \wedge \tau^{i}}} e^{\int_0^{\tau \wedge \tau^{i}}(r_D-\mu^{i}_u+h^{i,\Px}_u(S^i_u))du},
\end{align}
where $(\theta^1, ..., \theta^N)^T = R^{-1}\left(\frac{\mu^{1,S} -r_D}{\sigma^1},...,\frac{\mu^{N,S} -r_D}{\sigma^N}\right)^T. $ 
It then follows that the measure change is well-defined. Note that, under $\Qxx$, we continue having the relation $h^{i,\Qxx}_t = \mu^i_t-r_D$, $ i \in \{ 1, \ldots, N\}$. However, the functions $h^{i,\Qxx}$, in addition to time, also depend on stock price $S^i$, and are bounded.}
The wealth process (equivalent of \eqref{eq:wealth}), is now given by
	\begin{equation}\label{eq:wealth1}
	V_t := \sum_{i=1}^N \left(\xi_t^i B_t^i  + \xi_t^{S,i} S^i_t\right)+ \xi_t^I B_t^I + \xi_t^C B_t^C + \xi_t^f B_t^{r_f} - \psi_t^{m} B_t^{r_m}.
	\end{equation}
	Assume, for illustration purposes, a portfolio consisting of a single defaultable stock, i.e., $N=1$. Setting $\gamma=1$, the BSDE equivalent of \eqref{eq:BSDE-sell} satisfied by the wealth process takes the form
	\begin{align}\label{eq:vtlast21}
	\nonumber dV_t &= \Bigl(\rfp \bigl(V_t +Z_t^{S^1}+ Z_t^{1} + Z_t^{I} + Z_t^{C} -  M_t \bigr)^+ -\rfm \bigl(V_t +Z_t^{S^1}+ Z_t^{1} + Z_t^{I} + Z_t^{C} -   M_t \bigr)^-\\
	\nonumber & \phantom{=}  - r_D Z_t^{S^1} - r_D Z_t^{1} - r_D Z_t^{I} - r_D Z_t^{C}  +  \rcp   M_t^+ - \rcm  M_t^-  +  \eta^1 \Bigr)\,  dt \\
	& \phantom{=}+ Z_t^{S^1}dW^{1,\Qxx}_t + Z_t^{1} \,d\varpi_t^{1,{ \Qxx}}  + Z_t^{I} \, d\varpi_t^{I,{ \Qxx}}  + Z_t^{C} \, d\varpi_t^{C,\Qxx},\\
	V_{\tau \wedge \tau_1} & =    L^1 \ind_{\tau^1 < \tau} + \theta^I(\hat{V}_\tau, M_{\tau-})\ind_{\{\tau < \tau^1 \wedge \tau^C \wedge T\}} + \theta^C(\hat{V}_\tau, M_{\tau -})\ind_{\{\tau < \tau^1 \wedge \tau^I \wedge T\}}.
	\end{align}
	In the above expressions, the processes $Z^1,Z^I, Z^C$ have a similar role as in Eq.~\eqref{eq:Zetas}, and we additionally have the process $Z^{S^1}$ that represents the volatility adjusted amount invested in the stock to replicate fluctuations of firm 1's default intensity. 	The BSDE \eqref{eq:BSDE-hatV} describing the valuation process  $\hat V$ of the valuation party now becomes
	\begin{align}
	- d\hat{V}_t = \bigl(-r_D \hat{V}_t  - \eta^1 \bigr)\, dt  -\hat Z_t^{S^1}dW^{1,\Qx}-\hat{Z}_t^1 \, d \varpi_t^{1,\Qxx}, \qquad \qquad
	\hat{V}_{\tau^1 \wedge T} = L^1 \ind_{\tau^1 < T},\label{eq:BSDE-hatV11}
	\end{align}
	{(i.e., the underlying assumption is that the valuation party uses the minimal (entropy) martingale measure for the valuation)}.
	Using the expression for $V_t$ and $\hat{V}_t$, we obtain the BSDE for the XVA process,  i.e., the analogous of \eqref{eq:XVABSDE} but for the case where the underlying credit swap portfolio is subject to both diffusion and jump-to-default risk:
	\begin{align}
	-d\XVA_t^{\buysell} & =\tilde{f}^{S}\bigl(t,\XVA_t^{\buysell},\tilde{Z}_t^{S^1},\tilde{Z}_t^{1},\tilde{Z}_t^{I},\tilde{Z}_t^{C}; M\bigr)dt
	- Z_t^{S^1}dW^{1,\Qxx}_t- \tilde{Z}_t^{1}\, d\varpi_t^{1,\Qxx}  - \tilde{Z}_t^{I} \, d\varpi_t^{I,\Qxx}  - \tilde{Z}_t^{C} \, d\varpi_t^{C,\Qxx},\nonumber\\
	\XVA_{\tau \wedge \tau^1}^{\buysell} &= \tilde{\theta}^{C}(\hat V_\tau, M_{\tau -}) \ind_{\{\tau <\tau^1 \wedge \tau^I \wedge T\}} + \tilde{\theta}^{I}(\hat V_\tau, M_{\tau -})\ind_{\{\tau <\tau^1 \wedge \tau^C \wedge T \}},
	\label{eq:XVABSDE1}
	\end{align}
	where
	\begin{align}
	&\!\!\!\!\!\tilde{f}^S\bigl(t,xva,\tilde{z}^{S_1},\tilde{z}^1,\tilde{z}^I,\tilde{z}^C; M\bigr) = \tilde{f}\bigl(t,xva,\tilde{z}^{S^1}+\tilde{z}^1,\tilde{z}^I,\tilde{z}^C; M\bigr)\nonumber\\
	& \qquad\qquad:= -\Bigl(\rfp \bigl(xva+ \tilde{z}^{S_1}+\tilde{z}^1 + \tilde{z}^I + \tilde{z}^C +  L^1- M_t \bigr)^+ -\rfm \bigl(xva +\tilde{z}^{S^1}+ \tilde{z}^1 + \tilde{z}^I + \tilde{z}^C +  L^1- M_t \bigr)^- \nonumber\\
	&\qquad\qquad  -r_D \tilde{z}^{S^1}- r_D \tilde{z}^1 - r_D \tilde{z}^I - r_D \tilde{z}^C + \rcp \bigl(M_t\bigr)^+ - \rcm\bigl(M_t\bigr)^- -r_D L^1\Bigr).
	\end{align}
	The projection technique by \cite{CrepeyRed} on the (now non-trivial) filtration $\mathbb{F}$ leads to
	\begin{align}\label{eq:reduced1}
	-d\check{U}_t^{\buysell} & = \check{g}^S\bigl(t,\check{U}_t^{\buysell},  \check Z_t^{S^1}; \hat{V}, M\bigr) \, dt - \check Z_t^{S^1}dW^{1,\Qxx}_t, ~~	\check{U}_{T}^{\buysell} = 0,
	\end{align}
	with driver
	\begin{align}
	\check{g}^S\bigl(t,\check{u},\check z; \hat{V},M\bigr) &= h^{I,\Qxx}\bigl(\tilde{\theta}^{I}(\hat{V}_t, M_{t-})-\check{u}\bigr) + h^{C,\Qxx}\bigl(\tilde{\theta} ^{C}(\hat{V}_t, M_{t-})-\check{u}\bigr) - h^{1,\Qxx} \check{u} \nonumber\\
	& \phantom{==}+ \tilde{f}^S\bigl(t, \check{u}, \check z, -\check{u}, \tilde{\theta}^{I}( \hat{V}_t, M_{t-})-\check{u}, \tilde{\theta}^{C}( \hat{V}_t, M_{t-}) -\check{u}; M\bigr)\label{eq:g+1}.
	\end{align}
{The boundedness of the intensity $ h^{1,\Qxx}$ ensures that $\check g^S$ is Lipschitz.} In turn, existence and uniqueness of the solution to the continuous BSDE \eqref{eq:reduced1} with Lipschitz driver is a classical result (cf., e.g., \cite[Theorem 2.1.]{ElKaroui}).
	We now define $\check U^{*}$ and $\check U_{*}$ as the solutions of the two BSDEs given by
	\begin{align}\label{eq:reduced2}
	-d\check{U}^{*}_t & = \check{g}^{S,*}\bigl(t,\check{U}^{*}_t,  \check Z^{S_1,*}_t; \hat{V}, M\bigr) \, dt - \check Z^{S^1,*}_tdW^{1,\Qxx}_t, ~~	\check{U}^{*}_T= 0,\\
	-d\check{U}_{*_t} & = \check{g}_{*}^{S}\bigl(t,\check{U}_{*_t},  \check Z^{S^1}_{*_t}; \hat{V}, M\bigr) \, dt - \check Z^{S^1}_{*_t} dW^{1,\Qxx}_t, ~~	\check{U}_{*_T}= 0,
	\end{align}
	where $Z^{S^1}_{*_t}$ and $Z^{S^1,*}_t$ are, respectively, the volatility adjusted cash amount invested in the stock under the replication strategy associated with the (pre-default) lower and upper bound of XVA.
	The existence and uniqueness of solutions $\check{U}^{*}$ and $\check{U}_{*}$ follow from the same classical result referenced above.
	Finally, the comparison $\check{U}^{*} \ge \check{U} \ge \check{U}_{*}$ can be achieved by the standard comparison theorem for BSDEs, (c.f. e.g., \cite[Theorem 3.2.2]{Delong}).
	
}

\section{Comparative Statics Analysis}\label{sec:numeranalysis}
This section performs a comparative statics analysis of the monotonicity patterns of XVA and its replication strategies, for a portfolio consisting of five credit default swaps. Section~\ref{sec:defcont} sets up the default contagion model. Section~\ref{sec:numerics} presents the numerical results.

\subsection{Default Contagion Model}\label{sec:defcont}


We use the following specification for the defaultable intensities of reference entities, investor and her counterparty:
\begin{eqnarray*}
h^{I,\Qxx}_t &=& a_{10} + a_{12} \ind_{\tau^C \leq t} + a_{13}
\left(\ind_{\tau^1 \leq t} + \ind_{\tau^2 \leq t} + \ldots +
\ind_{\tau^N \leq t} \right) \\
h^{C,\Qxx}_t &=&  a_{20} + a_{21} \ind_{\tau^I \leq t} + a_{23}
\left(\ind_{\tau^1 \leq t} + \ind_{\tau^2 \leq t} + \ldots +
\ind_{\tau^N \leq t} \right)  \\
h^{i,\Qxx}_t &=&  a_{30} + a_{31} \ind_{\tau^I \leq t} + a_{32}
\ind_{\tau^C \leq t} + a_{33} \left(\ind_{\tau^1 \leq t} + \ldots
\ind_{\tau^{i-1} \leq t} + \ind_{\tau^{i+1} \leq t} + \ldots +
\ind_{\tau^N \leq t} \right).
\end{eqnarray*}
Recall that primary quantities in our model, i.e., the return rates of the defaultable accounts $\mu^i, i\in\{1, ..., N, I,C\}$, are defined from
Eq.~\eqref{Bieq} as $\mu^i = h^{i,\Qxx} +r_D$.
Notice that the above specification defines a homogenous credit portfolio, i.e., the default intensities of all reference entities are identical. These differ, however,
from default intensities of the investor $I$ and her counterparty $C$. The above specification of contagion via direct credit dependence was first introduced by
\cite{Jarrow}.




\subsection{Numerical Results}\label{sec:numerics}

Throughout the section, we use the following benchmark parameters: $r_D = 0.0001$, $N=5$, $L^I=0.5$, $L^C = 0.5$, $\alpha = 0.8$, $r_f^-=0.05$, $r_f^+=0.08$, $r_m^-=r_m^+=0.0001$. We set the default intensity parameters to $a_{10}=0.05$, $a_{13}=0.05$, $a_{20}=0.05$, $a_{23}=0.01$, $a_{30}=0.01$, $a_{33}=0.01$. Those values are in line with empirical estimates used by \cite{Yu}. We set the contractual credit default swap parameters $S = 0.02$ and $L^i = 0.5$ for $i=1,\ldots,5$. Because the replication process ends at the earliest of the investor and counterparty's default time, the parameters $a_{12}$, $a_{21}$, $a_{31}$, and $a_{32}$ do not play any role in the analysis. We set $\underline{\mu}^C= a_{20} + r_D$, and $\overline{\mu}^C = a_{20} +r_D + N a_{23}$. Throughout the section, we perform a comparative statics analysis with respect to the account rate parameters at the initial time of the transaction, i.e., $t=0$. We fix the maturity of the portfolio to $T=1$. In all graphs, we plot the robust XVA (denoted by ``upper'' in the legend and corresponding to $\check{U}^*$), the actual XVA (denoted by ``actual'' and corresponding to $\check{U}$), and the best-case XVA (denoted by ``lower'' and corresponding to $\check{U}_*$). We also plot the replication strategies associated with these three  XVA processes.

Observe that the number of shares in the reference entity defaultable account used in the replication strategy of upper, actual, and lower XVA does not need to preserve the monotonicity pattern of upper, actual, and lower XVA (such a monotonicity pattern is violated, for example, in Figure~\ref{fig:a23}). This is because the value of $\xi^{i}$ in Eq.~\eqref{eq:rob_strat1-mult} depends on the {\it difference} between the XVA in the state where all five reference entities are alive and that in the state where one of the reference entities has defaulted.

\subsubsection{Idiosyncratic component of counterparty's default intensity}

\begin{figure}
	\includegraphics[width=0.49\textwidth]{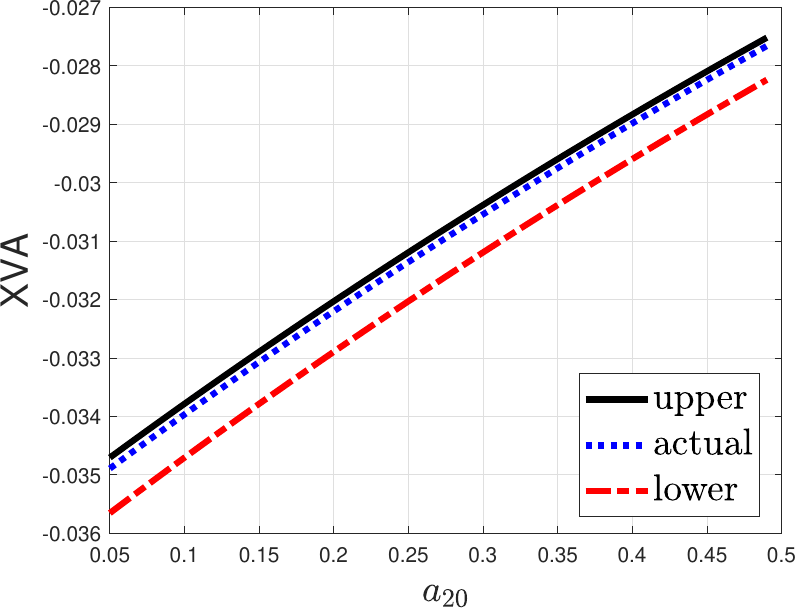}
	\includegraphics[width=0.49\textwidth]{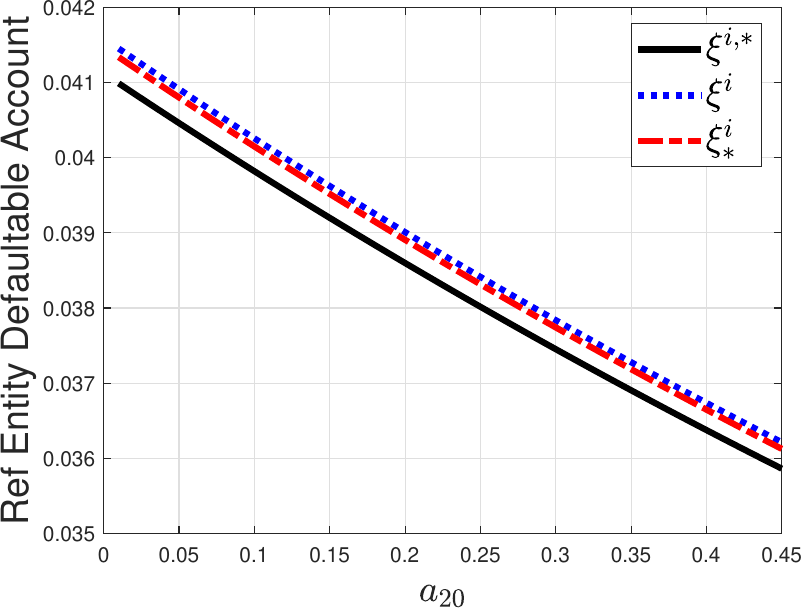}
    \includegraphics[width=0.49\textwidth]{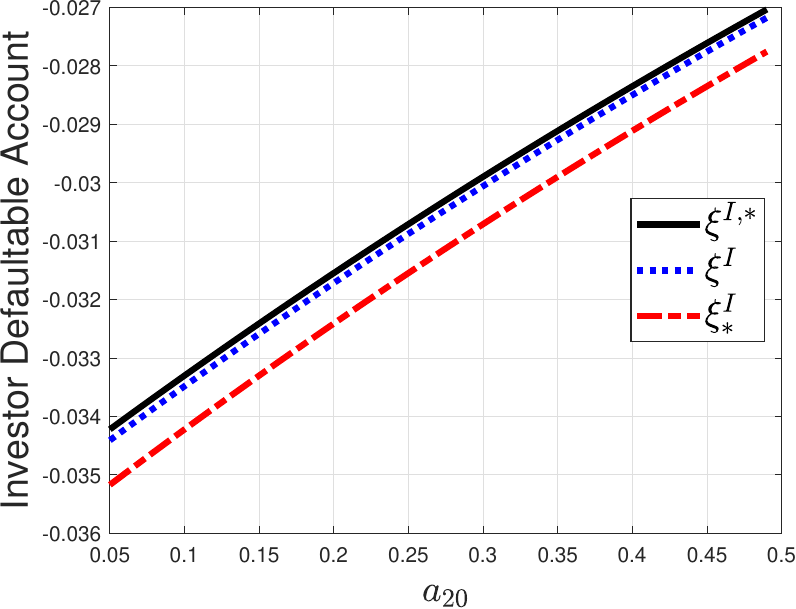}
	\includegraphics[width=0.49\textwidth]{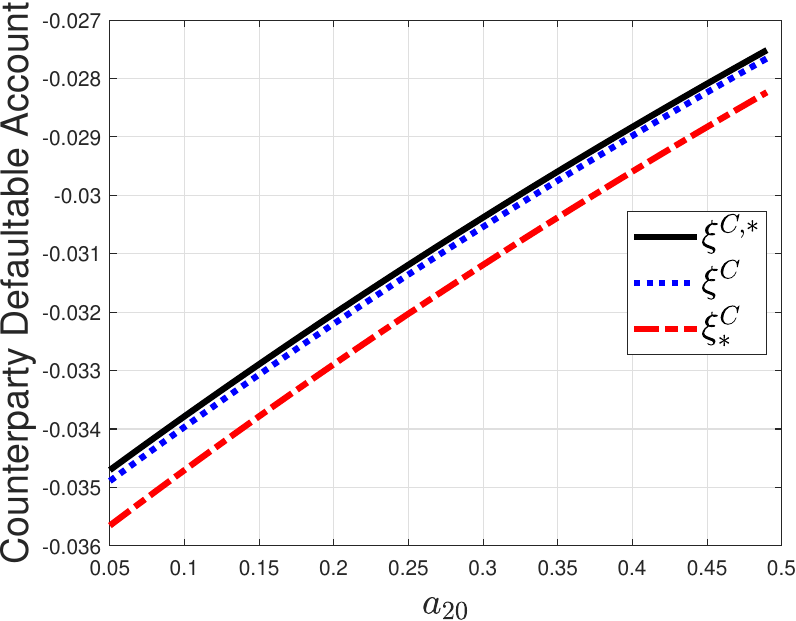}
    \includegraphics[width=0.49\textwidth]{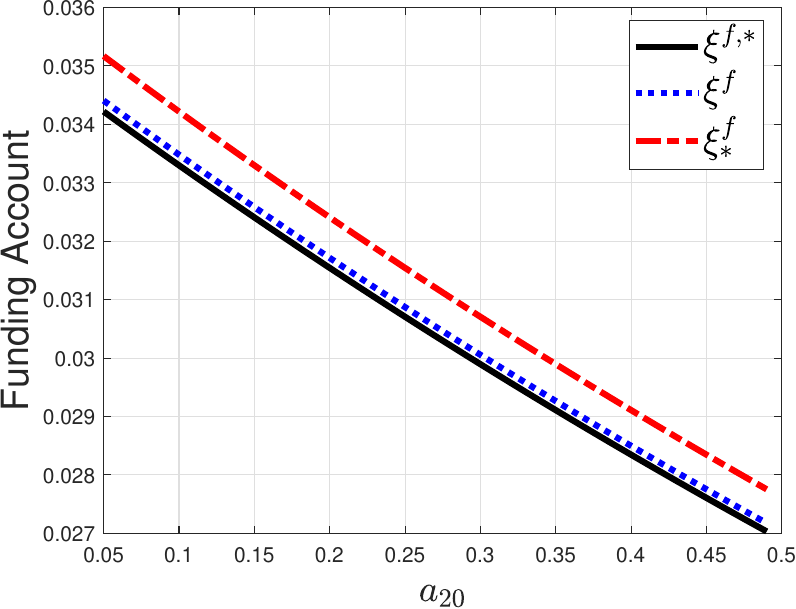}
	\caption{Top Left panel: XVA. Top Right panel: Value of a reference entity defaultable account. Medium Left panel: Value of the investor defaultable account. Medium Right panel: Value of the counterparty defaultable account. Bottom Left panel: Value of the funding account.}
	\label{fig:a210}
\end{figure}

Figure~\ref{fig:a210} shows that as the idiosyncratic component of the counterparty's {default} intensity $a_{20}$ increases, the XVA decreases in absolute value.
This may be explained by the fact that, as the default of the counterparty becomes more likely, all costs associated with the replicating portfolio (including
those from funding the position and remunerating collateral) will be incurred for a shorter period of time. {As a result the size of the XVA jump at
closeout decreases, which in turn results in fewer shares of investor and her counterparty's defaultable account. Moreover, because the counterparty's default becomes more likely, the XVA for a CDS with fewer reference entities will also decrease but a slower rate, and hence the number of shares of the reference entity account will increase (see also \eqref{eq:rob_strat1-mult}).}\footnote{Observe that $\xi^i$ must be the same for all $i\in\{1,\ldots,5\}$ because the account rates dynamics are the same for all reference entities.} In our specific setup, the value of the CDS portfolio $\hat{V}_0$ is rather small (equal to 0.00483043), the transaction is collateralized at $80\%$ of its market value, and the loss given default rates of the investor and counterparty are identical. Then, the jump to closeout value 
is entirely determined by the value of the replicating portfolio immediately before the closeout time. This explains why the number of shares in the investor and counterparty's account is approximately the same.

\subsubsection{Sensitivity of counterparty's default intensity to portfolio credit risk}

Figure~\ref{fig:a23} illustrates the dependence of XVA and replication strategies on $a_{23}$, i.e., the parameter quantifying the sensitivity of counterparty default intensity to credit risk of the underlying portfolio.  Let us start observing that the $XVA$ is negative. Moreover, $\hat{V}_0$ is positive and small, thus $\tilde{\theta} _{C}(\hat{v},m)-\check u\ge 0$. Correspondingly, the upper XVA always uses the rate ${\overline \mu_C}$ in the replication strategy, while the lower XVA always uses the rate ${\underline \mu_C}$. Notice that in our model specification the lower bond $\underline{\mu}_C$ is independent of $a_{23}$ while the upper bound $\bar{\mu}_C$ increases linearly with $a_{23}$. This explains why the upper XVA varies with $a_{23}$, while the lower XVA is constant with respect to it.
 Consistently with the graph in Figure~\ref{fig:a210}, an increase in $a_{23}$ raises the likelihood that the transaction will terminate earlier. Hence, the financing costs of the transaction will be smaller, and the XVA decreases.

 \begin{figure}
 	\includegraphics[width=0.49\textwidth]{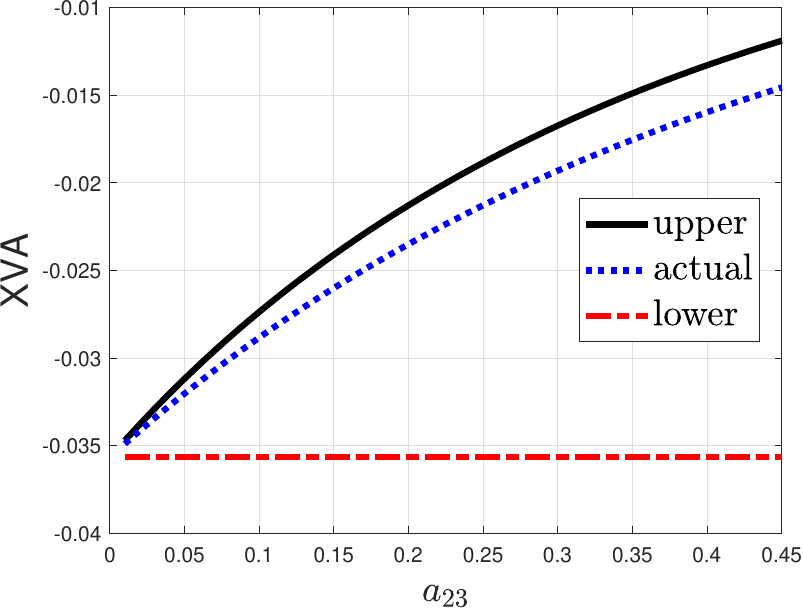}
 	\includegraphics[width=0.49\textwidth]{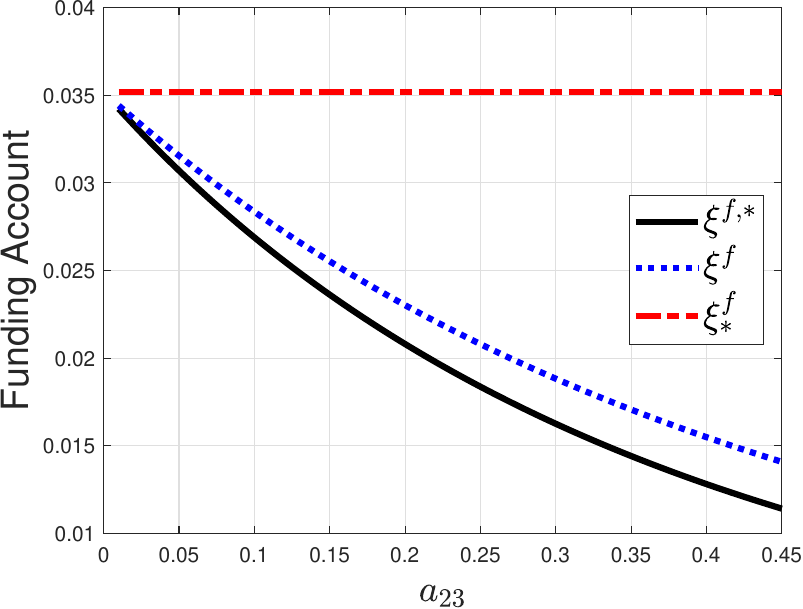}
 	\caption{Left panel: XVA. Right panel: Value of the funding account.}
 	\label{fig:a23}
 \end{figure}

\subsubsection{Sensitivity of portfolio credit risk to contagion}

Figure~\ref{fig:a33} illustrates the dependence of XVA and replication strategies on $a_{33}$, i.e., the parameter quantifying the impact of credit contagion on
the default risk of the reference entities in the portfolio. To better explain the results, we also illustrate the sensitivity of $\hat V_0$ with respect to
$a_{33}$. 

We start by observing that $\hat{V}_0$ is increasing in $a_{33}$. Because we view the CDS payoff from the payer's perspective and the spread premium $S$ is fixed, the moneyness of the contract increases if the default risk of the portfolio increase, which is the case if direct contagion effects are stronger. Moreover, it is easily seen from the expression~\eqref{eq:rob_strat1-mult} that $\xi^I$ increases with $\hat{V}$ while $\xi^C$ is decreasing in $\hat{V}$. Altogether, this implies the number of shares in the investor defaultable account is higher than the corresponding number of shares in the counterparty defaultable account. From a financial perspective, this can be understood in terms of DVA and CVA. As the investor is in the money because $\hat{V}>0$, he would need to additionally replicate the DVA benefit $L^I (1-\alpha) \hat{V}^+$ at his own default time. By contrast, he does not need to replicate any CVA loss at the counterparty default time because $L^I (1-\alpha) \hat{V}^- = 0$.

\begin{figure}
	\includegraphics[width=0.49\textwidth]{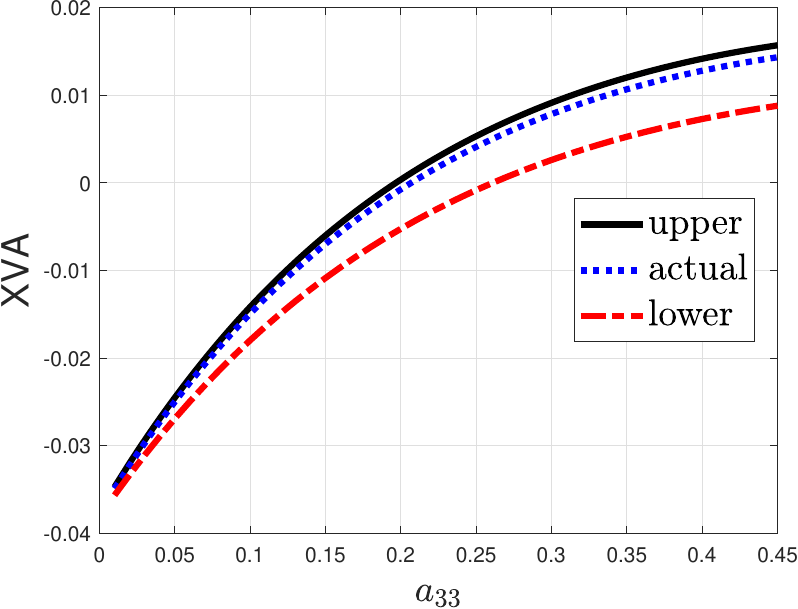}
	\includegraphics[width=0.49\textwidth]{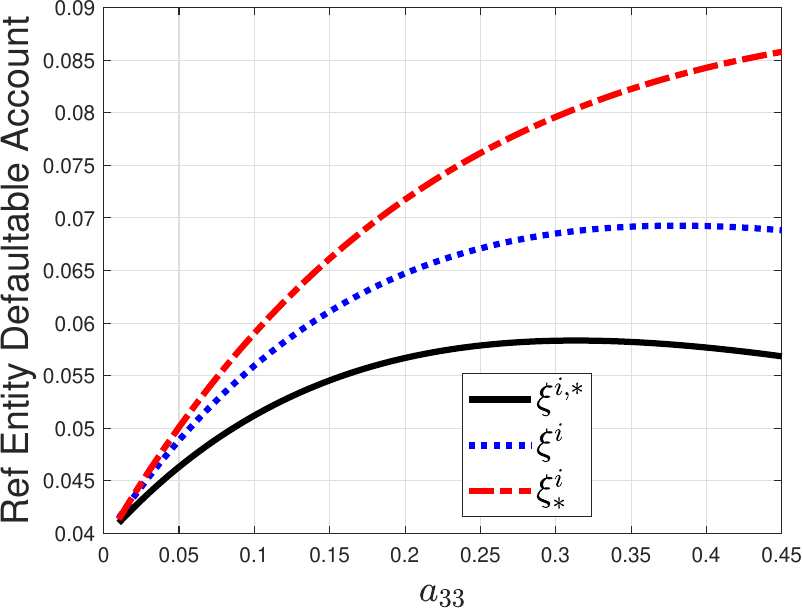}
	\includegraphics[width=0.49\textwidth]{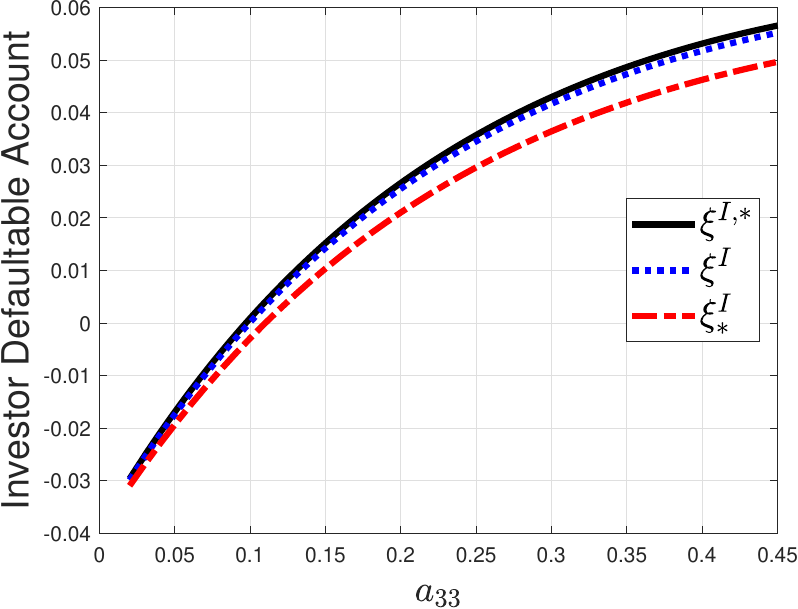}
	\includegraphics[width=0.49\textwidth]{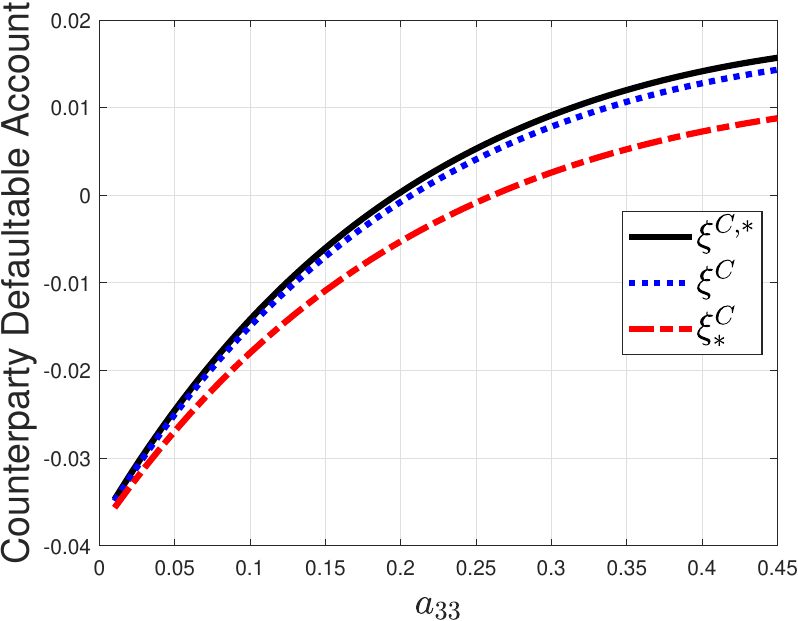}
	\includegraphics[width=0.49\textwidth]{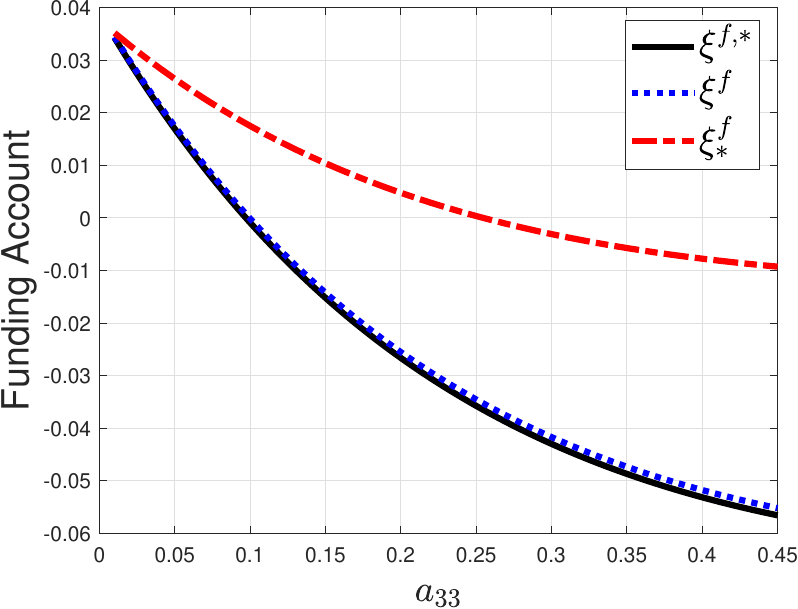}
	\includegraphics[width=0.49\textwidth]{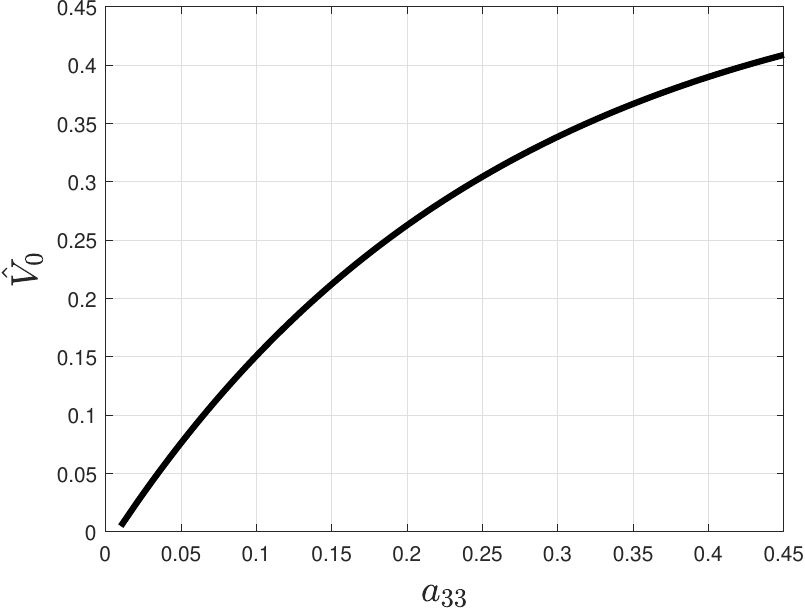}
	\caption{Top Left panel: XVA. Top Right panel: Value of a reference entity defaultable account. Medium Left panel: Value of the investor defaultable account. Medium Right panel: Value of the counterparty defaultable account. Bottom Left panel: Value of the funding account. Bottom Right Panel: Value of $\hat{V}_0$.}
	\label{fig:a33}
\end{figure}

The top left graph of Figure~\ref{fig:a33} highlights the prominent role played by default contagion. As $a_{33}$ increases, the default intensity of all reference entities in the portfolio increases. This in turn has an {\it indirect} effect on the default intensity of the investor and of the counterparty (through the coefficients $a_{13}$ and $a_{23}$ respectively), because both intensities jump upward if any of the five reference entities in the portfolio defaults. This indirect effect on the counterparty default intensity (magnified by a factor equal to five, i.e., equal to the number of entities in the portfolio) is higher than the direct effect resulting from increasing $a_{23}$; compare the top left graph of Figure~\ref{fig:a33} with the corresponding graph in Figure~\ref{fig:a23}. Because of this amplification effect created by the default contagion, the replicating strategy associated with the reference entity defaultable account has a more concave, rather than linear, dependence on $a_{23}$. This may be understood as follows:
{while initially, as $a_{33}$ increases, the XVA of a portfolio with five reference entities increases faster than the XVA of a portfolio with four reference
entities, eventually, as $a_{33}$ becomes high enough, the contagion among the reference entities in the portfolio is much higher than the contagion effect
imposed by the default of reference entities on the investor or her counterparty. This in turn means that if
one reference entity were to default, other reference entities will likely default shortly after (default clustering is strong).
Thus, the XVA will be approximately the same regardless the number of reference entities in the CDS portfolio. This induces
a decrease in the number of shares of reference entities account.}

\subsubsection{Idiosyncratic component of portfolio credit risk}
As $a_{30}$ increases, the idiosyncratic default risk of each reference entity gets higher. This increases the variance in the number of defaulting entities, and thus results in a larger uncertainty on the CDS payoff. As a result, the difference between lower and upper XVA increases, as confirmed from the top left panel of Figure~\ref{fig:a30}. A higher value of $a_{30}$ implies a higher portfolio credit risk, and thus a larger value of $\hat{V}_0$. However, $\hat{V}_0$ is less sensitive to changes in $a_{30}$ than to changes in $a_{33}$. In the latter case, there is an amplification effect due to increased contagion. 
This in turn implies that investor's defaultable account is less sensitive to $a_{30}$ than to $a_{33}$. A direct comparison of the investor account shares in the graphs of Figures~\ref{fig:a33} and~\ref{fig:a30} visually confirms this statement.

\begin{figure}
	\includegraphics[width=0.49\textwidth]{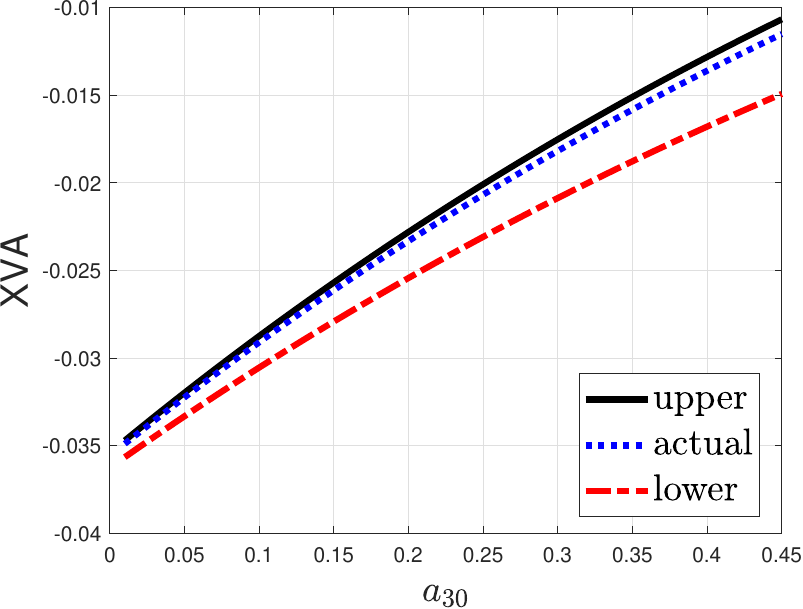}
	\includegraphics[width=0.49\textwidth]{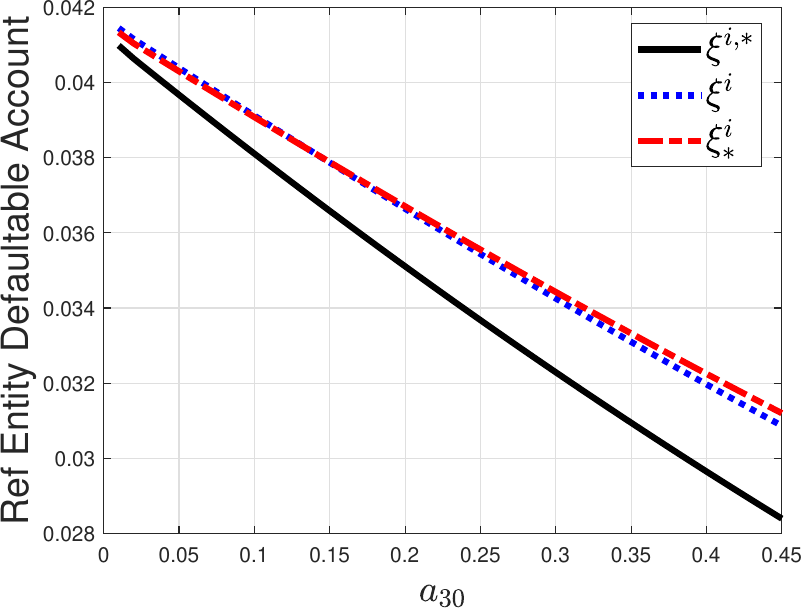}
	\includegraphics[width=0.49\textwidth]{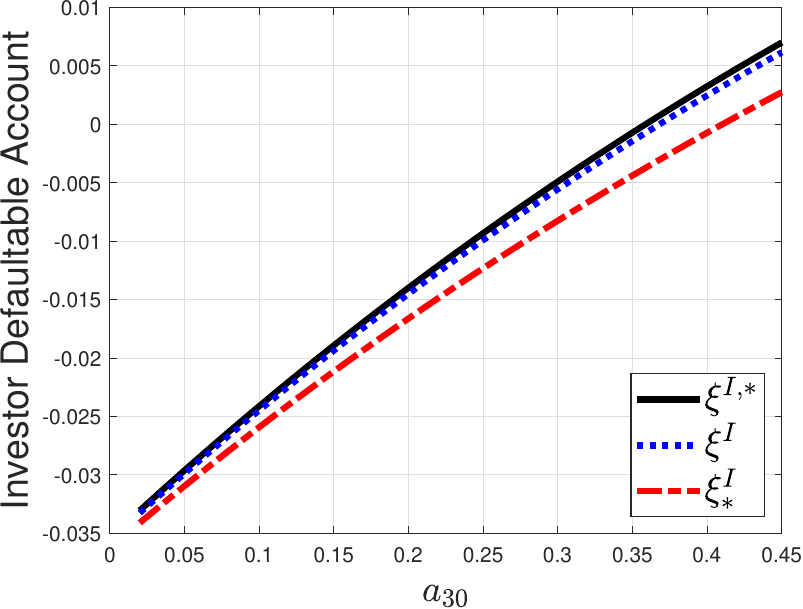}
	\includegraphics[width=0.49\textwidth]{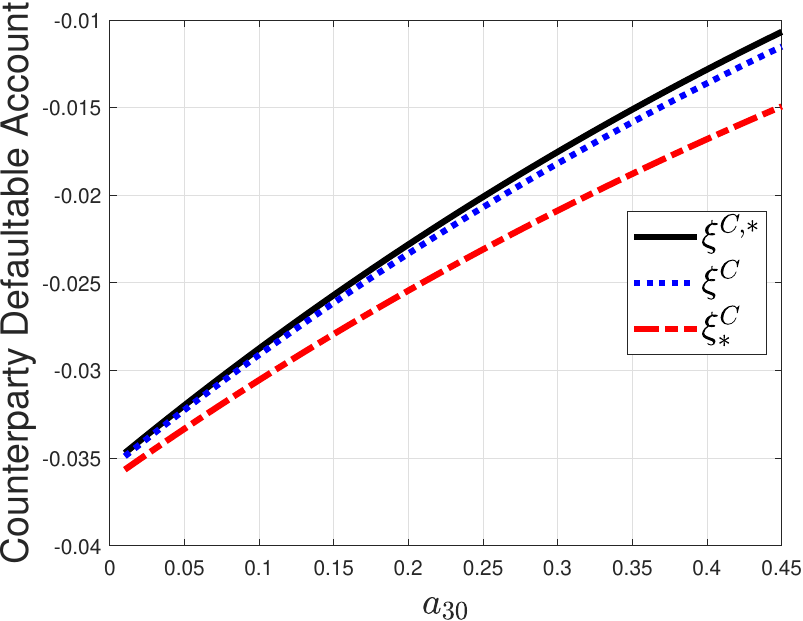}
	\includegraphics[width=0.49\textwidth]{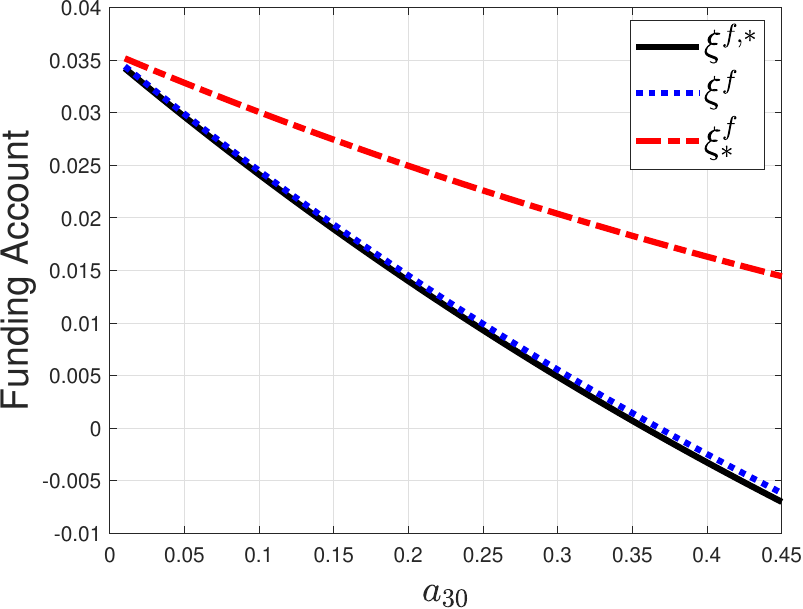}
	\includegraphics[width=0.49\textwidth]{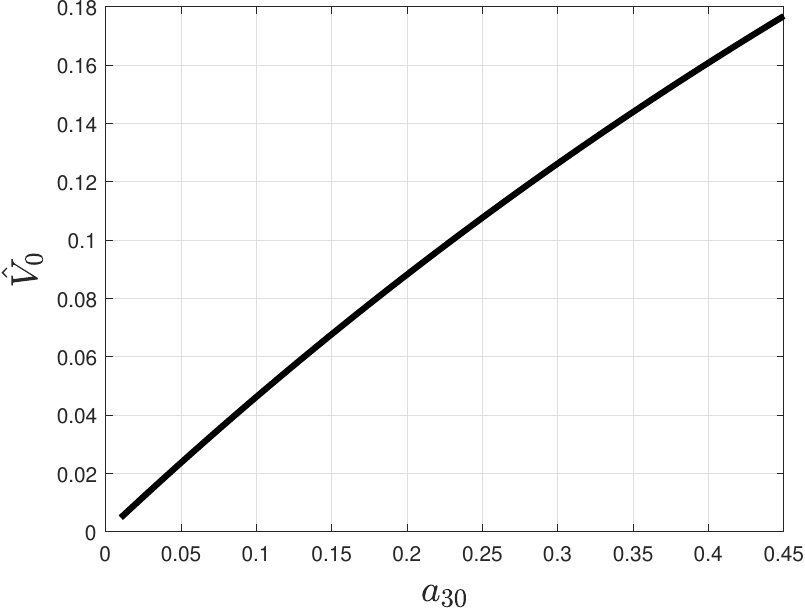}
	\caption{Top Left panel: XVA. Top Right panel: Value of a reference entity defaultable account. Medium Left panel: Value of the investor defaultable account. Medium Right panel: Value of the counterparty defaultable account. Bottom Left panel: Value of the funding account. Bottom Right Panel: Value of $\hat{V}_0$.}
	\label{fig:a30}
\end{figure}

\section{Conclusions}\label{sec:conclusions}
We have developed a framework to calculate the robust XVA of a credit default swap portfolio. We have considered the situation where the trader faces uncertainty on the return rate of the money market account associated with the counterparty. The credit default swap portfolio is replicated by the investor using defaultable accounts associated with the same entities referencing the single name credit default swap contracts in the portfolio. By constraining the return rate of the counterparty account to lie within an uncertainty interval, we have derived lower and upper bounds for the XVA. Our analysis highlights a nontrivial interaction between the value process of the trade that accounts for all financing costs, and the closeout process that depends only on the clean price of the transaction. The latter is obtained by pricing the cash flow of the trade, ignoring all other costs involved.
Our comparative statics analysis highlights the nontrivial role played by credit contagion on XVA and on the corresponding replication strategies.
Higher portfolio credit risk increases the value of the underlying portfolio for the CDS payer, and results in a larger number of reference entity account
shares. If the counterparty is more sensitive to default events of the underlying portfolio, XVA is lower because the transaction terminates earlier and is
less costly to implement.

Our framework constitutes a first step towards understanding the impact of model uncertainty on XVA. In a future continuation of the study,
we would like to explore the impact of default contagion on gap risk, which may build up during the margin period of risk. 
Prices of credit default swaps are especially affected by gap risk, because unlike other
swap contracts whose payoff is not credit sensitive (e.g. interest rate swaps), the mark-to-market value of a CDS jumps at the default time of the investor or her
counterparty. 
The calculation of initial margins should then account for such a risk. The development of an initial margins formula that takes into account contagion effects is, however, far from trivial because of the complex dependence structure of  defaults in the underlying portfolio. We also plan to generalize our framework to deal with uncertainty in the portfolio credit risk. While individual firms' default probabilities may be estimated from single name CDS spreads,
default correlation is hardest to estimate and subject to model risk. We leave all these questions for future research.

\section*{Acknowledgment}

The authors are grateful to two anonymous referees for valuable comments and suggestions, which contributed to improve and enrich the paper.


\appendix
\section{Construction of $\mathbb{F}$ Filtration  and Default Intensity Processes} \label{sec:defintconstr}
We use the following stepwise  procedure: Assume $h^{i,\Px, 0}_t \in \mathcal{F}^0 \otimes \mathcal{B}([0,t))$ and define $\tau^{i,0} := \sup\bigl\{ t\ge0 \colon \int_0^t h^{i,\Px, 0}_sds > \mathcal E^i\bigr\}$. Then, we can define $\mathcal F_t^1 := \sigma\bigl( H^j_u; u \leq t \wedge \tau^0_{(1)} \colon j\in\{1, .., N\} \bigr)$, where $\tau^0_{(1)}$ is the time of the first default (the first order statistics). For $k \geq 1$, choose $\tilde{h}^{i,\Px, k}_t \in \mathcal{F}_t^k \otimes \mathcal{B}([0,t))$,  and define recursively $h^{i,\Px, k}_t := h^{i,\Px, k-1}_t {\bf 1}_{[0,\tau^{k-1}_{(k)})}(t) + \tilde{h}^{i,\Px, k}_t {\bf 1}_{[\tau^{k-1}_{(k)},\infty)}(t)$, where we use the notation $\tau^k_{(i)}$ to denote the $i$-th order statistics of the $k$-level stopping time $\tau_i^k$, and ${\bf 1}_{A}(t)$ is the characteristic function which equals one if $t \in A$ and zero otherwise. Then, we define $\tau_i^k := \sup\bigl\{ t\ge0 \colon \int_0^t h^{i,\Px, k}_sds > \mathcal E^i\bigr\}$ as well as $\mathcal F_t^{k+1} := \sigma\bigl( H^j_u; u \leq t \wedge \tau^k_{(k+1)} \colon j\in\{1, .., N\} \bigr)$ for $k \in \{ 1, \ldots, N\}$. In this way, the intensity $h^{i,\Px,k}$ agrees with $h^{i,\Px,k-1}$ up to the $k$-th default, but accounts for information after the $k$-th default thereafter. Finally, we define the full filtration $\mathbb{F}$ as $\mathbb{F} = \bigl(\mathcal{F}_t^{N+1}\bigr)_{t\geq 0}$.}

\section{Proofs of lemmas and propositions}
\begin{proof} [{\bf Proof of Eq.~\eqref{eq:creditswap}.}]
First, observe that the linear BSDE~\eqref{eq:BSDE-hatV} admits the solution given by
\begin{align}
	\hat{V}_t = \hat{C}^1_t &= -\Exx^{\Qxx} \biggl[ \int_t^{\tau^1 \wedge T} e^{-r_D(u-t)}\eta^1 \, du - L^1 e^{-r_D(\tau^1-t)} \ind_{\tau^1 \leq T} \, \bigg\vert \, \mathcal{F}_t\biggr] \ind_{\{t \leq \tau^1\}}. \nonumber\\
\end{align}
Moreover, as the default distribution is characterized by $\Qxx[\tau^1 > u] = e^{-\int_t^u h^{1,\Qxx}_sds}$, on the same event $\{t \leq \tau^1\}$ we have that
\begin{align}
	\hat{V}_t &=- \Exx^{\Qxx}\biggl[ \int_t^{ T} \ind_{\{u \ge \tau^1\}}  e^{-r_D(u-t)}\eta^1 \, du - \int_t^T L^1 e^{-r_D(u-t)} h^{1,\Qxx}_u e^{-\int_t^u h^{1,\Qxx}_sds} du \, \bigg\vert \, \mathcal{F}_t\biggr]\ind_{\{t \leq \tau^1\}} \nonumber\\
	&= -\Exx^{\Qxx}\biggl[ \int_t^{ T}  e^{-\int_t^u (h^{1,\Qxx}_s  +r_D) ds} \eta^1 \, du - \int_t^T L^1 h^{1,\Qxx}_u e^{-\int_t^u (h^{1,\Qxx}_s + r_D) ds} du \, \bigg\vert \, \mathcal{F}_t\biggr]\ind_{\{t \leq \tau^1\}}.
\end{align}
\end{proof}

\begin{proof} [{\bf Proof of Proposition \ref{prop:arb-market}.}]

	To facilitate the no-arbitrage argument, we will express the wealth process under a suitable measure $\tilde{\Px}$ specified via the stochastic exponential
	\[
	\frac{d\tilde{\Px}}{d\Px} = \prod_{i \in \{1, \ldots, N, I, C\}} \Biggl(\frac{(\mu^i - \rfp)(\tau \wedge \tau^{(N)})}{\int_0^{\tau \wedge \tau^{(N)}} h^{i,\Px}_s \, ds}\Biggr)^{H^i_\tau} \exp{\biggl(\int_0^{\tau \wedge \tau^{(N)}}(\rfp-\mu^i + h^{i,\Px}_s)\, ds}\biggr).
	\]
	By Assumption \ref{ass:nec+suff} this change of measure is well defined.
	Moreover, while the measure $\tilde{\Px}$ is unknown to the investor, there is no issue with using it from an abstract point of view to rule out arbitrage. By Girsanov's theorem, the dynamics of the risky assets are given by
	\begin{align}
	dB_t^i & = \rfp B_t^i \, dt - B_{t-}^i d\varpi_t^{i,\tilde{\Px}}
	\end{align}
	for $i \in \{1, \ldots, N,I,C\}$ where $\varpi^{i,\tilde{\Px}}:=(\varpi_t^{i,\tilde{\Px}}; \;0 \leq t \leq \tau)$ are $(\mathbb{G},\tilde{\Px})$-martingales. The $\rfp$ discounted assets $\tilde{P}_t^i := e^{-\rfp t} B_t^C$ are thus $(\mathbb{G},\tilde{\Px})$-martingales. In particular, the default intensities under $\tilde{\Px}$ are given by $h^{i,\tilde{\Px}} = \mu^i-\rfp$, which are positive by Assumption \ref{ass:nec+suff}.
	
	Denote the wealth process associated with $(B_t^i; i \in\{1, \ldots, N,I,C\})_{t \geq 0}$ in the underlying market by $\check{V}_t$. Using the self-financing condition, its dynamics are given by
	\begin{align}
	\nonumber d\check{V}_t &= r_f \xi_t^f B_t^{r_f}  \, dt + \sum_{i \in\{1, \ldots, N,I,C\}} \rfp  \xi_t^i  \, dB_t^i \\
	&= \biggl(r_f \xi_t^f B_t^{r_f} + \sum_{i \in\{1, \ldots, N,I,C\}} \rfp  \xi_t^i B_t^i \biggr) \, dt - \sum_{i \in\{1, \ldots, N,I,C\}}  \xi_t^i B_t^i \, d\varpi_t^{i,\tilde{\Px}} .
	\end{align}
	Then we observe that $r_f \xi_t^f  \leq  \rfp \xi_t^f$ and thus
	\begin{align}
	\check{V}_\tau({\bm\varphi},x) - \check{V}_0({\bm \varphi},x) & = \int_0^\tau (r_f \xi_t^f B_t^{r_f} + \sum_{i \in\{1, \ldots, N,I,C\}} \rfp  \xi_t^i B_t^i  \bigr) \, dt - \sum_{i \in\{1, \ldots, N,I,C\}}  \int_0^\tau {\xi_t ^i} B_{t-}^i\, d \varpi_t^{i,\tilde{\Px}} \\
	&  \leq \int_0^\tau (\rfp \xi_t^f B_t^{r_f} + \sum_{i \in\{1, \ldots, N,I,C\}} \rfp  \xi_t^i B_t^i  \bigr) \, dt - \sum_{i \in\{1, \ldots, N,I,C\}}  \int_0^\tau {\xi_t ^i} B_{t-}^i\, d \varpi_t^{i,\tilde{\Px}}  \\
	& = \int_0^\tau \rfp \check{V}_t({\bm\varphi},x) \, dt + \sum_{i \in\{1, \ldots, N,I,C\}} \int_0^\tau\rfp  \xi_t^i  \, dB_t^i.
	\end{align}
	Therefore, it follows that
	\[
	e^{-\rfp\tau} \check{V}_{\tau}({\bm \varphi},x) -\check{V}_0({\bm \varphi},x) \leq \sum_{i \in\{1, \ldots, N,I,C\}}\int_0^\tau \rfp  \xi_t^i  \, d\tilde{P}_t^i.
	\]
	Note that the right hand side of the above inequality is a local martingale bounded from below (as the value process is bounded from below by the admissibility condition), and therefore is a supermartingale. Taking expectations, we conclude that
	\[
	\Exx^{\tilde{\Px}} \bigl[  e^{-\rfp\tau} \check{V}_\tau({\bm \varphi},x) - \check{V}_0({\bm \varphi},x) \bigr] \leq 0.
	\]
	Thus either $\tilde{\Px} \bigl[\check{V}_\tau({\bm \varphi},x) = e^{\rfp\tau} x \bigr] = 1$ or $\tilde{\Px} \bigl[ \check{V}_\tau({\bm \varphi},x) < e^{\rfp\tau} x \bigr] >0 $. As $\tilde{\Px}$ is equivalent to $\Px$, this shows that arbitrage opportunities for the investor are precluded in this model (he would receive $e^{\rfp \tau}x$ by lending the positive cash amount $x$ to the treasury desk at the rate $\rfp$). \hfill

\end{proof}

\begin{proof} [{\bf Proof of Proposition \ref{thm:reduction}.}]
	
	As the filtration $\mathbb{F}$ is trivial, the $\mathbb{F}$-BSDE is in fact an ODE. The existence and uniqueness to this ODE is shown in Proposition \ref{thm:existence}. The equivalence of the full $\mathbb{G}$-BSDEs and the reduced $\mathbb{F}$-BSDEs follows from the projection result \cite[Theorem 4.3]{CrepeyRed} as condition (A) in their paper is satisfied by our assumptions on the filtrations and their Condition (J) is also satisfied (as the terminal condition does not depend on $\tilde{Z}$, $\tilde{Z}^I$ and $\tilde{Z}^C$). Finally, by the martingale representation theorems with respect to $\mathbb{F}$ and $\mathbb{G}$ (see \cite[Section 5.2]{bielecki01}; their required assumptions are satisfied because our intensities are bounded), the solution of our BSDEs and those of the martingale problems considered in \cite{CrepeyRed} coincide.\hfill
	
\end{proof}

\begin{proof} [{\bf Proof of Proposition \ref{thm:existence}.}]
{The existence and uniqueness of a solution  to ODE \eqref{eq:hatV-ODE} on the time interval $[0,T]$ follows from the classical Picard-Lindel\"{o}f Theorem, together with Corollary II.3.2 of \cite{hartman}.

We now show existence and uniqueness of a solution to ODE  \eqref{eq:creditswap1}.} The existence again follows from the classical Picard-Lindel\"{o}f Theorem on every continuity interval of $h_i$'s. For simplicity of exposition we will assume that all  $h_i$'s are continuous on $[0,T]$. In case, of a discontinuity, the solution will not be differentiable there, but will remain continuous.

First note that $\check u$ is bounded. To see this, observe that $ \check g$ is Lipschitz in its second argument, and $\abs{\check{g}(t,0; \hat v_t, m_t )}\le K_0$ is uniformly bounded, by possibly increasing the constant $K_0$ if needed. It thus follows that
\begin{align}
\abs{\check{g}(t,\check u; \hat v_t, m_t)} \le \abs{\check{g}(t,\check u_t; \hat v_t, m_t) - \check{g}(t,0; \hat v_t, m_t)} + \abs{\check{g}(t,0; \hat v_t, m_t)} \le K_0  \abs{\check u} + K_0.
\end{align}
Then, assuming, $\check u$ is differentiable, we can employ Gronwall inequality and deduce that if
\begin{align}
\partial_t \check u_t &\le  K_0\check u_t + K_0,\\
\check u_t &=0,\nonumber
\end{align}
then $\check u_t \le K_1 \define K_0 T e^{K_0 T},$ for $t\in[0,T]$. Similar for the lower bound, if
\begin{align}
\partial_t \check u_t &\ge  -K_0\check u_t - K_0\\
\check u_0 &=0,\nonumber
\end{align}
from which it follows that $\check u_t \ge -K_1.$

It remains to check that the differentiability condition needed for the Gronwall inequality. By the classical Picard-Lindel\"{o}f Theorem, the solution to \cref{eq:reduced} exists on some interval $[0, T_0),\times[-K_1-1, K_1+1]$, that is, for $t\in[0,T_0)$ it holds that $\abs{\check u_t} \le K_1+1$, and it is unique there. Hence, we are guaranteed differentiability in this interval. Assume by contradiction that it cannot be extended (to the right) beyond $T_0$ and that $T_0<T$ (the same argument applies, if $T_0=T$, but the solution cannot be extended to the closed interval $[0,T]$). Then by Corollary II.3.2 of \cite{hartman} we have that $\lim\limits_{t\to T_0} \abs{\check u_t} = K_1+1.$ We now reach a contradiction, by employing Gronwall inequality argument above that shows that $\abs{\check u} \le K_1$.
\hfill

\end{proof}

\begin{proof} [{\bf Proof of Theorem~\ref{thm:comp}}]
First, note that similar to the proof of Proposition \ref{thm:existence}, the functions $\check{u}^{*}$ and $\check{u}_{*}$, defined as the solutions to the ODEs in~\eqref{eq:creditswap2} exist and are unique. This follows from the fact that the functions $g^*$ and $g_*$ are Lipschitz continuous in all arguments.

Assume, by contradiction, that there exists $T_0\le T$ for which
$\check{u}^{*}_{T_0}< \check u_{T_0},$ and set
$T_1 = \sup\left\{ t\le T_0 \, \vert \,\right.$ $\left.\check{u}^{*} (t)\ge \check u_t\right\}.$ We have that $T_1$ is well defined, and $T_1\ge0$, because $\check{u}^{*}_0= \check u_0 =0$ and $\check{u}^{*}_t< \check u_t$ for $t\in( T_1, T_0).$
Using the facts that {$\underline\mu^C>r_D$} and that
$(\mu^{C,\Qxx})^{*}(\hat{v}, m, \check u)(\tilde\theta^C(\hat v,m) - \check u) \ge {\mu^{C,\Qxx}_t} (\tilde\theta^C(\hat v,m) - \check u)$ for any $t\in[0,T]$, we have that
\begin{align}
&\partial_t \check{u}^{*}(T_1) = g^{*}(T_1,\check{u}^{*}; \hat v,m) \label{eq:PDE-compare-ineq} \\
 &\qquad=h^{I,\Qxx}\bigl(\tilde{\theta} ^{I}(\hat{v}_{T_1}, m_{T_1})-\check{u}^{*}_{T_1}\bigr)- h^{1,\Qxx} \check{u}^{*}_{T_1} \nonumber \\
&\qquad+{( (\mu^C)^{*}(\hat v_{T_1},m_{T_1}),\check u^{*}_{T_1}) - r_D)} \bigl(\tilde{\theta} _{C}(\hat{v}_{T_1}, m_{T_1})-\check{u}^{*}_{T_1}\bigr) \nonumber\\
& \qquad+ \tilde{f}\bigl(T_1, \check{u}^{*}_{T_1}, -\check{u}^{*}_{T_1}, \tilde{\theta}^{I}( \hat{v}_{T_1}, m_{T_1})-\check{u}^{*}_{T_1}, \tilde{\theta} ^{C}( \hat{v}_{T_1}, m_{T_1}) -\check{u}^{*};\hat{v}_{T_1}, m_{T_1}\bigr) \nonumber\\
&\quad\ge h^{I,\Qxx}\bigl(\tilde{\theta} ^{I}(\hat{v}_{T_1}, m_{T_1})-\check{u}^{*}_{T_1}\bigr) +h^{C,\Qxx}_{T_1} \bigl(\tilde{\theta} ^{C}(\hat{v}_{T_1}, m_{T_1})-\check{u}^{*}_{T_1}\bigr) - h^{1,\Qxx} \check{u}^{*}_{T_1} \nonumber\\
&\qquad+ \tilde{f}\bigl(T_1, \check{u}^{*}_{T_1}, -\check{u}^{*}_{T_1}, \tilde{\theta}^{I}( \hat{v}_{T_1}, m_{T_1})-\check{u}^{*}_{T_1}, \tilde{\theta}^{C}( \hat{v}_{T_1}, m_{T_1}) -\check{u}^{*};\hat{v}_{T_1}, m_{T_1}\bigr)\nonumber\\
&\quad= \check{g}(T_1,\check{u}; \hat v_{T_1}, m_{T_1}) dt = \partial_t \check{u}_{T_1}.\nonumber
\end{align}
It follows that there exists an $\epsilon>0$, such that  $\check{u}^{*}_t\ge \check u_t$ for $t\in[T_1, T_1+\epsilon].$ This contradicts the assumption, and proves the theorem. \hfill
\end{proof}

\begin{proof}[{\bf Proof of Theorem~\ref{thm:robust}}]
	The proof consists of two parts. In the first part, we verify that the expression of $\rXVA$ given in Eq.~\eqref{eq:rXVA} is the smallest super-replicating price. In the second part, we show that the strategy given in \eqref{eq:rob_strat1} is a super-replicating strategy. This requires showing that the implementation of this strategy does not require any cash infusion, and that the wealth process controlled by this strategy is exactly the $\rXVA$ process.

	{Define $\XVA_t^\mu := V_t^\mu - \hat{V}_t$ for $\mu \in \mathbb{F}$, $\underline{\mu}^C \leq \mu \leq \overline{\mu}^C$ and $\XVA_t^* := V_t^{(\mu^C)^*} - \hat{V}_t$, where we recall that $V_t^\mu$ is the valuation process of the replicating portfolio obtained by setting the counterparty {account} rate equal to $\mu$; see also the discussion before Eq.~\eqref{eq:rXVAdef}. First, note that $\XVA_t^* \geq \XVA_t^\mu$. This follows directly from Theorem \ref{thm:comp}, which provides a comparison result for the term $\check{U}_t$ appearing on the right hand side of the $\XVA$ expression~\eqref{eq:reduced_identity2}. Therein, it is enough to observe that the risk-neutral default intensity is just $\mu - r_D$, and notice that the two closeout terms are just independent of the rate $\mu$. Hence, the right hand side of Eq.~\eqref{eq:rXVA} is smaller than the left hand side: the latter represents a specific $\mathbb{F}$-predictable intensity process satisfying the boundary conditions, while the former is the supremum over all such intensity processes. This shows that the left side of Eq.~\eqref{eq:rXVA} is less or equal than the right side. To show the reverse inequality, i.e., that the left side of Eq.~\eqref{eq:rXVA} is greater or equal than the right side, we note that the family $(\XVA_t^\mu)_{\mu \in \mathbb{F}, \mu \in [\underline{\mu}^C, \overline{\mu}^C]}$ is directed upwards, i.e., for $\mu', \, \mu'' \in \mathbb{F}, \underline{\mu}^C \leq \mu', \mu'' \leq \overline{\mu}^C$, there exists a process $\mu''' \in \mathbb{F}, \underline{\mu}^C \leq \mu''' \leq \overline{\mu}^C$, such that $\XVA^{\mu'} \vee \XVA^{\mu''} \leq \XVA^{\mu'''}$. Indeed, setting $A := \{\omega \in \Omega \, : \, XVA^{\mu'}_t > XVA^{\mu''}_t\}$ we can define $\mu'''$ directly by setting  $\mu'''_s := \mu'_s \ind_A + \mu''_s \ind_{A^c}$ for $s \geq t$, and $\mu'''_s=0,$ for $0\le s<t$. Such a process is clearly $\mathcal{F}_s $-measurable because $A$ is $\mathcal{F}_t$-measurable. As the essential supremum of an upward directed set can be written as monotone limit (see \cite[Theorem A.32]{FoellmerSchied}), $\lim_{n \to \infty} \XVA^{\mu^{(n)}} = \rXVA$. Thus, as the countable union of nullsets is still a nullset we have that, for all $t$, $\rXVA_t$ is smaller or equal than the right side of Eq.~\eqref{eq:rXVA}.}
	
	Next, we provide the expressions for the super-replicating strategies. These are derived by replacing $\check{U}_t$ with $\check{U}_t^{*}$ into equations \eqref{eq:xi1}--\eqref{eq:xif}.
	{Using the replicating strategies defined in \eqref{eq:rob_strat1}, we obtain that,}  on the set $\{t<\tau\}$, the value of the replicating portfolio at time $t$ is
	\begin{align}
	\xi_u^{1,*} \, B_u^1+ \xi_t^{I,*} \, B_t^I + \xi_t^{C,*} \, B_t^C + \xi_t^{f,*} \, B_t^{r_f} -\psi_t^{m,*} \, B_t^{r_m} & =  \check{U}_t^{*}.
	\label{eq:super-hedge-value}
	\end{align}
	On the set $\{t<\tau\}$, the change in value of the portfolio is
	\begin{align}
	&\xi_u^{1,*}\,d B_u^1+ \xi_t^{I,*} \, d B_t^I + \xi_t^{C,*} \, d B_t^C + \xi_t^{f,*} \, d B_t^{r_f} -\psi_t^{m,*} \,d  B_t^{r_m} \label{eq:U*1}\\
	&= \Big( {\mu_1} \check{U}_t^*+ \left(L^I (\hat{V}_t - M_{t-})^{+} + \check{U}_t^*\right) {\mu^I} +  (\check{U}_t^{*}  -L^C (\hat{V}_t - M_{t-})^{-}) {\mu^C}  \nonumber\\
	& \ \ + r_m^{+}M_t^{+} + r_m^{-}M_t^{-}  + r_f^{+} (- 2 \check{U}_t^* + L^C (\hat{V}_t - M_{t-})^{-}- L^I  (\hat{V}_t -M_{t-})^{+} - M_{t-} )^{+} \nonumber\\
	& \ \ +  r_f^{-} (- 2 \check{U}_t^* + L^C (\hat{V}_t - M_{t-})^{-}- L^I  (\hat{V}_t -M_{t-})^{+} - M_{t-} )^{-} \Big)dt. \nonumber
	\end{align}
	
	{Additionally, for the replicating strategy \eqref{eq:xi1}--\eqref{eq:xif} to be self-financing, we need to include the cash flow}
	\begin{align}
	\Big(r_f(\xi_t^{f,*}) -r_D\Big)L^1dt. \label{eq:U*2}
	\end{align}
	{The presence of this cash flow is due to the fact that the clean valuation $\hat V$ is computed using the publicly available discount rate $r_D$, while the private valuation $V$ is obtained using the funding rate $r_f$. Such a cash flow needs to be accounted for in the implementation of the super-replicating strategy.} Taken together, equations~\eqref{eq:U*1} and \eqref{eq:U*2} describe the change in value of the super-replicating portfolio. Next, we compare it with the change in value of the robust XVA process given by 
	\begin{align}
	d \check{U}_t^{*} & = \Big((r_D+h^{I,\Qxx})\bigl(L^I (\hat{V}_t - M_{t-})^{+} + \check{U}_t^*\bigr) -(\mu^C)^{*}(\hat V_t ,M_t,\check U^{*}_t) \bigl(L^C (\hat{V}_t - M_{t-})^{-}- \check{U}_t^{*}  \bigr)  \nonumber\\
	&  + {\mu^1} \check{U}^{*}_t+ r_m^{+}M_t^{+} + r_m^{-}M_t^{-}+ r_f^{+}(- 2 \check{U}_t^* + L^C (\hat{V}_t - M_{t-})^{-}- L^I  (\hat{V}_t -M_{t-})^{+} +L^1- M_{t-})^{+}\nonumber\\
	&+ r_f^{-}(- 2 \check{U}_t^* + L^C (\hat{V}_t - M_{t-})^{-}- L^I  (\hat{V}_t -M_{t-})^{+} +L^1- M_{t-})^{-} -r_D L^1\Bigr) dt\label{eq:U*3}.
	\end{align}
	Using the fact that
	\begin{align}
	\left ({(\mu^C)^{*}(\hat V_t,M_t,\check U^{*}_t) - \mu^C } \right) \bigl(\tilde{\theta} ^{C}(\hat{V}_t, M_t)-\check{U}^{*}_t\bigr)\ge0,
	\end{align}
	{it follows that \eqref{eq:U*1} together with \eqref{eq:U*2} dominate \eqref{eq:U*3} from above, i.e.,}
	\begin{align}
	&\xi_u^{1,*}\,d B_u^1+ \xi_t^{I,*} \, d B_t^I + \xi_t^{C,*} \, d B_t^C + \xi_t^{f,*} \, d B_t^{r_f} -\psi_t^{m,*} \,d  B_t^{r_m} + \Big(r_f(\xi_t^{f,*}) -r_D\Big)L^1dt \ge d \check{U}_t^{*}.
	\label{eq:superhedge-ineq}
	\end{align}
	{The above computations were done on the set $\{t<\tau\}$. At the stopping time $\tau$ it can be easily checked that both $\check U^{}$ and the super-replicating portfolio are zero. Together with \eqref{eq:superhedge-ineq} and Theorem \ref{thm:comp}, it follows that the super-replicating portfolio dominates $\check{U}^{}$ for all times $t$.}

\end{proof}

\begin{proof} [{\bf Proof of Theorem~\ref{thm:comp-multi}}]
The proof of the super-replicating strategies is done by induction over $\abs{{\cal J}}$, i.e., the cardinality of $\mathcal{J}$. Without loss of generality, we may assume $\gamma\in\{1,-1\}$. {We present the proof for $\gamma=1$, as this is identical to the case $\gamma=-1$. For notation convenience, we drop the superscript $\gamma$.}
If $\abs{{\cal J}}=N-1$, the thesis follows directly from Theorem \ref{thm:robust}.{ By induction over the cardinality of $\abs{{\cal J}}$, assume that the result holds in case of when the entities in the set ${\cal J}$ have not defaulted yet, with $\abs{{\cal J}}=n+1\ge1$. Next, we prove the result for the case when the set ${\cal J}$ of entities that have not defaulted yet has cardinality $n$. Fix such a set ${\cal J}$ for which $\abs{{\cal J}}=n$.} Assume, by contradiction, that there exists $T_0\le T$ for which $\check{u}^{{\cal J},*}_{T_0}< \check u_{T_0}$, and set
$T_1 = \sup\left\{ t\le T_0 \vert \check{u}^{{\cal J},*} _t\ge \check u_t\right\}.$ Then, $T_1$ is well defined, and $T_1\ge0$ since $\check{u}^{{\cal J},*}_0= \check u_0 =0$ and $\check{u}^{{\cal J},*}_t< \check u_t$ for $t\in( T_1, T_0).$
Denote $Z^{f,{\cal J},*} =  \sum_{k \notin \cal{J}}  \check{u}^{\{k\}\cup \cal{J},*}-(N-\abs{{\cal J}}+1) \check{u}^{{\cal J},*}_{T_1} + \tilde{\theta}^{I}( \hat{v}_{T_1}, m_{T_1})  + \tilde{\theta} ^{C}( \hat{v}_{T_1},m_{T_1}) +   \sum_{k \notin \cal{J}}  L^k- M_t$, and similarly,
$Z^{f,{\cal J}} =  \sum_{k \notin \cal{J}}  \check{u}^{\{k\}\cup \cal{J}}-(N-\abs{{\cal J}}+1) \check{u}^{{\cal J},*}_{T_1} + \tilde{\theta}^{I}( \hat{v}_{T_1}, m_{T_1})  + \tilde{\theta}^{C}( \hat{v}_{T_1},m_{T_1}) +   \sum_{k \notin \cal{J}}  L^k- M_t$.

Using the facts that {$\underline\mu^C>r_D$} and that  $(\mu^{C,\Qxx})^{*}(\hat{v}, m, \check u)(\tilde\theta^C(\hat v,m) - \check u) \ge \mu^{C,\Qxx}_t (\tilde\theta_C(\hat v,m) - \check u)$ for any $t\in[0,T]$ we have that
\begin{align}
& \partial_t \check u^{ {\cal J} ,*} = g^{*}\Bigl(T_1,\check u^{ {\cal J} ,*},  \sum_{k \notin \cal{J}}  \check{u}^{\{k\}\cup \cal{J},*},  \sum_{k \notin \cal{J}} h^{k,\Qxx}  \check{u}^{\{k\}\cup \cal{J},*}; \hat v^{{\cal J}}, m, {\cal J}\Bigr) \label{eq:PDE-compare-ineq-multi}\\
&\quad=h^{I,\Qxx}_{T_1}\bigl(\tilde{\theta} ^{I}(\hat{v}_{T_1}, m_{T_1})-\check{u}^{{\cal J},*}_{T_1}\bigr) + \sum_{i\notin {\cal J}}h^{i,\Qxx} \left(\check{u}^{ \{i\}\cup \cal{J},*}_{T_1} - \check{u}^{ {\cal J} ,*}_{T_1}\right)\nonumber\\
&\qquad+{( (\mu_C)^{*}(\hat v_{T_1},m_{T_1},\check u^{ {\cal J} ,*}_{T_1}) - r_D)}  \bigl(\tilde{\theta}^{C}(\hat{v}_{T_1}, m_{T_1})-\check{u}^{{\cal J},*}_{T_1}\bigr) \nonumber\\
& \qquad+ \tilde{f}\bigl(T_1, \check{u}^{{\cal J},*}_{T_1},  \sum_{k \notin \cal{J}}  \check{u}^{\{k\}\cup \cal{J},*}-(N-\abs{{\cal J}}) \check{u}^{{\cal J},*}_{T_1}, \tilde{\theta}^{I}( \hat{v}_{T_1}, m_{T_1})-\check{u}^{{\cal J},*}_{T_1},  \nonumber\\
&\qquad\qquad \tilde{\theta}^{C}( \hat{v}_{T_1},m_{T_1}) -\check{u}^{{\cal J},*};\hat{v}_{T_1}, m_{T_1},{\cal J}\bigr) \nonumber \\ 
&\quad=h^{I,\Qxx}_{T_1}\bigl(\tilde{\theta} ^{I}(\hat{v}_{T_1}, m_{T_1})-\check{u}^{{\cal J},*}_{T_1}\bigr) - \sum_{i\notin {\cal J}}h^{i,\Qxx} \check{u}^{{\cal J} ,*}_{T_1}\nonumber\\
&\qquad+{( (\mu^C)^{*}(\hat v_{T_1},m_{T_1},\check u^{{\cal J} ,*}_{T_1}) - r_D)}  \bigl(\tilde{\theta}^{C}(\hat{v}_{T_1}, m_{T_1})-\check{u}^{{\cal J},*}_{T_1}\bigr) -r_f(Z^{f,{\cal J},*})Z^{f,{\cal J},*}\nonumber\\
&\qquad+\sum_{k \notin \cal{J}}\Big ( h^{i,\Qxx} + r_D  \Big) \check{u}^{\{i\}\cup \cal{J},*}_{T_1}\nonumber\\
& \qquad+ r_D \left(-(N-\abs{{\cal J}} +2) \check{u}^{{\cal J},*}_{T_1} + \tilde{\theta}^{I}( \hat{v}_{T_1}, m_{T_1})+\tilde{\theta}^{C}( \hat{v}_{T_1},m_{T_1}) \right) - r_m(M_t) M_t +r_D\sum_{k \notin \cal{J}}   L^k \nonumber\\
&\quad\ge h^{I,\Qxx}_{T_1}\bigl(\tilde{\theta} ^{I}(\hat{v}_{T_1}, m_{T_1})-\check{u}^{{\cal J},*}_{T_1}\bigr) - \sum_{i\notin {\cal J}}h^{i,\Qxx} \check{u}^{ {\cal J} ,*}_{T_1}\nonumber\\
&\qquad+{( (\mu^C)^{*}(\hat v_{T_1},m_{T_1},\check u^{ {\cal J} ,*}_{T_1}) - r_D)} \bigl(\tilde{\theta}^{C}(\hat{v}_{T_1}, m_{T_1})-\check{u}^{{\cal J},*}_{T_1}\bigr) -r_f(Z^{f,{\cal J}})Z^{f,{\cal J},*}\nonumber\\
&\qquad+\sum_{k \notin \cal{J}}\Big ( h^{i,\Qxx} + r_D  \Big) \check{u}^{\{i\}\cup \cal{J},*}_{T_1}\nonumber\\
& \qquad+ r_D \left(-(N-\abs{{\cal J}} +2) \check{u}^{{\cal J},*}_{T_1} + \tilde{\theta}^{I}( \hat{v}_{T_1}, m_{T_1})+\tilde{\theta}^{C}( \hat{v}_{T_1},m_{T_1}) \right) - r_m(M_t) M_t +r_D\sum_{k\notin \cal{J}}   L^k \nonumber\\
&\quad=h^{I,\Qxx}_{T_1}\bigl(\tilde{\theta}^{I}(\hat{v}_{T_1}, m_{T_1})-\check{u}^{{\cal J},*}_{T_1}\bigr) - \sum_{i\notin {\cal J}}h^{i,\Qxx} \check{u}^{{\cal J} ,*}_{T_1}\nonumber\\
&\qquad+{( (\mu^C)^{*}(\hat v_{T_1},m_{T_1},\check u^{ {\cal J} ,*}_{T_1}) - r_D)}  \bigl(\tilde{\theta}^{C}(\hat{v}_{T_1}, m_{T_1})-\check{u}^{{\cal J},*}_{T_1}\bigr) \nonumber\\
&\qquad-r_f(Z^{f,{\cal J}})\Bigl(-(N-\abs{{\cal J}}+1) \check{u}^{{\cal J},*}_{T_1}+ \tilde{\theta}^{I}( \hat{v}_{T_1}, m_{T_1}) + \tilde{\theta}^{C}( \hat{v}_{T_1},m_{T_1})  +  \sum_{k \notin \cal{J}}   L^k- M_t \Bigr)
\end{align}
\begin{align}
& \qquad+\sum_{k \notin \cal{J}}\Big ( h^{i,\Qxx} + r_D - r_f(Z^{f,{\cal J}}) \Big) \check{u}^{\{i\}\cup \cal{J},*}_{T_1}\nonumber\\
& \qquad+ r_D \left(-(N-\abs{{\cal J}} +2) \check{u}^{{\cal J},*}_{T_1} + \tilde{\theta}^{I}( \hat{v}_{T_1}, m_{T_1})+\tilde{\theta}^{C}( \hat{v}_{T_1},m_{T_1}) \right) - r_m(M_t) M_t +r_D\sum_{k \notin \cal{J}}   L^k \nonumber\\
&\quad\ge h^{I,\Qxx}_{T_1}\bigl(\tilde{\theta} ^{I}(\hat{v}_{T_1}, m_{T_1})-\check{u}^{{\cal J},*}_{T_1}\bigr) - \sum_{i\notin {\cal J}}h^{i,\Qxx} \check{u}^{ {\cal J} ,*}_{T_1}\nonumber\\
&\qquad+h^{C,\Qxx} \bigl(\tilde{\theta} ^{C}(\hat{v}_{T_1}, m_{T_1})-\check{u}^{{\cal J},*}_{T_1}\bigr) \nonumber\\
&\qquad-r_f(Z^{f,{\cal J}})\Bigl(-(N-\abs{{\cal J}}+1) \check{u}^{{\cal J},*}_{T_1}+ \tilde{\theta}^{I}( \hat{v}_{T_1}, m_{T_1}) + \tilde{\theta}^{C}( \hat{v}_{T_1},m_{T_1})  +  \sum_{k \notin \cal{J}}   L^k- M_t \Bigr)\nonumber\\
& \qquad+\sum_{k \notin \cal{J}}\Big ( h^{i,\Qxx} + r_D - r_f(Z^{f,{\cal J}}) \Big) \check{u}^{\{i\}\cup \cal{J}}_{T_1}\nonumber\\
& \qquad+ r_D \left(-(N-\abs{{\cal J}} +2) \check{u}^{{\cal J},*}_{T_1} + \tilde{\theta}^{I}( \hat{v}_{T_1}, m_{T_1})+\tilde{\theta}^{C}( \hat{v}_{T_1},m_{T_1}) \right) - r_m(M_t) M_t +r_D\sum_{k \notin \cal{J}}   L^k \nonumber\\
&\quad= \check{g}(T_1,\check{u}; \hat v_{T_1}, m_{T_1}) dt = \partial_t \check{u}_{T_1}.\nonumber
\end{align}
The first inequality above follows from the following inequality $r_f(Z^{f,{\cal J}})Z^{f,{\cal J,*}} \ge r_f(Z^{f,{\cal J},*})Z^{f,{\cal J,*}}$. To deduce the second inequality above, we have used that $ \rfm, \rfp< \min_{i \in \{1,\ldots N, I\}} \mu^i{\wedge\underline\mu^C}$, and the induction hypothesis for sets of cardinality $n+1$.
This implies that there exists a constant $\epsilon>0$, such that  $\check{u}^{*} (t)\ge \check u(t)$ for $t\in[T_1, T_1+\epsilon].$ This leads to a contradiction, and hence the theorem is proven.

\end{proof}

\begin{proof}[{\bf Proof of Theorem~\ref{thm:muli}}]
	The proof that $\rXVA$ dominates $\XVA^\mu$ for any  $\underline{\mu}^C \leq \mu \leq \overline{\mu}^C$ is done in the same way as in the proof of Theorem \ref{thm:robust}. 
	To prove that the super-replicating strategy is given by equations~\eqref{eq:rob_strat2-mult} and~\eqref{eq:rob_strat3-mult}, fix ${\cal J}\subset \{1, ..., N\}$. Then the value of the portfolio associated with this strategy
	at time $t$ on the set $\{\tau^{{\cal J}}  \wedge \tau^C \wedge \tau^I \wedge T < t <\min_{k \notin \cal{J}} \tau^{k,{\cal J}} \wedge \tau^C \wedge \tau^I \wedge T\}$ is
	\begin{align}
	&\sum_{i\notin {\cal J}}\xi_u^{i,*} \, B_u^i+ \xi_t^{I,*} \, B_t^I + \xi_t^{C,*} \, B_t^C + \xi_t^{f,*} \, B_t^{r_f} -\psi_t^{m,*} \, B_t^{r_m} \nonumber\\
	&= \sum_{i\notin {\cal J}}\xi_u^{i,\cal{J},*} \, B_u^i+ \xi_t^{I,{\cal J},*} \, B_t^I + \xi_t^{C,{\cal J},*} \, B_t^C + \xi_t^{f,{\cal J},*} \, B_t^{r_f} -\psi_t^{m,*} \, B_t^{r_m} =  \check{U}_t^{*} = \check{U}_t^{{\cal J},*}.\nonumber
	\end{align}
	The change in the value of the portfolio on this set is
	\begin{align}
	&\sum_{i\notin {\cal J}}\xi_u^{i,*}\,d B_u^i+ \xi_t^{I,{\cal J},*} \, d B_t^I + \xi_t^{C,{\cal J},*} \, d B_t^C + \xi_t^{f,{\cal J},*} \, d B_t^{r_f} -\psi_t^{m,*} \,d  B_t^{r_m} \label{eq:U*1-mult}\\
	&= \Big( \sum_{i\notin {\cal J}} (r_D+h_i^{\Qxx})\left(\check{U}_t^{{\cal J},*} -\check{U}_t^{\{i\}\cup \cal{J},*}\right)+ \left(L^I (\hat{V}_t - M_{t-})^{+} + \check{U}_t^{{\cal J},*}\right) {\mu^I} \nonumber\\
	&+  ( -L^C (\hat{V}_t - M_{t-})^{-}+ \check{U}_t^{{\cal J},*} ) {\mu^C} + r_m(M_t) M_t  + r_f(\xi_t^{f,{\cal J},*} ) \xi_t^{f,{\cal J},*} B_t^{r_f}    \Big)dt. \nonumber
	\end{align}
	{Similarly to the case of a single name credit default swap, the super-replicating strategy needs to also include the cash flow}
	\begin{align}
	\Big(r_f(\xi_t^{f,{\cal J},*} ) -r_D\Big) \sum_{i\notin {\cal J}}L^idt, \label{eq:U*2-mult}
	\end{align}
	{due to the fact that $\hat V$ is obtained by discounting at the rate $r_D$, rather than $r_f$, and hence the loss given default rates $\sum_{i\notin {\cal J}}L^idt$ accrues interest at rate $r_D$.}
	
	The change in value of the super-replicating portfolio is obtained by using equations~\eqref{eq:U*1-mult} and \eqref{eq:U*2-mult}, and needs to be compared with the change in the valuation process, given by
	
	\begin{align}
	d \check{U}_t^{\cal J, *} & = \Big((r_D+h^{I,\Qxx}\bigl(L^I (\hat{V}_t - M_{t-})^{+} + \check{U}_t^{ {\cal J},*}\bigr) - {(\mu^C)^{*}}(\hat V_t ,M_t,\check U^{ {\cal J},*}_t) \bigl(L^C (\hat{V}_t - M_{t-})^{-}- \check{U}_t^{{\cal J},*}  \bigr)  \label{eq:U*3-tmp}\\
	&  + \sum_{k \notin \cal{J}}   (r_D+h^{k,\Qxx}) \check{U}^{\{i\}\cup \cal{J},*}_t+ r_m(M_t) M_t  + r_f(\xi_t^{f,{\cal J},*} ) \left(\xi_t^{f,{\cal J},*} B_t^{r_f} +\sum_{i\notin {\cal J}}L^i  \right) -r_D\sum_{i\notin {\cal J}}L^i\Big)dt\nonumber.
	\end{align}
	It then follows that \eqref{eq:U*1-mult} together with \eqref{eq:U*2-mult} dominates \eqref{eq:U*3-tmp} from above because
	\begin{align}
	\left ( {(\mu^C)^{*}(\hat V_t,M_t,\check{U}_t^{ {\cal J}, *}) - \mu^C }\right) \bigl(\tilde{\theta} ^{C}(\hat{V}_t, M_t)-\check{U}_t^{{\cal J}, *}\bigr)\ge0.
	\end{align}
	To complete the proof, it is left to consider the set $\{t =\min_{k \notin \cal{J}} \tau^{k,{\cal J}} \wedge \tau^C \wedge \tau^I \wedge T\}$, i.e., when $t$ corresponds to a default time. Assume that the reference entity defaulting at $t$ is $k_0\notin {\cal J}$. Then, by the definition of super-replicating strategy in~\eqref{eq:rob_strat1-mult}, and specifically $\xi_t^{k_0, {\cal J},*}$, it follows that the value of the super-replicating portfolio drops from $\check U_t^{ {\cal J} ,*}$ to $\check U_t^{\{k_0\} \cup \cal{J},*}$. By the induction hypothesis, $\check U_t^{\{k_0\}\cup \cal{J},*} \ge\check U_t^{\{k_0\}\cup \cal{J}}$. Together with Theorem \ref{thm:comp-multi}, it follows that on the set $\{\tau^{{\cal J}} \wedge \tau^C \wedge \tau^I \wedge T < t \le\min_{k \notin \cal{J}} \tau^{k,{\cal J}} \wedge \tau^C \wedge \tau^I \wedge T\}$ the super-replicating portfolio dominates $\check{U}_t^{{\cal J}}$. By summing over the indicator sets as in \eqref{eq:rXVA-tmp}--\eqref{eq:rob_strat2-mult}, we get this dominance for all $0\le t\le \tau^C \wedge \tau^I \wedge T$.

\end{proof}

\end{document}